\documentclass[12pt]{article} 

\usepackage{natbib}

\RequirePackage[colorlinks,citecolor=blue,linkcolor=blue,urlcolor=blue,pagebackref]{hyperref}

\usepackage{amsmath,amsfonts,esint}
\usepackage{amsmath,amstext,rotating}
\usepackage{amsfonts,amssymb,graphics,xspace,endnotes}
\usepackage{lineno}
\usepackage{epsfig}
\usepackage{srcltx}
\usepackage{amsthm}
\usepackage{graphicx}
\usepackage{xcolor}	
\usepackage{verbatim}
\usepackage{natbib}
\usepackage{caption}
\usepackage{subcaption}
\usepackage{setspace}

\def \PP {\mathbb{P}}

\DeclareMathOperator*{\relint}{\mathrm{relint}}

\theoremstyle{plain}
\newtheorem{prop}{Proposition}

\newtheorem{assumption}{ASSUMPTION}
\newtheorem{theorem}{THEOREM}

\newtheorem{corollary}{COROLLARY}
\newtheorem{remark}{Remark}
\theoremstyle{remark}  
\newtheorem{definition}{DEFINITION}
\newtheorem{example}{EXAMPLE}

\usepackage{natbib}
\setcitestyle{sort&compress}

\AtBeginDocument{}

\hypersetup{
    colorlinks,
    linkcolor={blue!75!black},
    urlcolor={blue!75!black},
    citecolor={blue!75!black},
}

\begin{document} 

\title{Quantile Restrictions, Revealed Rankings, and the Limits of Multinomial Choice} 
\author{Tatiana Komarova\thanks{Faculty of Economics, University of Cambridge.  tk670@cam.ac.uk.}} 

\date{August 14, 2026}
\maketitle


\begin{abstract}
This paper analyzes when choice probabilities reveal rankings of deterministic utility indices in semiparametric discrete choice models. It begins with binary choice, where quantile thresholds guarantee ranking recovery, and shows that such thresholds can arise either from behavioral departures from utility maximization (e.g., limited attention) under exchangeable unobservables, or from non-exchangeable unobservables under standard utility maximization. These behavioral and distributional routes are then extended to multinomial choice. Under limited attention, balance restrictions on attention probabilities yield global linear ranking partitions which are robust to the distribution of unobservables and, given sufficiently rich joint variation in the differences of utility indices, are also necessary.  Absent the required attention restrictions, opposite rankings can produce overlapping probability images. Under non-exchangeable unobservables, a comparable distribution-uniform partition generally need not exist. \\
Holding the distribution fixed, however, ranking recovery remains possible via an injective  nonlinear map from normalized utility differences to choice probabilities under both behavioral and distributional extensions. Together, the results distinguish distribution-robust global ranking partitions from ranking recovery with a fixed distribution of unobservables  and clarify the limits of extending binary quantile restrictions to multinomial choice.

\vspace{0.05in}

\begin{description}
\item[Keywords:] Binary choice; Multinomial choice; Quantile restrictions; Probability simplex; Ranking; Ranking ambiguity; Consideration sets; Non-exchangeability
 
\end{description}
\end{abstract}
\setstretch{1.35}

\newpage

\section{Introduction}

Semiparametric discrete choice models often obtain identifying content from weak restrictions on the conditional distribution of unobservables rather than from a fully parametric specification. A classic example is the binary response model studied by \citet{Manski1985,Manski1988}, $Y=1(x\beta+\varepsilon\geq 0)$, together with the conditional median restriction $Med(\varepsilon| x)=0$. Under this restriction, observed choice probabilities reveal the sign of the deterministic utility index as in 
\begin{equation}
\label{medianECON}
P(Y=1| x)\geq (\leq)\frac12
\quad\Longleftrightarrow\quad
x\beta\geq (\leq)0.
\end{equation}
More generally, if
$Q_\tau(\varepsilon| x)=0$,
then
\begin{equation}
\label{quantilenECON}
P(Y=1| x)\geq (\leq)1-\tau
\quad\Longleftrightarrow\quad
x\beta\geq (\leq)0.
\end{equation}
The econometric content of this representation is well known. Geometrically,  in \eqref{quantilenECON} a single threshold $1-\tau$ divides the one-dimensional probability simplex  into two regions corresponding to the two possible rankings of the deterministic utilities. That is, once the location of $P(Y=1| x)$ relative to the threshold is known, the ranking is known as well.

This paper analyzes whether an analogous idea can be extended to multinomial choice. With $J>2$ alternatives, the vector of choice probabilities lies in a $(J-1)$-dimensional simplex and there is no longer a single probability threshold. I use the term \emph{``quantile'' multinomial choice} to signify the property that a specified probability domain 
can be partitioned into regions such that membership in a
region reveals a unique ranking of the deterministic utility indices. In the strongest results below this domain is the ambient probability simplex itself. In some weaker results, it is the common structural probability image generated by the fixed conditional distribution of unobservables. The ``quantile'' terminology is not meant to be associated with a conventional multivariate quantile notion.

A weaker property than a full ``quantile '' model is \emph{ranking recovery} when each attainable
probability vector has a unique ordinal interpretation, even when the attainable probability set covers only part of the ambient probability domain.

The extension from two to several alternatives is not automatic. With two alternatives, all relevant variation operates along a single comparison. With three or more alternatives, the probability of choosing $j$ rather than $k$ can also be affected by what happens to other alternatives. As illustrated in the paper, generally probability vectors generated by different rankings of the deterministic indices can  overlap. In other words, the same observed probability vector obtained under some $x$ and $\widetilde{x}$ may then be compatible with both $x_j\beta_j>x_k\beta_k$ and $\widetilde{x}_j\beta_j<\widetilde{x}_k\beta_k$. This failure of ordinal recovery can be understood as \emph{ranking ambiguity}. One of the central questions of this  paper is what restrictions on the decision making or on the distribution of utility unobservables prevent such ambiguity.

I approach this question by first returning to the binary model and asking what economic structures can generate the quantile representation \eqref{quantilenECON}. In the classical random utility interpretation underlying the median model, agents maximize latent utility and the two utility unobservables are conditionally exchangeable.\footnote{Original \citet{Manski1975} restriction was  conditional i.i.d.  but \citet{GoereeHoltPalfrey2005} and later \citet{Fox2007} note that conditional exchangeability suffices for the rank order  result. In binary choice, exchangeability also implies  a zero median of the error difference.} Most semiparametric work takes the median or quantile restriction as a primitive. I  look into rationalizing  quantile restrictions in discrete choice models along two different routes. Either of them can generate representation  \eqref{quantilenECON}.  A \emph{behavioral route}  maintains conditional exchangeability of the latent utility unobservables but allows decision making to differ from  utility maximization. The decision rule may reflect limited consideration, noisy perception, or other stochastic behavior. A \emph{distributional route} maintains standard utility maximization but allows the joint distribution of the utility unobservables to be non-exchangeable. 

I then carry these  two routes into multinomial choice. The natural benchmark is the classical random utility model of \citet{Manski1975}. Under utility maximization with conditionally i.i.d. unobservables and some further standard \citet{Manski1975} conditions,
\begin{equation} 
\label{Manskiranorder} P(Y=j|\mathbf x)\geq P(Y=k|\mathbf x)
\quad\Longleftrightarrow\quad
x_j\beta_j\geq x_k\beta_k,\footnote{Again, \citet{GoereeHoltPalfrey2005} and \citet{Fox2007} show that conditional exchangeability is enough to ensure this ranking relationship.}
\end{equation}
which I will refer to as Manski's rank order property. Pairwise equalities of choice probabilities,  therefore, describe  hyperplanes that split  the probability simplex into regions corresponding to unique complete rankings of the deterministic utility indices.  I use this classical case as the multinomial ``median'' benchmark and ask how much of its ranking geometry survives under limited attention and under non-exchangeable unobservables in latent utilities.

The paper makes three main substantive contributions and develops a
framework for distinguishing the robustness of the resulting ranking
claims. 

The first main contribution is to  give economic foundations for binary non-median (and median as a special case) probability thresholds. In line with the behavioral route mentioned above, 
I describe general stochastic decision rules that give \eqref{quantilenECON} under conditionally exchangeable unobservables and go beyond standard utility maximization. These  include  limited attention, noisy perception, and random choice, with exogenous consideration sets providing a particularly transparent foundation for both median and non-median cases. For the second distributional route, I show how non-median thresholds can instead arise from asymmetry in the joint distribution of latent utility unobservables with agents still maximizing utility. 

The second main contribution is to study ranking recovery and stronger ``quantile'' partitionings in multinomial models and to do so at different levels capturing different robustness properties. A \emph{Behavioral environment} level specifies a decision rule and an admissible class of conditional distributions of unobservables. A \emph{Structural environment} additionally fixes one such conditional distribution.\footnote{The main text also mentions an \emph{empirical specification} level, which also fixes the marginal distribution of the covariates, but does not study findings at this level as it requires a case-by-case analysis.}  In addition to these two levels,  the paper also studies which results apply for a fixed $\boldsymbol\beta$ or uniformly over all admissible $\boldsymbol\beta$. 

The sharpest results arise in the behavioral route when looking at the exogenous consideration sets. Limited attention changes the geometry of the probability simplex since choice probabilities now combine utility comparisons across different menus. Nevertheless, particular restrictions on attention  probabilities can restore linear pairwise ranking rules, generalizing \citet{Manski1975} partitioning. These are restrictive balancing conditions  on attention probabilities and they become increasingly more restrictive as $J$ increases. When these conditions are satisfied, the conclusions are strong as the partitioning is not only (i) linear, (ii)  fully determined by attention probabilities and (iii) applies to any absolutely continuous conditionally exchangeable distribution of unobservables, but is also (iv) uniform over the admissible index parameters $\boldsymbol{\beta}$. Thus, this is the case when  ``quantile'' multinomial choice is obtained, which in particular  implies the ranking recovery property for attainable probabilities.  But when the balancing conditions on attention probabilities fail, the probability images of opposite pairwise rankings can overlap, producing ranking ambiguity. To my knowledge, this paper provides the first characterization of the exogenous attention structures that support distribution-robust linear ordinal recovery that includes asymmetric structures generating nonstandard probability boundaries,\footnote{That is, different from \citet{Manski1975} boundaries $\{p_k=p_{\ell}\}$.} together with corresponding necessity and failure results.

The distributional route gives   a different conclusion. If the
conditional distribution of the utility unobservables is allowed to
vary over an admissible class of absolutely continuous, potentially
non-exchangeable distributions, there is in general no probability
partition that is common across that class.

The third main contribution is to show that ranking recovery and ``quantile'' partitionings can be restored under weaker robustness requirements of the \emph{Structural environment} level. In both the limited attention model and the model with non-exchangeable unobservables, a common location-scale structure induces a map from normalized deterministic utility index differences into choice probabilities. Under the conditions developed in the paper, this map is injective, so the probabilities again identify the ranking of the deterministic utility indices, ensuring ranking recovery property on the attainable set of probability vectors. The map is generally nonlinear. Under limited attention it depends on the attention probabilities, while under utility maximization with non-exchangeable unobservables it depends on the fixed distribution of the unobservables. Importantly, it does not depend on the parameter vector $\boldsymbol{\beta}$. This map is best understood as giving a common structural ranking partition for the given \emph{Structural environment}. The fact that the map is common to the environment supports an interpretation of this case as \emph{Structural environment} ``quantile'' multinomial choice when admissible values of index parameter $  \boldsymbol{\beta}  $ jointly cover the maximal probability domain permitted by this location-scale structure.  

Overall, a ``quantile'' interpretation of multinomial choice  requires enough structure to generate a global ranking partition of the relevant probability domain. Whether that partition is  over the ambient simplex and robust to the distribution of unobservables, or whether it is specific to a given  \emph{Structural environment} is a separate robustness question.

There is also a fundamental asymmetry between binary and multinomial cases. In binary choice, behavioral and distributional distortions act on a single margin and a quantile representation can survive under much weaker conditions than an analogous ``quantile'' representation in multinomial choice.

The paper contributes to several related literatures. It builds directly on the semiparametric discrete choice framework of \citet{Manski1975,Manski1985,Manski1988}. It is also related to work on binary choice under quantile restrictions. \citet{Kordas2006} studies smoothed binary regression quantiles, while \citet{Volgushev2020} develops asymptotic theory for binary response quantile processes with linear quantile restrictions. That literature largely takes the quantile restriction as an econometric primitive. This paper instead asks which behavioral and distributional structures can generate the probability threshold and what remains of its ordinal interpretation when choice becomes multinomial. \citet{Matzkin1993} studies nonparametric identification of multinomial  choice models with flexible deterministic utility functions and weak distributional restrictions, but preserves the classical pairwise ranking geometry of  \citet{Manski1975}.

The paper is also related to the literature on consideration sets. \cite{Fox2007} also studies rank ordering in multinomial choice models with unobserved random consideration sets. His result preserves Manski's rank order property \eqref{Manskiranorder} by imposing restrictions on covariate-dependent consideration probabilities that essentially force the consideration mechanism to reinforce the Manski ranking \eqref{Manskiranorder}. In particular, in binary choice his assumptions force the model to preserve the classical Manski median threshold. In contrast, this paper holds attention probabilities fixed and characterizes the balance conditions under which ordinal recovery nevertheless emerges from the attention structure itself, allowing the relevant separators to differ from the standard comparison $\{p_k=p_{\ell}\}$. 

\citet{BarseghyanMolinariThirkettle2021}  develop a discrete choice model with heterogeneous risk preferences and unobserved consideration sets and provide conditions for its identification. In comparing limited consideration with standard random utility models, they introduce a conditional analogue of Manski’s rank order property \eqref{Manskiranorder} (conditional on latent  risk preference type) and show that it implies a generalized dominance restriction on observed choice probabilities, a restriction that limited consideration can violate. An earlier working paper \citet{BarseghyanMolinariThirkettle2019} also studies a Random Consideration Level (RCL)  model and shows that it preserves the conditional Manski's rank order property. RCL imposes a strong symmetry restriction, where  conditional on consideration set size, all sets are equally likely, so swapping any two alternatives leaves the probability of the consideration set unchanged. This symmetry is a special case of the balance restrictions characterized here. The present paper allows asymmetric attention, characterizes the restrictions under which distribution-robust ordinal recovery remains possible, and derives the resulting—potentially non-Manski linear probability boundaries. 

\citet{Allen2026} shows that, with bounded observable utility shifters, exogenous limited consideration can be observationally equivalent to full consideration, whereas this paper studies when it preserves ordinal recovery directly.

Other relevant consideration sets literature is \citet{MasatliogluNakajimaOzbay2012}, \citet{ManziniMariotti2014}, \citet{CattaneoMaMasatliogluSuleymanov2020}, and \citet{AguiarBoccardiKashaevKim2023}. The question here is different. Rather than treating consideration probabilities themselves as the principal object to be identified, I take the consideration structure as given and ask what ordinal information about deterministic utilities is encoded in the resulting choice probability geometry. In addition, the object of interest is not a single fixed preference ranking, but the ranking of deterministic utility indices as covariates vary. The relevant variation thus comes from the movement of the choice probabilities across covariate values and from whether the resulting probability images associated with different rankings remain separated. \citet{DardanoniManziniMariottiTyson2020} study recovery of cognitive heterogeneity from aggregate choice shares, showing that a single sufficiently large menu can suffice under homogeneous preferences.

Finally, the results at the \emph{Structural environment} level are
related to single-crossing arguments in
\citet{ApesteguiaBallesterLu2017} and
\citet{BarseghyanMolinariThirkettle2021}. Those papers use 
single-crossing conditions in heterogeneous preferences or types as part
of their identification arguments. Here I do not impose a single-crossing ordering of
heterogeneous preferences as a primitive. Instead, for a fixed
\emph{Structural environment}, common location-scale restrictions
generate a generally multidimensional map from normalized deterministic
utility index differences into choice probabilities. Injectivity of
this map yields ranking recovery, while the range of attainable
normalized indices determines how much of the associated global ranking
partition is attained. Thus, both approaches exploit ordinal structure,
but in different spaces.

The rest of the paper is organized as follows. Section~\ref{sec:binary} studies binary choice and develops behavioral and distributional foundations for median and quantile threshold representations. Section~\ref{sec:multinomial} turns to multinomial choice, introduces the ranking-recovery framework and the levels of robustness, and studies limited attention and non-exchangeable utility unobservables. Section~\ref{sec:conclusion} concludes.

\section{Binary choice}
\label{sec:binary}

A standard starting point, including in  \citet{Manski1985, Manski1988}, for a semiparametric binary choice model is the formulation 
$$y=1(x\beta+\varepsilon \geq 0), \quad x \in \mathbb{R}^{1 \times k}, \quad \beta \in \mathbb{R}^{k \times 1},$$
with $x\beta$ describing the part of utility depending on the observed characteristics and potentially identifiable, whereas $\varepsilon$ describes the unobservable component driving it. The subsequent analysis of such a model relies on what is assumed about the relationship between the unobserved $\varepsilon$ and observed covariates $x$.  A canonical restriction of that relationship comes from \citet{Manski1985, Manski1988} and relies on imposing the following constraint: 
\begin{equation} \label{median} Med(\varepsilon|x)=0.
\end{equation}
This restriction still allows for rich dependencies on $x$ from the conditional distribution of $\varepsilon$ while at the same time providing an inferential context on utility indices $x \beta$ in terms of observed choice probabilities $P(Y=1|x)$, which I review in more detail below.

The motivation for the median restriction (\ref{median}) is standard and  comes from   \citet{Manski1975}  random utility framework in which each option is associated with a latent utility and the agent chooses the option with the higher utility. In the binary case, let $j=0,1$ and define latent utilities as $U_j = x_j \beta_j + \varepsilon_j$, $x_j \in \mathbb{R}^{1 \times k_j}$, $\beta_j \in \mathbb{R}^{k_j \times 1}$ for $j=0,1$. 
In the standard utility optimization the agent chooses option 1 if and only if $ U_1 \geq U_0 \iff \varepsilon_1 - \varepsilon_0 \geq -x_1\beta_1 + x_0\beta_0.$ 
Under the assumption that $(\varepsilon_0,\varepsilon_1)$ are i.i.d. conditional on $(x_0,x_1)$, the difference $\varepsilon_1 - \varepsilon_0$ has zero conditional median and the c.d.f of this difference conditional on $(x_0,x_1)$ is strictly increasing in the neighborhood of 0.

In what follows, define \(x\beta = x_1\beta_1 - x_0\beta_0\), where \(x\) collects all distinct covariates appearing in \(x_1\) or \(x_0\). The associated coefficient vector \(\beta\) assigns to each covariate the difference between its coefficients in \(x_1\beta_1\) and \(x_0\beta_0\), using \(0\) for absent covariates. Thus, \(x\) consolidates \((x_1,x_0)\) without redundancy. In particular, if \(x_1\) and \(x_0\) share no covariates, then
\(
x=(x_1,x_0)\), $\beta=(\beta_1^{\top},-\beta_0^{\top})^{\top}$. 
Conditioning on \(x\) is therefore informationally equivalent to conditioning on \((x_1,x_0)\).

Denoting 
$\varepsilon=\varepsilon_1-\varepsilon_0,$ we end up with the model
\begin{align}
\label{modelBC}
   Y =1(x\beta+\varepsilon \geq 0), & \quad Med(\varepsilon|x)=0,\\
    \text{c.d.f.} F_{\varepsilon|x}(\cdot|x)  \text{ is strictly } & \text{increasing around 0}
   \label{modelBC_part2}
\end{align}
(\ref{modelBC})-(\ref{modelBC_part2}) implies that observed choice probabilities \( P(Y=1 | x) \) are informative about the sign of the utility index in the following sense:
\[
P(Y=1 | x) \;  \geq (\leq)  \; P(Y=0 | x) \iff x\beta \geq (\leq) \; 0,
\]
or equivalently, \eqref{medianECON}. 
Characterization (\ref{medianECON}) underlies both identification arguments and estimation methods in semiparametric binary choice models.

\citet{Manski1988} proposed a quantile-based semiparametric binary choice model:
\begin{equation}
\label{modelBCquantile}
Y = 1(x\beta + \varepsilon \geq 0), \quad Q_{\tau}(\varepsilon | x) = 0,
\end{equation}
plus the monotonicity of the conditional c.d.f. analogous to the above, but the quantile index \( \tau \in (0,1) \) need not equal \( 1/2 \). Choice probabilities are then informative as \eqref{quantilenECON} holds. Behavioral or distributional motivations for \eqref{modelBCquantile} or \eqref{quantilenECON} have received very little attention in the literature.

\begin{remark}\label{remark1} 
Unlike \citet{Manski1985,Manski1988}, who work directly with \(U_1-U_0\), I keep \(U_1\) and \(U_0\) separate, allowing asymmetry between \(\varepsilon_1\) and \(\varepsilon_0\) and more general pre-choice evaluation of alternatives.
\end{remark}

\begin{remark}\label{remark2} 
Since our focus is not identification of utility indices, the linear indices \(x_j\beta_j\), \(j=0,1\), could be replaced by general functions \(\phi_j(x_j)\), with \(x\beta\) replaced by \(\phi_1(x_1)-\phi_0(x_0)\). We maintain linearity for comparability with \citet{Manski1975,Manski1985,Manski1988}.
\end{remark} 

\begin{remark}\label{remark3}
By the standard random utility paradigm, this paper means a setting in which the agent chooses the option with the highest realized latent utility.
\end{remark}

\subsection{Behavioral route: General theorem (any $\tau$)} 
\label{sec:median} 

We start by imposing the following assumption, which permits more general conditional dependence between \(\varepsilon_1\) and \(\varepsilon_0\) than the conditional i.i.d. structure in \citet{Manski1975}.

\begin{assumption}[Distribution of $(\varepsilon_1,\varepsilon_0)|x$] \label{assn:binary_distributionMED} Suppose the following hold: 
 \begin{enumerate} 
    \item[(a)] \(\varepsilon_1\) and \(\varepsilon_0\) are identically distributed conditional on \(x\), with \(\varepsilon_j| x\) absolutely continuous with respect to Lebesgue measure. 
    \item[(b)] The copula $C_x(u_1,u_0) :=C(u_1,u_0 |x)$ that describes the dependence between $\varepsilon_1$ and $\varepsilon_0$ conditional on $x$ is exchangeable (that is, $C_x(u,v)=C_x(v,u)$) and absolutely continuous.
    \item[(c)] The support of the joint distribution of $(\varepsilon_1,\varepsilon_0)|x$ is convex and has a non-empty interior.
\end{enumerate}   
\end{assumption}

Thus, the unobservables may be conditionally correlated, with dependence restricted to exchangeable copulas. This nests the conditional independence copula of \citet{Manski1975}.

\begin{remark}
Throughout, the aim is to derive conditions that are \emph{generic}, in the sense that they do not rely on particular features of the conditional distribution of the unobservables given $x$, nor on the marginal distribution of $x$ or on the value of $\beta$. This perspective is especially important when discussing necessity.

To that end, the paper abstracts  from specifications in which restrictions on the joint support of $(x,\varepsilon)$ limit the range of the index $x\beta$ relative to the support of $\varepsilon$. Such cases can be viewed as non-generic, as they can typically be resolved by enriching the model by, for instance, expanding the support of $x$ or adjusting $\beta$ so that $x\beta$ spans the relevant range of $\varepsilon$. 

The necessity results below are therefore stated for this enriched class, which we call \emph{generic models}.\footnote{In much of the classic literature, the model is effectively specified in this enriched form: \(\varepsilon\mid x\) has full support on \(\mathbb{R}\), and \(x\beta\) varies over \(\mathbb{R}\) with \(x\). These assumptions are mainly for analytical convenience and can be relaxed in some settings.} In the multinomial models of Section~\ref{sec:multinomial}, this idea is formalized through different layers of analysis but doing so here would be unnecessarily cumbersome.
\end{remark}

The theorem below characterizes the behavioral conditions under which
\eqref{quantilenECON} holds for any \(\tau \in (0,1)\) under Assumption~\ref{assn:binary_distributionMED}, while allowing departures from utility maximization. It nests the median specification~\eqref{medianECON} as the special case \(\tau=1/2\), again allowing deviations from the standard random utility paradigm.

\begin{theorem}[General theorem for any $\tau \in (0,1)$] \label{prop:median}
Let $U_j = x_j\beta_j+\varepsilon_j$, $j=0,1$, and let the distribution of $(\varepsilon_1, \varepsilon_0) $ satisfy Assumption \ref{assn:binary_distributionMED}. 

\textbf{A.} The choice behavior of an economic  agent can be represented as in (\ref{quantilenECON}) if  the following  conditions are satisfied: 
\begin{enumerate}
\item[(1)] $P(Y=1| U_1 \geq U_0,x) -P(Y=1|U_0 > U_1,x) >0$ a.e.; 
\item[(2)] $P(Y=1| U_1 \geq U_0,x) - P(Y=0|U_0 > U_1,x) = 1-2\tau$  a.e.
\end{enumerate}
Equivalently, Condition (2) can be written as $P(Y=1 | U_1 \ge U_0, x) + P(Y=1 | U_0 > U_1, x) = 2(1-\tau)$ a.e.

\textbf{B.} If the choice behavior of an economic agent can be represented as in (\ref{quantilenECON}) and condition (2) holds, then Condition (1) is necessary away from index indifference:  
\begin{equation}
\label{condition1_weaker}
P(Y=1 | U_1 \ge U_0, x) - P(Y=1 | U_0 > U_1, x) > 0
\quad \text{a.e. on  } 
\{x:x\beta\neq0\}
\end{equation}

\textbf{C.} Condition (2) cannot be omitted from the distribution-free sufficiency result in Part A. More precisely, for every \(\tau\in(0,1)\) and every \(\beta \neq 0\), there exists a generic model satisfying Assumption 1 and the strict version of Condition (1), but violating Condition (2) on a set of positive measure, for which \eqref{quantilenECON} fails.

\end{theorem}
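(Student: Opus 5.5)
The plan is to decompose the choice probability by the event $\{U_1\ge U_0\}$. Write $a(x)=P(Y=1\mid U_1\ge U_0,x)$, $b(x)=P(Y=1\mid U_0>U_1,x)$ and $\pi(x)=P(U_1\ge U_0\mid x)=P(\varepsilon_1-\varepsilon_0\ge -x\beta\mid x)$. The law of total probability then gives
\begin{equation*}
P(Y=1\mid x)=b(x)+\bigl(a(x)-b(x)\bigr)\pi(x).
\end{equation*}
The first step is to show, from Assumption~\ref{assn:binary_distributionMED}, that $\varepsilon_1-\varepsilon_0$ is conditionally symmetric about zero. Identical marginals and an exchangeable copula make $(\varepsilon_1,\varepsilon_0)\mid x$ exchangeable, so $\varepsilon_1-\varepsilon_0$ and $\varepsilon_0-\varepsilon_1$ have the same conditional law. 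Absolute continuity gives $P(\varepsilon_1=\varepsilon_0\mid x)=0$. Convexity of the support with non-empty interior implies that the c.d.f.\ of the difference is strictly increasing on the interior of its support, an interval that contains $0$ by symmetry. Consequently $\pi(x)\gtrless 1/2$ if and only if $x\beta\gtrless 0$ (strictly, whenever $-x\beta$ lies in the interior of the support; the generic-model convention handles the rest).

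For Part A, substitute Condition (2), $a+b=2(1-\tau)$, into the decomposition:
\begin{equation*}
P(Y=1\mid x)-(1-\tau)=b-\tfrac{a+b}{2}+(a-b)\pi=(a-b)\bigl(\pi(x)-\tfrac12\bigr).
\end{equation*}
By Condition (1), $a-b>0$, so the sign of $P(Y=1\mid x)-(1-\tau)$ equals the sign of $\pi(x)-1/2$, which by step one equals the sign of $x\beta$. Both directions of \eqref{quantilenECON}, weak and strict, then follow. One should check that $b(x)$ is well defined when $\pi(x)\in\{0,1\}$; in that case the term vanishes anyway.

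For Part B, assume \eqref{quantilenECON} and Condition (2), so the same identity holds. On the set where $x\beta>0$, the strict version of \eqref{quantilenECON} (which follows from the weak one through the equality case $x\beta=0\Leftrightarrow P=1-\tau$) gives a left side that is positive, while $\pi-1/2>0$. Hence $a-b>0$, and the case $x\beta<0$ is symmetric. The main obstacle is this step. Extracting strict positivity requires reading \eqref{quantilenECON} as an equivalence with equalities, and requires $\pi\neq 1/2$ off $\{x\beta=0\}$. The latter needs the index $-x\beta$ to lie in the interior of the support of the difference, which is precisely where the generic-model convention is invoked; that is why the conclusion holds only a.e.\ on $\{x\beta\neq0\}$.

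For Part C, the counterexample takes $\varepsilon_1,\varepsilon_0$ i.i.d.\ standard normal, which is generic and satisfies Assumption~\ref{assn:binary_distributionMED}, together with $x$ of full support so that $x\beta$ ranges over $\mathbb{R}$. Choose constant $a>b$ with $a+b\neq 2(1-\tau)$, for instance $a=1$ and $b$ small, adjusted so that the sum differs from the target. At $x\beta=0$ we have $\pi=1/2$, so $P(Y=1\mid x)=(a+b)/2\neq 1-\tau$, and \eqref{quantilenECON} fails. By continuity it fails on a positive-measure neighbourhood of $\{x\beta=0\}$: there $P-(1-\tau)$ keeps the sign of $(a+b)/2-(1-\tau)$ while $x\beta$ takes both signs. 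This neighbourhood has positive probability because $\beta\neq0$ and the distribution of $x$ has full support.
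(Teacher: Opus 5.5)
Your Parts A and B follow the paper's argument: the same total-probability decomposition, the identity $P(Y=1\mid x)-(1-\tau)=(a-b)(\pi-\tfrac12)$, and sign matching. Part C is where you take a genuinely different route, and your construction is valid.

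\textbf{Part C.} You take constant $a>b$ with $a+b\neq 2(1-\tau)$. Condition (2) then fails everywhere, and the model is in fact a quantile model for the different index $\tau'=1-(a+b)/2$. Failure of \eqref{quantilenECON} for the given $\tau$ follows by continuity on the side of $\{x\beta=0\}$ where $x\beta$ has sign opposite to $(a+b)/2-(1-\tau)$. That set has positive $P_X$-measure under full support and $\beta\neq0$.

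The paper instead keeps Condition (2) exactly satisfied outside the thin slab $\{-\xi<x\beta<0\}$. It shifts both conditional probabilities up by $\eta$ only on that slab, with $\xi$ chosen so that $d[\tfrac12-\Phi(-\xi/\sqrt2)]<\eta$.

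Your construction is simpler and meets the statement as literally phrased. The paper's is more informative: it shows that a violation of (2) confined to a small region next to index indifference, with (2) holding everywhere else, already destroys the representation. So (2) cannot be weakened to hold only off a small set. Your global-shift example, which merely relocates the threshold, does not establish that.

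\textbf{A correction to your commentary on A and B.} The equivalence $\pi(x)\gtrless\tfrac12\iff x\beta\gtrless0$ holds for every $x$ under Assumption~\ref{assn:binary_distributionMED}, not only when $-x\beta$ lies in the interior of the support. The c.d.f.\ of $\varepsilon_1-\varepsilon_0$ is $0$ below and $1$ above its symmetric support interval, so $\pi\in\{0,1\}$ there and sits on the correct side of $\tfrac12$.

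Consequently, Parts A and B need no generic-model convention, and what you call the ``main obstacle'' in B is not one. The restriction to $\{x\beta\neq0\}$ in Part B comes solely from the factor $\pi-\tfrac12$ vanishing at $x\beta=0$, which leaves $a-b$ unconstrained there. The generic-model convention is only used in Part C.
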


Condition (1) of Theorem \ref{prop:median} ensures that the agent is more likely to choose
option~1 when $U_1 \geq U_0$ than when $U_0 > U_1$. It rules out choice rules so perverse that higher utility of an option
reduces the probability of choosing it. This guarantees the expected monotonic relationship since  as the index $x\beta$ crosses zero from below, $P(Y=1|x)$ crosses the threshold $1-\tau$ from below.\footnote{This monotonicity condition is distinct from the single-crossing property in random utility models (\citet{ApesteguiaBallesterLu2017}). Single-crossing imposes an ordering on preference heterogeneity such that rankings over alternatives change at most once across types, whereas the present condition is a much weaker reduced-form restriction on choice probabilities. } 

Condition (2) of Theorem \ref{prop:median} relaxes the standard random utility paradigm implication
$P(Y=1 \mid U_1 \geq U_0, x)
=
P(Y=0 \mid U_0 > U_1, x)
=1.$ Under Assumption \ref{assn:binary_distributionMED}, this implication would yield the median condition \eqref{median}. Theorem \ref{prop:median} instead allows choice reversals relative to this benchmark, even when \(\tau=1/2\), provided they are independent of which alternative has higher realized utility. For \(\tau\neq 1/2\), the theorem gives the first behavioral characterization in the literature of the representation \eqref{quantilenECON}.\footnote{If Condition (2) holds but Condition (1) fails, monotonicity may reverse, yielding the unintended implication \(P(Y=1| x)\leq 1-\tau \iff x\beta\geq 0\).}

Part A in Theorem \ref{prop:median} is about sufficiency, whereas parts B and C establish near-necessity. Part B cannot be strengthened to include $x$ with $x\beta=0$. Part C shows that if Condition (1) holds strictly, then violation of
Condition (2) generically invalidates \eqref{quantilenECON}.

Note that Theorem \ref{prop:median} does not require choices to depend only on \(U_1-U_0\). Choices may depend on the levels of \(U_1\) and \(U_0\) as well, provided the theorem's two conditions hold with  examples of such behavioral models  given below. Before presenting examples, let's summarize some implications of model (\ref{quantilenECON}) on the choice probabilities. 

\begin{prop}\label{prop:implications}
 Consider the setting  in Theorem \ref{prop:median} and suppose conditions (1) and (2) there hold. Then 
 $P(Y=1|U_1<U_0,x) \leq P(Y=1|x) \leq P(Y=1|U_1\geq U_0,x)$ a.e., and 
\begin{itemize}
    \item[(i)] for $\tau>1/2$, 
$P(Y=1|x) \leq 2(1-\tau)$; 
\item[(ii)] for $\tau<1/2$, $P(Y=1|x) \geq 2(1-\tau)-1$.
\end{itemize}
\end{prop}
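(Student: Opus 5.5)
The plan is to write $P(Y=1|x)$ as a mixture over the two utility events and then read off the bounds from Conditions (1) and (2). Set $a(x):=P(Y=1\mid U_1\ge U_0,x)$, $b(x):=P(Y=1\mid U_0>U_1,x)$ and $\pi(x):=P(U_1\ge U_0\mid x)$. By the law of total probability,
\begin{equation*}
P(Y=1\mid x)=\pi(x)\,a(x)+(1-\pi(x))\,b(x).
\end{equation*}
Under Assumption~\ref{assn:binary_distributionMED} the event $\{U_1=U_0\}$ is null, so $P(Y=1\mid U_1<U_0,x)=b(x)$ a.e. Whenever $\pi(x)\in(0,1)$, both conditional probabilities are well defined. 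Since $\pi(x)\in[0,1]$, the expression above is a convex combination of $a(x)$ and $b(x)$. Condition (1) gives $b(x)<a(x)$ a.e., and therefore $b(x)\le P(Y=1\mid x)\le a(x)$ a.e., which is the first claim.

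\textbf{Parts (i) and (ii).} Both use the equivalent form of Condition (2), $a(x)+b(x)=2(1-\tau)$ a.e.

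\emph{Part (i).} Suppose $\tau>1/2$, so that $2(1-\tau)<1$. Since $b\ge 0$, we get $a\le 2(1-\tau)$, and hence $P(Y=1\mid x)\le a(x)\le 2(1-\tau)$.

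\emph{Part (ii).} Suppose $\tau<1/2$. Since $a\le 1$, we get $b=2(1-\tau)-a\ge 2(1-\tau)-1$, and hence $P(Y=1\mid x)\ge b(x)\ge 2(1-\tau)-1$.

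Note that each bound is nontrivial exactly in its own case: $2(1-\tau)<1$ when $\tau>1/2$, and $2(1-\tau)-1>0$ when $\tau<1/2$.

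\textbf{Main obstacle.} The only delicate point is measure-theoretic. The conditional probabilities are undefined when $\pi(x)\in\{0,1\}$, and Conditions (1) and (2) hold only almost everywhere. Two sub-cases need handling.

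\emph{Degenerate $\pi$.} On the set where $\pi(x)=1$, $P(Y=1\mid x)=a(x)$, and the bounds hold by the same inequalities, provided $a$ and $b$ are read as versions of the conditional probabilities satisfying the conditions. The case $\pi(x)=0$ is symmetric.

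\emph{Ties.} Absolute continuity of the copula and of the marginals in Assumption~\ref{assn:binary_distributionMED} makes $\{U_1=U_0\}$ null. This is what lets $\{U_1<U_0\}$ replace $\{U_0>U_1\}$ in the statement.

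With these points settled, the remaining steps are elementary inequalities holding on a common full-measure set of $x$.
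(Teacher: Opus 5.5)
Your proof is correct and follows the paper's argument. The paper writes $P(Y=1|x)=b(x)+\bigl(a(x)-b(x)\bigr)\pi(x)$ with $\pi(x)\in[0,1]$ and uses Condition (1) to get $b\le P(Y=1|x)\le a$, then combines $a+b=2(1-\tau)$ with $b\ge 0$ and $a\le 1$ exactly as you do. One small remark: your ``ties'' sub-case is unnecessary, because $\{U_1<U_0\}$ and $\{U_0>U_1\}$ are literally the same event, and $\{U_1\ge U_0\}$, $\{U_1<U_0\}$ partition the sample space exactly, so no null-set argument is needed there.
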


Thus, Proposition~\ref{prop:implications} yields a testable implication: when $\tau \neq 1/2 $, the choice probability $P(Y=1| x)$ is bounded strictly
away from at least one boundary. This is driven by the fact that under conditions of Proposition~\ref{prop:implications}, 
$P(Y=1|U_1 \geq U_0,x)+P(Y=1|U_1 < U_0,x)=2(1-\tau) \neq 1.$ 
In  the median model  $P(Y=1|x)$ can approach 1 and 0 arbitrarily closely on sets on positive (but decreasing) measure of $x$.

We now give some examples of behavioral models that go beyond the traditional random utility framework while still admitting the representation ~(\ref{medianECON}) based on Theorem \ref{prop:median}.

\begin{example}[Mixture of random choice and utility maximization, $\tau=1/2$]\label{ex:randchoice} 
For an economic agent characterized by $x$, with probability $r(x) \in (0,1)$, the choice between 1 and 0 is made by tossing a fair coin, and with probability $1-r(x)$, the choice is made on the basis of comparing latent utilities as in the standard random utility paradigm. Then $P(Y=1| U_1\geq U_0,x) = r(x)/2 + (1-r(x)) = 1-r(x)/2$ and
$P(Y=1| U_0>U_1,x) = r(x)/2$.
Condition (1) holds since $1-r(x)/2 - r(x)/2 = 1-r(x)>0$. Condition (2) holds for $\tau=1/2$ since  $P(Y=1|U_1 \geq U_0, x)$ and $P(Y=1|U_1 < U_0, x)$ sum to 1.\end{example}

\begin{example}[Noisy Perception Model, $\tau=1/2$]\label{ex:noisy}  Suppose an agent does not know their $U_j$ and instead perceives their noisy versions $\xi\cdot U_j$ where $\xi$ is independent of
$(U_0,U_1)$ conditional on $x$ with $P(\xi\geq 0\mid x) > P(\xi<0\mid x)$\footnote{Equivalently, $P(\xi \geq 0 |  x)>1/2$. This is not a strong requirement especially taking into account that $\xi$ centered around 1 would give us the disturbed latent utilities centered around true $U_1,U_0$.} ($\xi$ and $x$ can be dependent). Agent's choice is made by maximizing these noisy versions. Hence,    $P(Y=1|U_1 \geq U_0, x)   = P( \xi \cdot U_1 \geq \xi \cdot U_0 |U_1 \geq U_0, x) = P(\xi \geq 0 |  x)$ and $P(Y=1|U_1 < U_0, x)   =  P(\xi < 0 |  x)$. Condition (1) in Theorem \ref{prop:median} is guaranteed by  $P(\xi \geq 0 |  x) - P(\xi < 0 | x)>0$. Condition (2) holds too a.e. for $\tau=1/2$ as   $P(Y=1|U_1 \geq U_0, x)$ and $P(Y=1|U_1 < U_0, x)$ sum to 1. $\blacksquare$
\end{example}

Neither of these two examples give  behavioral models that lead to \eqref{quantilenECON} with $\tau \neq 1/2$. Therefore, next the paper focuses  on more canonical behavioral models that maintain Assumption \ref{assn:binary_distributionMED} and deliver conditions (1) and (2) of Theorem \ref{prop:median} for $\tau \neq 1/2$.

\subsection{Consideration sets as the canonical behavioral model} 

As it turns out, models with consideration sets provide the canonical behavioral implementation of
Theorem~\ref{prop:median} for $\tau \neq 1/2$.

\vskip 0.05in 

\noindent \emph{Independent consideration set formation} Suppose the collection of possible consideration sets is 
$\{\{0\}, \{1\}, \{0,1\}\}$, with probabilities 
$\gamma_0$, $\gamma_1$, and 
$1-\gamma_0-\gamma_1 \in (0,1)$, respectively. These consideration sets are 
formed independently of $(U_0,U_1,x)$. When the agent faces the set 
$\{0,1\}$, she chooses the alternative with the higher latent utility. The model 
can be written as $Y = z_1z_2 + (1-z_1)\mathbf{1}\{U_1 \geq U_0\}$, where $z_1$ and $z_2$ are independent binary random variables, also independent 
of $(U_0,U_1,x)$, with $P(z_1=1)=\gamma_0+\gamma_1$, 
$P(z_2=1)=\frac{\gamma_1}{\gamma_0+\gamma_1}$. Here, $z_1$ indicates that the consideration set is a singleton, while $z_2$ 
indicates that this singleton is $\{1\}$. Thus, independence of $z_1$ and 
$z_2$ from $(U_0,U_1,x)$ captures the exogeneity of consideration set formation. Namely,  
which alternatives enter the consideration set is completely independent of the 
other variables in the model.

Then 
$P(Y=1|x)=\gamma_1 +(1-\gamma_0 -\gamma_1) (1-F_{\varepsilon|x} (-x\beta)) $ 
and, under Assumption \ref{assn:binary_distributionMED}, 
$$x\beta \; \, \geq (\leq) \; \, 0 \iff P(Y=1|x) \; \, \geq (\leq) \; \, \frac{1}{2}+\frac{1}{2}(\gamma_1-\gamma_0).$$
If $\gamma_0=\gamma_1$ then we in fact obtain a model with the median characterization \eqref{medianECON} which is special case of the model in Example \ref{ex:randchoice}. If $\gamma_0\neq \gamma_1$ we obtain a quantile characterization (\ref{quantilenECON}) for $\tau=1/2(1-\gamma_1+\gamma_0)\neq 1/2$. The asymmetry in consideration probabilities $\gamma_1 - \gamma_0$ directly
controls the quantile index.\footnote{An equivalent approach would be to rely on Theorem \ref{prop:median} and show that 
$P(Y=1|U_1 \geq U_0,x) -P(Y=1|U_1<U_0,x) = 1-\gamma_0-\gamma_1>0 \; a.e.$, 
and that 
$P(Y=1|U_1 \geq U_0,x)  = 1-\gamma_0$,   $P(Y=0|U_0 > U_1,x) =1-\gamma_1,$
which creates a non-median formulation for $\gamma_0\neq \gamma_1$.}

The consideration sets model presented here  has testable implications beyond Proposition \ref{prop:implications}, including when $\gamma_1 = \gamma_0$. Namely, $P(Y=1|x) \in [\gamma_1,1-\gamma_0]$, so  choice probabilities are
bounded away from zero if $\gamma_1>0$, from one if $\gamma_0>0$, and
from both boundaries if both are positive. Since
$\gamma_0+\gamma_1>0$ under our maintained conditions, they are always
bounded away from at least one boundary. Hence, if $P(Y=1| x)$ lies
arbitrarily close to both zero and one with positive probability over
$x$, this model is ruled out. 

When $\gamma_0=0$, we have $z_2=1$ a.e., and the only consideration sets are $\{1\}$ and $\{0,1\}$. This case captures a predisposition toward option 1, with $\gamma_1$ measuring its strength. Equivalently, it can represent probabilistic conformity to a social norm being option 1.

There are other models that may result in (\ref{quantilenECON}) and are in essence   similar to the consideration sets models. Example \ref{ex:otherquantile} illustrates that. 

\begin{example}[Elements of heuristic decision making] 
\label{ex:otherquantile} Suppose an agent's decision involves evaluation of an aspect $w$ in the first step which is independent of $(U_0,U_1,x)$ in the following way.   If $w$ is above $\overline{w}$, then 1 is chosen. If $w$ is below  $\underline{w}<\overline{w}$, then 0 is chosen.   
If $w \in (\underline{w},\overline{w})$ then the decision is made on the basis of which latent utility is greater. 

Let $\gamma_1=P(w\geq \overline{w})$, $\gamma_0=P(w\leq \underline{w})$ and $\gamma_0+\gamma_1 \in [0,1)$. Then 
$$x\beta \; \, \geq (\leq) \; \,  0 \iff P(Y=1|x)  \; \,\geq (\leq) \; \, \frac{1}{2} +\frac{1}{2}(\gamma_1-\gamma_0). \blacksquare$$ 
\end{example}

\vskip 0.05in

\noindent \textit{Heterogeneous consideration set formation} What if we weakened the assumption on $z_1, z_2$ by allowing them to be correlated with $x$ but maintained their independence from $(U_0,U_1)$ and from each other given $x$? Denoting $P(z_1=1|x)=\gamma_0(x)+\gamma_1(x)$, $P(z_2=1|x)=\frac{\gamma_1(x)}{\gamma_0(x)+\gamma_1(x)}$, we obtain  
$P(Y=1|x)=\gamma_1(x) +(1-\gamma_0(x) -\gamma_1(x)) (1-F_{\varepsilon|x} (-x\beta)) $, 
and under Assumption \ref{assn:binary_distributionMED}, 
$$x\beta \; \geq (\leq) \; 0 \iff P(Y=1|x) \; \geq  (\leq)  \; \frac{1}{2}+\frac{1}{2}(\gamma_1(x)-\gamma_0(x)).$$ 
Thus, models with heterogeneous consideration set formation (where $\gamma_0$ and
$\gamma_1$ depend on $x$) can also generate \eqref{quantilenECON}, provided
$\gamma_1(x) - \gamma_0(x) = \Delta$ a.e.\ for some constant $\Delta \in (-1,1)$. If in this condition $\Delta>0$, then $\gamma_1(x)$ cannot approach 0 arbitrarily closely as it is bounded away from 0 by $\Delta$, therefore  $P(Y=1|x)$ potentially can approach  1 arbitrarily 
closely but not 0. If $\Delta<0$, then $\gamma_0(x)$ cannot approach 0 arbitrarily closely (it has to be greater or equal than $-\Delta$) therefore  $P(Y=1|x)$ potentially can approach  0 arbitrarily 
closely but not 1.  To summarize, when this heterogeneous choice formation model gives (\ref{quantilenECON}) model for $\tau \neq 1/2$, it weakens the testable implications of independent consideration set formation. Probabilities now need only be bounded away from one boundary, as in Proposition \ref{prop:implications}.

The question now is whether there are other foundations that lead to (\ref{quantilenECON}), particularly those  where the choice probabilities could approach both natural probability boundaries 0 and 1 arbitrarily closely for $\tau \neq 1/2$. This is addressed in the next section.

\subsection{Distributional route: Non-exchangeable distribution of $(\varepsilon_0,\varepsilon_1|x)$} 

We now show that \eqref{quantilenECON} can also be rationalized within the 
standard random utility paradigm by allowing the joint distribution of 
$(\varepsilon_0,\varepsilon_1)$ to be asymmetric. The following theorem covers 
both non-exchangeable copulas and unequal marginal distributions. Although it 
allows for behavior beyond the standard random utility model, our main interest 
is the standard case, where the role of distributional asymmetries in the 
unobservables is especially transparent.

\begin{theorem}\label{prop:copula} 
Let $U_j=x_j\beta_j+\varepsilon_j$, $j=0,1$,   and let the joint distribution of $(\varepsilon_0, \varepsilon_1)|x $ be absolutely continuous, have  convex support with nonempty interior and let the support of $\varepsilon_1- \varepsilon_0|x $ contain $0$ in its interior. Then: 

\textbf{A.} 
The choice behavior  can be represented as in  (\ref{quantilenECON}) if the following conditions hold: 
\begin{enumerate}
    \item[(1')] Condition (1) in Theorem \ref{prop:median}  holds; 
    \item[(2')] $P(\varepsilon_1 \geq \varepsilon_0|x)   \cdot P(Y=1|U_1\geq U_0,x) + P(\varepsilon_1 < \varepsilon_0|x)   \cdot P(Y=1|U_1 < U_0,x)=1-\tau$  a.e.  
\end{enumerate}

\textbf{B.} If the choice behavior of an economic agent can be represented as in (\ref{quantilenECON}) and condition (2') holds, then Condition (1') is necessary away from index indifference, as in
\eqref{condition1_weaker}. 

\textbf{C.}
Condition (2') cannot be omitted from the distribution-free sufficiency
result in Part \textbf{A}. More precisely, for every fixed
$\beta\neq 0$, there exists a generic model satisfying the maintained
assumptions and the strict version of Condition (1'), but violating
Condition (2') on a set of positive $P_X$-measure, for which
\eqref{quantilenECON} fails.
\end{theorem}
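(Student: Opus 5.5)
The plan is to reduce everything to a one-dimensional decomposition of $P(Y=1\mid x)$ at a fixed $x$, as in the proof of Theorem~\ref{prop:median}, but without using exchangeability. Write $a(x)=P(Y=1\mid U_1\ge U_0,x)$, $b(x)=P(Y=1\mid U_1<U_0,x)$, and $G_x(t)=P(\varepsilon_1-\varepsilon_0\ge -t\mid x)$. Since $\{U_1\ge U_0\}=\{\varepsilon_1-\varepsilon_0\ge -x\beta\}$, the law of total probability gives
\begin{equation*}
P(Y=1\mid x)=b(x)+\bigl(a(x)-b(x)\bigr)\,G_x(x\beta).
\end{equation*}
Condition (2') states exactly that $b(x)+(a(x)-b(x))G_x(0)=1-\tau$, because $G_x(0)=P(\varepsilon_1\ge\varepsilon_0\mid x)$. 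The key point is that both expressions are evaluated at the same $x$; (2') is not used to compare different covariate values. Subtracting yields the identity
\begin{equation*}
P(Y=1\mid x)-(1-\tau)=\bigl(a(x)-b(x)\bigr)\bigl(G_x(x\beta)-G_x(0)\bigr)\quad\text{a.e.}
\end{equation*}
This identity drives Parts A and B.

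The one genuinely distributional step is strict monotonicity of $G_x$ at $0$: $G_x(t)>G_x(0)$ for $t>0$ and $G_x(t)<G_x(0)$ for $t<0$. I would argue it as follows.
\begin{itemize}
\item The joint law of $(\varepsilon_0,\varepsilon_1)\mid x$ is absolutely continuous with convex support with nonempty interior, so the difference $\varepsilon_1-\varepsilon_0$ has a density and its support is an interval.
\item The difference places positive mass on every open subinterval of that support interval.
\item Since $0$ is an interior point of this interval, for any $t>0$ the interval $[-t,0)$ meets the support in a nondegenerate interval, giving $G_x(t)-G_x(0)=P(-t\le\varepsilon_1-\varepsilon_0<0\mid x)>0$. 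The case $t<0$ is symmetric.
\end{itemize}
I expect the middle item to be the main obstacle. It can be handled by noting that the projection of a convex set with nonempty interior onto the direction $(-1,1)$ is an interval. Any open subinterval then pulls back to an open slab meeting the interior of the support, which carries positive probability.

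Part A then follows immediately. Condition (1') gives $a(x)-b(x)>0$, so $P(Y=1\mid x)-(1-\tau)$ has the same sign as $x\beta$, including equality at $x\beta=0$. This is \eqref{quantilenECON}.

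For Part B, assume \eqref{quantilenECON} and (2'), and take $x$ with $x\beta>0$.
\begin{itemize}
\item If $P(Y=1\mid x)\le 1-\tau$, the reverse implication in \eqref{quantilenECON} would give $x\beta\le 0$, a contradiction. Hence $P(Y=1\mid x)>1-\tau$.
\item Since $G_x(x\beta)-G_x(0)>0$, the identity forces $a(x)-b(x)>0$.
\item The case $x\beta<0$ is symmetric.
\end{itemize}
This gives \eqref{condition1_weaker}. The argument says nothing at $x\beta=0$, which is why Part B is stated only away from index indifference.

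For Part C, fix $\beta\ne0$ and work in the standard random utility model, so $a\equiv1$ and $b\equiv0$, and strict (1') holds. Take $x$ with a continuous distribution such that $x\beta$ has full support on $\mathbb{R}$. Let $\varepsilon_0,\varepsilon_1$ be independent of $x$ and of each other, with $\varepsilon_0\sim N(0,1)$ and $\varepsilon_1\sim N(m,1)$, where $m$ is chosen so that $\Phi(m/\sqrt2)\ne1-\tau$. This model meets all the maintained assumptions, and (2') fails for every $x$.
\begin{itemize}
\item Here $P(Y=1\mid x)=\Phi\bigl((x\beta+m)/\sqrt2\bigr)$, which crosses $1-\tau$ at the unique point $t^*=\sqrt2\,\Phi^{-1}(1-\tau)-m\ne0$.
\item Suppose $t^*>0$. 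For $x\beta\in(0,t^*)$, a set of positive $P_X$-measure, we have $x\beta>0$ but $P(Y=1\mid x)<1-\tau$. This violates \eqref{quantilenECON}.
\item If $t^*<0$, the interval $(t^*,0)$ plays the same role.
\end{itemize}
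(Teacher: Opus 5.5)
Your proof is correct. Parts A and B follow the paper's route, and Part C uses a different counterexample.

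\textbf{Parts A and B.} These are the paper's argument. You use the same law-of-total-probability decomposition and evaluate (2') at the same $x$ to obtain $P(Y=1\mid x)-(1-\tau)=(a(x)-b(x))(G_x(x\beta)-G_x(0))$. Part B then follows from the same sign comparison. The step you flag as the main obstacle is actually immediate. Since $0$ is interior to the support of $\varepsilon_1-\varepsilon_0\mid x$, for small $\delta>0$ the interval $(-\min\{t,\delta\},0)$ meets that support. By the definition of support it therefore has positive probability, so neither convexity nor the slab argument is needed.

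\textbf{Part C.} The paper reuses the counterexample from Theorem~\ref{prop:median}. There the errors are i.i.d.\ standard normal, so $P(\varepsilon_1\geq\varepsilon_0\mid x)=1/2$ and (2') collapses to (2). The conditional choice probabilities keep a constant gap $d>0$, but their sum is raised by $2\eta$ on the strip $-\xi<x\beta<0$. You instead keep standard utility maximization and break (2') purely distributionally, through a location shift $m$ in $\varepsilon_1$.

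\textbf{What each construction buys.} Yours is simpler, since no tuning of $\eta,d,\xi$ is needed, and it makes (2') fail at every $x$. It shows that (2') cannot be dropped even under pure utility maximization, which is the natural witness for the distributional route. The paper's construction shows that (2') cannot be dropped even within the exchangeable sub-case.

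The two also exhibit different kinds of failure. Your $P(Y=1\mid x)=\Phi\bigl((x\beta+m)/\sqrt2\bigr)$ is strictly increasing in $x\beta$. So your model still satisfies \eqref{quantilenECON} at the shifted level $1-\tau'=\Phi(m/\sqrt2)$, and it only shows that the threshold $1-\tau$ is misplaced. In the paper's example, $P(Y=1\mid x)\to 1-\tau+\eta$ as $x\beta\uparrow 0$ but $P(Y=1\mid x)\to 1-\tau$ as $x\beta\downarrow 0$. Hence no threshold at any level recovers the sign of $x\beta$ there. Both constructions satisfy Part C as stated, since that part fixes $\tau$.
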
 

Condition (2') in Theorem \ref{prop:copula} suggests that the covariates $x$  and the error dependence  interact in a way that stabilizes the threshold for the choice probability. Analogously to Theorem~\ref{prop:median}, part \textbf{A} is about sufficiency, while parts \textbf{B} and \textbf{C} provide corresponding near-necessity results.

 Condition (2') reduces to Condition (2) when the joint distribution of
$(\varepsilon_0,\varepsilon_1)$ is exchangeable, since
$P(\varepsilon_1\geq\varepsilon_0| x) = 1/2$ in that case.
Under the standard random utility paradigm
($P(Y=1|U_1\geq U_0,x) = 1$, $P(Y=1| U_1<U_0,x) = 0$),
Condition (2') becomes $P(\varepsilon_1\geq\varepsilon_0 | x) = 1-\tau$
a.e., which is the constancy across $x$ of the asymmetry of the joint unobservable distribution. When the marginals of $\varepsilon_0$ and $\varepsilon_1$ conditional on $x$
are identical (Assumption \ref{assn:binary_distributionMED}(a) holds) but
the copula $C_x$ is not exchangeable, this becomes
$P_{C_x}(v\geq u| x) = 1-\tau$ a.e., that is, the copula must have a fixed
degree of asymmetry across $x$. Appendix C shows that this condition is satisfied for an explicit
parametric family of absolutely continuous copulas, and that any $\tau\in(0,1)$ is
attainable.

The key difference of the described distributional route from the behavioral route is that no boundary restriction
on $P(Y=1| x)$ emerges as choice probabilities can approach both zero and
one, as the copula non-exchangeability can accommodate any
probability level.

\section{Multinomial choice: More than two options} 
\label{sec:multinomial}

 In the binary case characterization \eqref{quantilenECON} effectively described the partitioning of the one-dimensional simplex $\Delta_1=\{p_0+p_1 =1, p_1 \in [0,1]\}$,  more precisely its feasible subset $\{p_0+p_1 =1, p_1 \in [\max\{0,2(1-\tau)-1\}, \min\{2(1-\tau),1\}]\}$ implied by the testable restrictions in Proposition \ref{prop:implications}. With more than two options, the analysis inevitably becomes substantially richer and more complex. This section studies the extent to which the partition result of the binary case survives under more general behavioral and distributional assumptions.

\subsection{Review of \citet{Manski1975} multinomial model:    ``median'' multinomial choice}
\label{sec:reviewManski} 

I begin with the classical \citet{Manski1975} multinomial choice model. I consider it the ``median'' benchmark for multinomial choice, as the binary choice model \eqref{medianECON} under the median threshold arises as its  special case. 

The latent utility of an agent $i$ for the option $j$ is 
\begin{equation} 
\label{Umulti} U^*_{ij} = x_{ij}\beta_j+\varepsilon_{ij},  \quad x_{ij} \in \mathbb{R}^{1 \times k_j}, \quad \beta_j \in \mathbb{R}^{k_j \times 1}, \quad j=1, \ldots, J.
\end{equation}

On a general note, and not specifically in relation to \citet{Manski1975}, it is useful to distinguish the support of the covariates from the range of the deterministic utility indices.  Even if each $x_j$ has full support, the vector $(x_1\beta_1,\ldots,x_J\beta_J)^{\top}$ need not span $\mathbb R^J$, e.g. when covariates are shared across alternatives. Moreover, since choices and rankings depend only on relative utilities, the relevant richness concerns the $J-1$ index differences rather than the $J$ index levels. Sufficient ranking results later in this paper apply on the index configurations that are attainable. This includes the special case in which the $J-1$ index differences range over all of $\mathbb R^{J-1}$, but does not require it. Additional richness of the joint index difference  will be imposed only where it is needed for converse results or for conclusions that require every strict ordering region to be nonempty.

The assumption of i.i.d. of $\boldsymbol{\varepsilon}=(\varepsilon_{i1}, \ldots, \varepsilon_{iJ})^{\top}$ conditional on $\mathbf{x}_i=(x_{i1},\ldots,x_{iJ})^{\top}$ and the support of $\boldsymbol{\varepsilon}|\mathbf{x}_i$ being $\mathbb{R}^J$,  under the standard random utility paradigm  one obtains 
\begin{equation} 
\label{ECONmedianmulti} \forall \, j, k \qquad P(Y=j|\mathbf{x}_i) \geq P(Y=k|\mathbf{x}_i) \quad \iff \quad x_{ij}\beta_j \geq x_{ik}\beta_k.
\end{equation} 
Relations \eqref{ECONmedianmulti} imply that $P(Y=j|\mathbf{x}_i) = P(Y=k|\mathbf{x}_i)$ $  \iff$ $x_{ij}\beta_j =x_{ik}\beta_k$. Note that the full $\mathbb{R}^J$ support assumption on the conditional distribution of $\boldsymbol{\varepsilon}|\mathbf{x}$ guarantees that the vector $\mathfrak{P}(\mathbf{x}):=(P(Y=1|\mathbf{x}),\ldots,P(Y=J|\mathbf{x}))^{\top}$ never hits the boundary of the $(J-1)$-dimensional simplex $\Delta_{J-1}$: 
$$\Delta_{J-1}=\{\mathbf{p}=(p_1,\ldots,p_J): \sum_{j=1}^J p_j=1, p_j \geq 0, j=1, \ldots, J\}.$$
Without that full support assumption on the distribution of $\boldsymbol{\varepsilon}|\mathbf{x}$, the utility indices may reach the boundary of the support of the unobservables or even extend beyond it in which case  inequality relationships between choice probabilities may become uninformative about the corresponding relationships between the associated utility indices.
Therefore, when working under the more general distributional assumptions, it would be natural to reformulate   \eqref{ECONmedianmulti} by 
excluding points in the simplex that are obtained when some index differences $x_j\beta_j-x_k\beta_k$ are either at the boundary or beyond the support of $\varepsilon_k-\varepsilon_j|\mathbf{x}$. This will be done in a technically proper way later in our discussion. For now, for expositional simplicity let us maintain that full support condition.

There are some observations  worth noting here. We can define preference relations over the choice set either in terms of probabilities,
    $$ j \stackrel{(p)}{\succeq} k \text{ by } i \quad \iff \quad  P(Y=j|\mathbf{x}_i) \geq P(Y=k|\mathbf{x}_i),$$
or in terms of indices,  
    $$ j \stackrel{(index)}{\succeq} k \text{ by } i \quad \iff \quad  x_{ij}\beta_j \geq x_{ik}\beta_k.$$
These two preference relations are equivalent. Moreover, they satisfy the standard axioms of preference relations, including reflexivity, completeness and transitivity. 

 Another implication of that is that we can create an ordering of the whole list of options by utilizing pairwise relations described above. Namely, take any permutation $(k_1,\ldots,k_J)$ of $(1,\ldots, J)$. Then 
\begin{equation}\label{proference_all1} k_1 \stackrel{(index)}{\succeq} k_2 \ldots \stackrel{(index)}{\succeq} \ldots \stackrel{(index)}{\succeq} k_J  \text{ by } i
\end{equation}
iff 
\begin{equation}\label{proference_all2}k_{j-1} \stackrel{(p)}{\succeq} k_j \text{ by } i \text{ for any } j=2, \ldots, J.
\end{equation} 

The following theorem essentially shows that \eqref{ECONmedianmulti} is consistent with
a much wider class of decision rules than standard utility maximization. Before that, let's formulate an assumption on unobservables.

\begin{assumption}[Distribution of $\boldsymbol{\varepsilon}|\mathbf{x}$] \label{assn:multi_distributionMED} 
 The joint distribution $\boldsymbol{\varepsilon}|\mathbf{x}$ is exchangeable and absolutely continuous and has a convex support on $\mathbb{R}^J$ with a non-empty interior.  
\end{assumption}

In Theorem \ref{th:multimedian} below  the maximum of latent utilities $U_j$, $j=1, \ldots, J$, is denoted as $U^{J:J}$.

\begin{theorem}
\label{th:multimedian}  Let $U_j$ , $j=1, \ldots, J$,  be given as in (\ref{Umulti}), and let the distribution of $\boldsymbol{\varepsilon}|\mathbf{x}$ satisfy Assumption \ref{assn:multi_distributionMED}. Then the  behavior of an economic agent can be represented 
 as 
{\small$$x_{j}\beta_j \geq x_{k}\beta_k \Rightarrow P(Y=j|\mathbf{x}) \geq P(Y=k|\mathbf{x}), \quad P(Y=j|\mathbf{x}) > P(Y=k|\mathbf{x}) \Rightarrow x_{j}\beta_j > x_{k}\beta_k$$}if the following conditions are satisfied: 
\begin{enumerate}
\item[(1)] For any $m\neq j$, $P (Y = j|U_j = U^{J:J}, \mathbf{x}) - P (Y = j|U_m =  U^{J:J}, \mathbf{x}) > 0$ a.e. in $\mathbf{x}$;
\item[(2)] $P (Y = j|U_j = U^{J:J}, \mathbf{x})$ does not depend on $j$ a.e. in $\mathbf{x}$;
\item[(3)] $P (Y = j|U_m = U^{J:J}, \mathbf{x})$, $m\neq j$, does not depend on $j$, $m$ a.e. in $\mathbf{x}$
\end{enumerate}
\end{theorem}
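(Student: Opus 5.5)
Let $\pi_j(\mathbf x)=P(U_j=U^{J:J}\mid\mathbf x)$. Absolute continuity makes ties null, so the events $\{U_m=U^{J:J}\}$, $m=1,\ldots,J$, partition the sample space a.s. By the law of total probability,
\[
P(Y=j\mid\mathbf x)=a(\mathbf x)\,\pi_j(\mathbf x)+\sum_{m\neq j} b_{jm}(\mathbf x)\,\pi_m(\mathbf x),
\]
where $a_j(\mathbf x)=P(Y=j\mid U_j=U^{J:J},\mathbf x)$ and $b_{jm}(\mathbf x)=P(Y=j\mid U_m=U^{J:J},\mathbf x)$. Condition (2) makes $a_j=a$ independent of $j$, and condition (3) makes $b_{jm}=b$ independent of $(j,m)$. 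Using $\sum_m \pi_m=1$, this gives
\[
P(Y=j\mid\mathbf x)=b(\mathbf x)+\bigl(a(\mathbf x)-b(\mathbf x)\bigr)\pi_j(\mathbf x).
\]
Hence
\[
P(Y=j\mid\mathbf x)-P(Y=k\mid\mathbf x)=\bigl(a(\mathbf x)-b(\mathbf x)\bigr)\bigl(\pi_j(\mathbf x)-\pi_k(\mathbf x)\bigr),
\]
and condition (1) says $a-b>0$ a.e. So the problem reduces to the classical random-utility comparison: I must show that $x_j\beta_j\ge x_k\beta_k$ implies $\pi_j\ge\pi_k$. Both stated implications then follow. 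The first is immediate from the factorization. For the second, $P(Y=j)>P(Y=k)$ forces $\pi_j>\pi_k$, and by the contrapositive of the same monotonicity applied with $j,k$ swapped, $x_j\beta_j\le x_k\beta_k$ would force $\pi_j\le\pi_k$; hence $x_j\beta_j>x_k\beta_k$.

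To prove the monotonicity, write $v_m=x_m\beta_m$ and suppose $v_j\ge v_k$. Let $\sigma$ be the transposition swapping coordinates $j$ and $k$. By exchangeability in Assumption \ref{assn:multi_distributionMED}, $\boldsymbol\varepsilon$ and $\sigma\boldsymbol\varepsilon$ have the same conditional law. Therefore
\[
\pi_k=P\bigl(v_k+\varepsilon_k\ge v_m+\varepsilon_m\ \forall m\bigr)=P\bigl(v_k+\varepsilon_j\ge v_j+\varepsilon_k,\ v_k+\varepsilon_j\ge v_m+\varepsilon_m\ \forall m\ne j,k\bigr).
\]
I compare this with
\[
\pi_j=P\bigl(v_j+\varepsilon_j\ge v_k+\varepsilon_k,\ v_j+\varepsilon_j\ge v_m+\varepsilon_m\ \forall m\ne j,k\bigr).
\]
Since $v_j\ge v_k$, each inequality defining the event in the $\pi_k$ expression implies the corresponding inequality in the $\pi_j$ expression: $\varepsilon_j-\varepsilon_k\ge v_j-v_k$ implies $\varepsilon_j-\varepsilon_k\ge v_k-v_j$, and $v_k+\varepsilon_j\ge v_m+\varepsilon_m$ implies $v_j+\varepsilon_j\ge v_m+\varepsilon_m$. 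So the first event is contained in the second, and $\pi_k\le\pi_j$.

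Two technical points remain. The first is that "a.e. in $\mathbf x$" qualifies the conditions, so the conclusion holds a.e. I also need the conditioning events $\{U_m=U^{J:J}\}$ to have positive probability for the conditional probabilities $a_j,b_{jm}$ to be defined. Where $\pi_m(\mathbf x)=0$ the corresponding terms vanish, so the decomposition survives with any convention for $a_j,b_{jm}$ there. The second point, which I expect to be the main obstacle, is the role of the convexity and nonempty-interior assumptions. They are not needed for the weak statements proved above. They would be needed only if one wanted strictness, i.e. $v_j>v_k\Rightarrow\pi_j>\pi_k$, which requires the difference event to have positive mass. Since the theorem deliberately states only the weak implication and its converse-type strict form, the argument above suffices. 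I would still remark that near-boundary index values can make $\pi_j=\pi_k$ even when $v_j\ne v_k$, which is why the exact equivalence in \eqref{ECONmedianmulti} is not claimed.
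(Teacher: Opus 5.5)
Your proof is correct and follows the paper's structure. You use the law of total probability to get $P(Y=j\mid\mathbf x)-P(Y=k\mid\mathbf x)=(a-b)(\pi_j-\pi_k)$, show that $\pi$ is monotone in the indices, and obtain the strict implication by contradiction. The only difference is the monotonicity step: you get $\pi_k\le\pi_j$ by a direct event inclusion after swapping $\varepsilon_j$ and $\varepsilon_k$, whereas the paper writes both winning probabilities as integrals of $\partial C_{\mathbf x}/\partial u_1$ and uses its monotonicity in the remaining arguments. Your version is slightly more elementary, since it needs only invariance of the law of $\boldsymbol\varepsilon\mid\mathbf x$ under the transposition of $j$ and $k$.
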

This theorem contains a modified version of \eqref{ECONmedianmulti}. This modification is needed to deal with the case of both  $x_j\beta_j$ and $x_k \beta_k$ being outside the interior of the support of $\boldsymbol{\varepsilon}|\mathbf{x}$ and, thus, is only relevant when such support is bounded in some directions. For brevity and also due to multiplicity of options,  I do not focus on near-necessity formulations (analogous to \textbf{B}
and \textbf{C} in theorems \ref{prop:median} and \ref{prop:copula}).

\citet{Manski1975} is a special case of Theorem \ref{th:multimedian} where the joint conditional copula $C_{\mathbf{x}}$ is the independence copula and $P (Y = j|U_j = U^{J:J}, \mathbf{x})=1$, $P (Y = j|U_m = U^{J:J}, \mathbf{x})=0$, $m\neq j$, a.e.

Example \ref{ex:simple_multi} below is that of multinomial choice models that satisfy the sufficient conditions of Theorem \ref{th:multimedian} but go beyond the standard random utility paradigm. 


\begin{example} 
\label{ex:simple_multi}

Throughout these examples, we will denote $\mathbf{U}=(U_1,\ldots,U_J)^{\top}$. 

1. \textit{Mistakes in the implementation}.   With probability $\alpha(\mathbf{x})\in(1/J,1)$, the agent correctly
chooses the utility-maximizing option, otherwise she randomizes uniformly over the
nonmaximizing alternatives.   
This model allows for mistakes in implementation even when the agent correctly identifies the utility-maximizing alternative. 

Conditions (1)-(3) of Theorem \ref{th:multimedian} are satisfied as 
 $P(Y=j\mid U_j=U^{J:J},\mathbf{x})=\alpha(\mathbf{x})$, while for any $m\neq j$,
we have $P(Y=j\mid U_m=U^{J:J},\mathbf{x})=\frac{1-\alpha(\mathbf{x})}{J-1}$.  Condition 1 holds because $\alpha(\mathbf{x})>\frac{1}{J}$ and, hence, $\alpha(\mathbf{x}) - (1-\alpha(\mathbf{x}))/(J-1) > 0$.

2. \textit{Random shortlist around the utility maximizer.} 
Suppose the agent identifies the utility-maximizing alternative and forms
a shortlist of fixed size $r\in\{2,\ldots,J-1\}$ that always contains
that alternative. The remaining $r-1$ members are drawn uniformly without
replacement from the other $J-1$ alternatives. The agent then chooses
uniformly from the shortlist. If $U_j=U^{J:J}$, then $P(Y=j\mid U_j=U^{J:J},\mathbf{x})=\frac{1}{r}$. If instead $U_m=U^{J:J}$ for some $m\neq j$, alternative $j$ enters the
shortlist with probability $(r-1)/(J-1)$, and hence $P(Y=j\mid U_m=U^{J:J},\mathbf{x})
=
\frac{r-1}{r(J-1)}$. Conditions (2) and (3) of Theorem~\ref{th:multimedian} therefore hold,
while Condition (1) follows from $\frac{1}{r}-\frac{r-1}{r(J-1)}
= \frac{J-r}{r(J-1)}>0.$

\end{example}

\subsection{Geometric structure of the ``median''  multinomial model}

To make progress on what we can consider a ``quantile'' multinomial choice model, we give a geometric interpretation of the \citet{Manski1975} model reviewed in Section \ref{sec:reviewManski} as we associate it with a ``median'' benchmark for multinomial choice. 

Since in \citet{Manski1975}  an individual always makes a choice,  the vector of choice probabilities $\mathfrak{P}(\mathbf{x}) = (P(Y=1 |\mathbf{x}), \ldots, P(Y=J|\mathbf{x}))^{\top}$ lies in the $(J-1)$-dimensional simplex $\Delta_{J-1}$, as already mentioned above. The geometry of \eqref{ECONmedianmulti} is striking.

First, under total indifference \begin{equation}
    \label{totalindiff}
x_1\beta_1=\ldots =x_J 
\beta_J 
\end{equation} in the index space,   such an index collection  is mapped by the model to the barycentre $\pi^*:=(1/J \ldots,1/J)^{\top} \in \Delta_{J-1}$, as implied by exchangeability in Assumption \ref{assn:multi_distributionMED}.

Under pairwise indifference 
{\small\begin{equation}
    \label{partialindiff}
x_{k_1}\beta_{k_1}=x_{k_2}\beta_{k_2}, \quad k_1 \neq k_2,
\end{equation}}model
\eqref{ECONmedianmulti} implies $P(Y=k_1|\mathbf{x}) = P(Y=k_2|\mathbf{x})$.
Pairwise indifference therefore maps into the hyperplane
$\{p_{k_1}=p_{k_2}\}$ in $\Delta_{J-1}$. Taken across all pairs,
these hyperplanes partition $\Delta_{J-1}$ into $J!$ regions
corresponding to the possible complete strict rankings of the indices.
The attainable probability set inherits the corresponding regions,
although some may be empty. All $J!$ regions are attained only when
the index differences vary sufficiently richly. The regions are pointed convex cones emanating from $\boldsymbol{\pi}^*$,
truncated at the simplex boundary (Figure~\ref{fig:simplex1} for $J=3$). Each of the 6 regions in Figure \ref{fig:simplex1} corresponds to one of the possible six preference relations on the whole set of options according to (\ref{proference_all1})-(\ref{proference_all2}). For example, the region of the simplex formed  by $p_2\leq  p_1$, $p_1\leq p_3$ corresponds to the case $3 \succeq 1 \succeq 2$ either in the index ordering $\stackrel{(index)}{\succeq}$ of probability ordering $\stackrel{(p)}{\succeq}$. This region has a common $1$-dimensional border with the region corresponding to $3 \succeq 2 \succeq 1$ and the region corresponding to $1 \succeq 3 \succeq 2$. The fact that the simplex is partitioned into six regions non-overlapping in their interior is tantamount to transitivity and  completeness  properties of the preferences defined through $\stackrel{p}{\succeq}$ on the set of three options.

In the general case of $J$ options, each of $J!$ regions collects all choice probabilities that can arise under a complete ordering $x_{k_1}\beta_{k_1} \geq  x_{k_2}\beta_{k_2} \ldots \geq x_{k_J}\beta_{k_J}$ for a permutation $(k_1,\ldots,k_J)$ of elements in $\mathcal{J}$. Each such region shares a 
$(J-2)$-dimensional boundary with exactly 
$J-1$ neighboring regions, obtained by swapping the order of two adjacent elements in the sequence $x_{k_1}\beta_{k_1} \geq  x_{k_2}\beta_{k_2} \ldots \geq x_{k_J}\beta_{k_J}$ while holding all other orderings fixed. This partition is what we call the ``median'' benchmark. A more general ``quantile'' version is a different global ranking partition of the ambient space, with a shifted reference point $\boldsymbol{\pi}^*\neq(1/J,\ldots,1/J)^{\top}$ and possibly non-hyperplanar boundaries. 

\begin{figure} 
\centering
\includegraphics[width=0.5\linewidth]{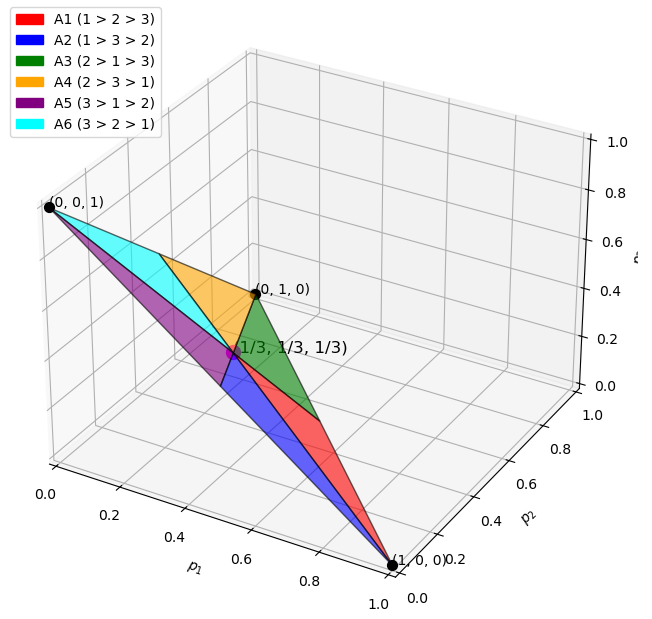}
\caption{Illustration of the model with 3 options under \citet{Manski1975}. Each region corresponds to one complete utility ordering;
all share the vertex $\boldsymbol{\pi}^* = (1/3,1/3,1/3)^\top$.}
\label{fig:simplex1}
\end{figure}

\subsection{Two levels of analysis. Notation and preliminaries}
\label{sec:notation_preliminaries}

The analysis is conducted at two levels, which differ in what is held fixed
and what is allowed to vary. Let $\mathcal{B}_0=\{\|\boldsymbol{\beta}\|=1\}$ denote the admissible,
normalized parameter space for $\boldsymbol{\beta}
=
(\beta_1^{\top},\ldots,\beta_J^{\top})^{\top}$.

A \emph{Behavioral environment}, denoted by $B$, specifies a decision
rule, such as standard random utility or a consideration set model with
given attention  probabilities, and an admissible class
$\mathcal{P}$ of conditional distributions of
$\boldsymbol{\varepsilon}$ given $\mathbf{x}$. Results at this level are uniform over the 
conditional distribution of unobservables in $\mathcal{P}$. A \emph{Structural environment}, denoted by $M$, further fixes a conditional
distribution $\mathbb{P}_{\boldsymbol{\varepsilon}\mid\mathbf{x},\,
\mathbf{x}\in\mathbb{R}^{M_e}}
\in\mathcal{P}$. It can therefore be written as $M
=
\left(
B,
\mathbb{P}_{\boldsymbol{\varepsilon}\mid\mathbf{x},\,
\mathbf{x}\in\mathbb{R}^{M_e}}
\right)$. The conditional distribution of $\boldsymbol{\varepsilon}$ is specified for
every $\mathbf{x}\in\mathbb{R}^{M_e}$, where
$M_e$ is the effective dimension of $\mathbf{x}$ (thus, if e.g. the same covariate enters several indices, it appears only once in $\mathbf{x}$, and perfect linear combinations can be dealt with analogously). This allows conclusions that are not driven by restrictions on the covariate domain.\footnote{This echoes the paper's discussion of enriched support in the binary choice case. At the \emph{Structural environment} level, the support of $\mathbf{x}$ has already been maximally enriched.}
The class $\mathcal{P}$ may be defined by imposing suitable restrictions on
these conditional distributions (analogously to \cite{Manski1975}). 

In principle, one could consider an  \emph{empirical specification} level, which further fixes the marginal
distribution of $\mathbf{x}$ and is  characterized by $\left(
M,\boldsymbol{\beta},\mathbb{P}_{\mathbf{x}}
\right)$.\footnote{If the support of $\mathbb{P}_{\mathbf{x}}$ is smaller than
$\mathbb{R}^{M_e}$, some of the conditional distributions specified by $\mathbb{P}_{\boldsymbol{\varepsilon}\mid\mathbf{x},\,
\mathbf{x}\in\mathbb{R}^{M_e}}$
are not used by the empirical specification.} I don't consider this level theoretically as it needs to be approached on a case-by-case basis.

For any $\boldsymbol{\beta}\in\mathcal{B}_0$, I call $(B,\boldsymbol{\beta})$
a \emph{fixed-$\boldsymbol{\beta}$ behavioral specification}, and $(M,\boldsymbol{\beta})$
a \emph{fixed-$\boldsymbol{\beta}$ structural model}. The scope of results  with respect to $\boldsymbol{\beta}$
is a separate matter. A behavioral specification level or structural model 
result may hold e.g. for a fixed $\boldsymbol{\beta}$, or generically over
$\boldsymbol{\beta}$,  or uniformly over 
$\boldsymbol{\beta}\in\mathcal{B}_0$. Uniformity over $\boldsymbol{\beta}$ is particularly attractive as it means that the same probability based
ranking implication applies for every
$\boldsymbol{\beta}\in\mathcal{B}_0$. This is stronger than establishing,
separately for each $\boldsymbol{\beta}$, a ranking implication that may
depend on the particular parameter value. 

The result of \citet{Manski1975} is at the \emph{Behavioral environment}  level and
has the uniformity-in-$\beta$ scope.  Given that in empirical practice $\boldsymbol{\beta}\in\mathcal{B}_0$ is not known, the uniform-in-$\boldsymbol{\beta}$ results may therefore prove most valuable for parameter identification.\footnote{If the deterministic component of latent utility were  by a more
general function $\phi_j(x_j)$ rather than $x_j\beta_j$, the analogous
distinction would be between a behavioral environment that leaves
$(\phi_1,\ldots,\phi_J)$ unspecified and a behavioral specification that
fixes those functions. Uniformity over the deterministic utility functions
would then require the same probability based ranking implication to hold
throughout the admissible class of such functions.} 
Section~\ref{sec:binary}  results for binary choice are also  at the \emph{Behavioral environment} level and  uniform in $\beta$.

\vskip 0.05in

\noindent
\underline{\textit{Simplex subsets of interest.}}
Fix a structural environment $M$ and a parameter value
$\boldsymbol{\beta}\in\mathcal{B}_0$. For the fixed-$\boldsymbol{\beta}$
structural model $(M,\boldsymbol{\beta})$, let $\mathbf{p}(\mathbf{x})
=
\left(
P(Y=1|\mathbf{x}),
\ldots,
P(Y=J|\mathbf{x})
\right)^{\top}
\in\Delta_{J-1}$
denote the conditional choice-probability vector generated at
$\mathbf{x}$. To simplify notation, we suppress the dependence of
$\mathbf{p}(\mathbf{x})$ on $M$ and $\boldsymbol{\beta}$ whenever the
fixed structural model is clear from context.

Define the fixed-$\boldsymbol{\beta}$ feasible probability set at the \emph{Structural environment} level by
$\Delta^{M,\boldsymbol{\beta}}_{J-1}
=
\operatorname{Cl}
\left\{
\mathbf{p}(\mathbf{x})
:
\mathbf{x}\in\mathbb{R}^{M_e}
\right\}$. Thus, $\Delta^{M,\boldsymbol{\beta}}_{J-1}$ is the closure of the set of
probability vectors attainable under $(M,\boldsymbol{\beta})$ as the
covariates vary over the full covariate domain. The ultimate feasible probability set associated with the structural
environment $M$, allowing the preference parameter $\boldsymbol{\beta}$ to vary, is
$\Delta^{M,\cup_{\boldsymbol{\beta}}}_{J-1}
=
\operatorname{Cl}
\left(
\bigcup_{\boldsymbol{\beta}\in\mathcal{B}_0}
\Delta^{M,\boldsymbol{\beta}}_{J-1}
\right).$
This union collects the probability vectors attainable for at least one
admissible value of $\boldsymbol{\beta}$. 

We can define the corresponding objects at the behavioral specification level with 
$\Delta^{B,\boldsymbol{\beta}}_{J-1}
=
\operatorname{Cl}
\left(
\bigcup_{Q\in\mathcal{P}}
\Delta^{M_Q,\boldsymbol{\beta}}_{J-1}
\right)$
aggregating feasible probability vectors across all conditional 
distributions of unobservables admitted by $B$, while holding $\boldsymbol{\beta}$ fixed. The corresponding ultimate feasible set is
$\Delta^{B,\cup_{\boldsymbol{\beta}}}_{J-1}
=
\operatorname{Cl}
\left(
\bigcup_{\boldsymbol{\beta}\in\mathcal{B}_0}
\Delta^{B,\boldsymbol{\beta}}_{J-1}
\right)$.

Under some behavioral environments, the sets $\Delta^{M,\boldsymbol{\beta}}_{J-1}$,
$\Delta^{B,\boldsymbol{\beta}}_{J-1}$, $\Delta^{M,\cup_{\boldsymbol{\beta}}}_{J-1}$,
$\Delta^{B,\cup_{\boldsymbol{\beta}}}_{J-1}$
may have affine dimension strictly smaller than $J-1$, even when the
conditional support of
$\boldsymbol{\varepsilon}|\mathbf{x}$ is convex and has nonempty interior
in $\mathbb{R}^{J}$. It will be clarified later in the paper under what conditions the relevant
feasible set has full affine dimension.

\vskip 0.05in

\noindent
\underline{\textit{Images of pairwise ordered utility indices.}} I study whether an observed probability vector (unambiguously) reveals the ordering of the underlying deterministic utility indices. Since this question can be posed either conditionally on a fixed value of $\boldsymbol{\beta}$ or uniformly across values of $\boldsymbol{\beta}$,  the corresponding image sets need to be defined separately.

For any $k_1,k_2\in\mathcal{J}$ with $k_1\neq k_2$, and for any
 \emph{fixed-$\boldsymbol{\beta}$ structural model} $(M,\boldsymbol{\beta})$, define
\begin{equation}
\label{eq:defPk1k2_Model}
\mathcal{P}^{+,M,\boldsymbol{\beta}}_{k_1,k_2}
=
\left\{
\mathbf{p}\in\Delta_{J-1}:
\exists\,\mathbf{x}\in\mathbb{R}^{M_e}
\text{ s.t. }
\mathbf{p}(\mathbf{x})=\mathbf{p}, \; \; 
x_{k_1}\beta_{k_1}
>
x_{k_2}\beta_{k_2}
\right\},
\end{equation}
\begin{equation}
\label{eq:defIk1k2_Model}
\mathcal{I}^{M,\boldsymbol{\beta}}_{k_1,k_2}
=
\left\{
\mathbf{p}\in\Delta_{J-1}:
\exists\,\mathbf{x}\in\mathbb{R}^{M_e}
\text{ s.t. }
\mathbf{p}(\mathbf{x})=\mathbf{p}, \; \; 
x_{k_1}\beta_{k_1}
=
x_{k_2}\beta_{k_2}
\right\}.
\end{equation} 
Thus, $\mathcal{P}^{+,M,\boldsymbol{\beta}}_{k_1,k_2}$ collects the choice
probability vectors generated under the \emph{fixed-$\boldsymbol{\beta}$ structural
model} when 
$k_1\stackrel{\mathrm{index}}{\succ}k_2$,
whereas $\mathcal{I}^{M,\boldsymbol{\beta}}_{k_1,k_2}$ collects those
generated under pairwise index indifference. When $\boldsymbol{\beta}$ is allowed to vary within the structural
environment $M$, define the ultimate ordering and indifference images by $\mathcal{P}^{+,M,\cup_{\boldsymbol{\beta}}}_{k_1,k_2}
=
\bigcup_{\boldsymbol{\beta}\in\mathcal{B}_0}
\mathcal{P}^{+,M,\boldsymbol{\beta}}_{k_1,k_2}$ and $\mathcal{I}^{M,\cup_{\boldsymbol{\beta}}}_{k_1,k_2}
=
\bigcup_{\boldsymbol{\beta}\in\mathcal{B}_0}
\mathcal{I}^{M,\boldsymbol{\beta}}_{k_1,k_2}$, respectively. 

For a \emph{behavioral specification} level  for a fixed
$\boldsymbol{\beta}\in\mathcal{B}_0$, define $\mathcal{P}^{+,B,\boldsymbol{\beta}}_{k_1,k_2}
=
\bigcup_{Q\in\mathcal{P}}
\mathcal{P}^{+,M_{Q},\boldsymbol{\beta}}_{k_1,k_2}$, $\mathcal{I}^{B,\boldsymbol{\beta}}_{k_1,k_2}
=
\bigcup_{Q\in\mathcal{P}}
\mathcal{I}^{M_{Q},\boldsymbol{\beta}}_{k_1,k_2}$. 
These sets aggregate the fixed-$\boldsymbol{\beta}$ ordering and indifference
images across all structural models  admitted by $B$. When $\boldsymbol{\beta}$ is allowed to vary within $\mathcal{B}_0$, define the ultimate behavioral specification images by $\mathcal{P}^{+,B,\cup_{\boldsymbol{\beta}}}_{k_1,k_2}
=
\bigcup_{\boldsymbol{\beta}\in\mathcal{B}_0}
\mathcal{P}^{+,B,\boldsymbol{\beta}}_{k_1,k_2}$,
$\mathcal{I}^{B,\cup_{\boldsymbol{\beta}}}_{k_1,k_2}
=
\bigcup_{\boldsymbol{\beta}\in\mathcal{B}_0}
\mathcal{I}^{B,\boldsymbol{\beta}}_{k_1,k_2}$.

\vskip 0.05in 

\noindent \underline{\textit{Support of unobservables and relative interiors.}} If the \citet{Manski1975} setting has $\mathbb{R}^J$ as the
support of $\boldsymbol{\varepsilon}|\mathbf{x}$, and the $J-1$
index differences range over all of $\mathbb{R}^{J-1}$, then every
probability vector in $\relint(\Delta_{J-1})$ is
attainable, while no probability vector on $\partial\Delta_{J-1}$ is
attained. Since the feasible probability sets are defined as closures, it follows that $\Delta^{M,\cup_{\boldsymbol{\beta}}}_{J-1}
= \Delta^{B,\cup_{\boldsymbol{\beta}}}_{J-1}
= \Delta_{J-1}$.
More general properties of the support of
$\boldsymbol{\varepsilon}|\mathbf{x}$ (e.g., requiring only
that it be convex and have nonempty interior in $\mathbb{R}^J$) may
instead allow choice probabilities on the boundary of the relevant
feasible probability set to be attained. In such cases, distinct index
configurations may generate the same boundary probability vector,
limiting the ranking information contained in choice probabilities at
the boundary. For this reason, the ranking analysis in this paper focuses on the
relative interior of the relevant feasible probability set.

\vskip 0.05in 

\noindent
\underline{\textit{Ranking recovery value of choice probabilities.}}
I distinguish two possible scopes for ranking recovery, which are 
fixed-$\boldsymbol{\beta}$ recovery 
and uniform recovery across $\boldsymbol{\beta}$. These scopes are distinct
from the level of analysis (\emph{Behavioral environment}, \emph{Structural environment}) introduced above.

A \emph{fixed-$\boldsymbol{\beta}$ recovery} result at the \emph{Structural environment} $M$ level requires the three sets $\mathcal{P}^{+,M,\boldsymbol{\beta}}_{k_1,k_2}$, $\mathcal{I}^{M,\boldsymbol{\beta}}_{k_1,k_2}$, $\mathcal{P}^{+,M,\boldsymbol{\beta}}_{k_2,k_1}$ to be pairwise disjoint in $\relint(\Delta^{M,\boldsymbol{\beta}}_{J-1})$.  Equivalently, the sign of $x_{k_1}\beta_{k_1}-x_{k_2}\beta_{k_2}$ must be constant over every subset $S_{\mathbf{p}}=\left\{\mathbf{x}\in\mathbb{R}^{M_e}:\mathbf{p}(\mathbf{x})=\mathbf{p}\right\}.$ A result is \emph{generic in $\boldsymbol{\beta}$} if this fixed-$\boldsymbol{\beta}$
recovery property holds for every
$\boldsymbol{\beta}\in\mathcal{B}^{*}$, where
$\mathcal{B}^{*}$ is a full measure subset of
$\mathcal{B}_0$. By contrast, recovery \emph{uniform over $\boldsymbol{\beta}$} requires the
stronger condition that 
$\mathcal{P}^{+,M,\cup_{\boldsymbol{\beta}}}_{k_1,k_2}$, 
$\mathcal{I}^{M,\cup_{\boldsymbol{\beta}}}_{k_1,k_2}$,
$\mathcal{P}^{+,M,\cup_{\boldsymbol{\beta}}}_{k_2,k_1}$
be pairwise disjoint in $\relint
(
\Delta^{M,\cup_{\boldsymbol{\beta}}}_{J-1}
).$ A \emph{fixed-$\boldsymbol{\beta}$ recovery} at the \emph{Behavioral environment} level 
requires $\mathcal{P}^{+,B,\boldsymbol{\beta}}_{k_1,k_2}$, $\mathcal{I}^{B,\boldsymbol{\beta}}_{k_1,k_2}$, $\mathcal{P}^{+,B,\boldsymbol{\beta}}_{k_2,k_1}$ 
to be pairwise disjoint within $\relint
(
\Delta^{B,\boldsymbol{\beta}}_{J-1}
).$

Fixed-$\boldsymbol{\beta}$ recovery, if it holds, 
has the following implication at the \emph{Structural environment} (\emph{Behavioral environment}) level. 
\begin{enumerate}
\item[\textit{Implication 1.}]
The strict ordering images do not overlap: $\mathcal{P}^{+,M,\boldsymbol{\beta}}_{k_1,k_2}
\cap
\mathcal{P}^{+,M,\boldsymbol{\beta}}_{k_2,k_1}
\cap
\relint
(
\Delta^{M,\boldsymbol{\beta}}_{J-1}
)
=
\varnothing$ $\left(\mathcal{P}^{+,B,\boldsymbol{\beta}}_{k_1,k_2}
\cap
\mathcal{P}^{+,B,\boldsymbol{\beta}}_{k_2,k_1}
\cap
\relint
(
\Delta^{B,\boldsymbol{\beta}}_{J-1})
=
\varnothing\right)$. Thus, an attainable probability vector $\mathbf{p}=\mathbf{p}(\mathbf x)=\mathbf{p}(\widetilde{\mathbf{x}})$  for a given $M$ (for any $M$ compliant with the \emph{behavioral environment} $B$) cannot imply both $x_{k_1}\beta_{k_1}>x_{k_2}\beta_{k_2}$
and 
$\widetilde{x}_{k_1}\beta_{k_1}<\widetilde{x}_{k_2}\beta_{k_2}$.  

\item[\textit{Implication 2}.]
Suppose fixed-$\boldsymbol{\beta}$ recovery holds for every distinct pair of
alternatives. For any permutation
$(k_1,\ldots,k_J)$ of $\mathcal{J}$, we have $\mathbf{p}
\in
\bigcap_{j=1}^{J-1}
\mathcal{P}^{+,M,\boldsymbol{\beta}}_{k_j,k_{j+1}}
\cap
\relint
(
\Delta^{M,\boldsymbol{\beta}}_{J-1})$
implies the complete strict ordering $x_{k_1}\beta_{k_1}
>
x_{k_2}\beta_{k_2}
>
\cdots
>
x_{k_J}\beta_{k_J}$ in the given \emph{Structural environment} $M$ $\bigl(\mathbf{p}
\in
\bigcap_{j=1}^{J-1}
\mathcal{P}^{+,B,\boldsymbol{\beta}}_{k_j,k_{j+1}}
\cap
\relint
(
\Delta^{B,\boldsymbol{\beta}}_{J-1}
)$
implies the complete strict ordering $x_{k_1}\beta_{k_1}
>
x_{k_2}\beta_{k_2}
>
\cdots
>
x_{k_J}\beta_{k_J}$ in any \emph{Structural environment} $M$ compliant with the \emph{Behavioral environment} $B$$\bigr)$.  
\\
Regions corresponding to different complete strict orderings are pairwise
disjoint.

\item[\textit{Implication 3}.]
For an attainable probability vector $\mathbf{p}$, define $k_1
\succ_{\mathbf{p}}^{M,\boldsymbol{\beta}}
k_2
\quad\Longleftrightarrow\quad
\mathbf{p}
\in
\mathcal{P}^{+,M,\boldsymbol{\beta}}_{k_1,k_2}.$ 
If fixed-$\boldsymbol{\beta}$ recovery holds for every pair, then 
$
\mathbf{p}
\in
\mathcal{P}^{+,M,\boldsymbol{\beta}}_{k_1,k_2}
\cap
\mathcal{P}^{+,M,\boldsymbol{\beta}}_{k_2,k_3}
\quad \Rightarrow \quad 
\mathbf{p}
\in
\mathcal{P}^{+,M,\boldsymbol{\beta}}_{k_1,k_3}.
$ 
The recovered relation $\succ_{\mathbf{p}}^{M,\boldsymbol{\beta}}$ is therefore transitive (the relation recovered from $\mathbf{p}$ coincides with the numerical
ordering of the deterministic utility indices).

An analogous implication holds at the \emph{Behavioral environment} level
after replacing the structural environment objects by their
behavioral environment counterparts. \\
\end{enumerate}

Uniform recovery across $\boldsymbol{\beta}$  rules out cross-$\boldsymbol{\beta}$ ambiguity and ensures the same probability
vector cannot be generated with $k_1$ ranked above $k_2$ under one admissible
value of $\boldsymbol{\beta}$ and with $k_2$ ranked above $k_1$ under another.
The analogous definition applies at the \emph{Behavioral environment} level.

\vskip 0.05in 

\noindent \underline{\textit{Back to \citet{Manski1975}.}} This was discussed earlier. I revisit it here using the new notation. Under full support of $\PP_{\boldsymbol{\varepsilon}|\mathbf{x}, \mathbf{x} \in \mathbb{R}^{M_e}}$, it holds that  $\relint(\Delta^{B,\boldsymbol{\beta}}_{J-1})=\relint(\Delta^{M,\boldsymbol{\beta}}_{J-1})=\relint(\Delta^{B, \cup_{\boldsymbol{\beta}}}_{J-1})=\relint(\Delta^{M, \cup_{\boldsymbol{\beta}}}_{J-1})=\relint(\Delta_{J-1})$.  Ranking recovery holds uniformly  over $\boldsymbol{\beta}$ for any pair $k_1,k_2 \in \mathcal{J}$, $k_1 \neq k_2$, and $\mathcal{P}^{+,B, \cup_{\boldsymbol{\beta}}}_{k_1,k_2} \cap \relint(\Delta^{B, \cup_{\boldsymbol{\beta}}}_{J-1})  =  \relint(\Delta^{B, \cup_{\boldsymbol{\beta}}}_{J-1}) \cap \{p_{k_1} > p_{k_2}\}$, $\mathcal{I}^{B, \cup_{\boldsymbol{\beta}}}_{k_1,k_2} \cap \relint(\Delta^{B, \cup_{\boldsymbol{\beta}}}_{J-1})  = \relint(\Delta^{B, \cup_{\boldsymbol{\beta}}}_{J-1}) \cap \{p_{k_1} = p_{k_2}\}$.

\vskip 0.05in

The remainder of the analysis examines how far the global simplex partitioning and ranking recovery properties of the \citet{Manski1975} benchmark survive the paper's behavioral and distributional extensions. Table~\ref{tab:multinomial_roadmap} summarizes the main results across robustness levels.

\begin{table}[tbp]
\centering
\caption{Roadmap of multinomial ranking results}
\label{tab:multinomial_roadmap}
\footnotesize
\setlength{\tabcolsep}{3pt}
\renewcommand{\arraystretch}{1.08}
\begin{tabular}{@{}p{0.17\textwidth}p{0.21\textwidth}p{0.37\textwidth}p{0.19\textwidth}@{}}
\hline
\textbf{Robustness level}
&
\textbf{Restriction}
&
\textbf{Ranking result}
&
\textbf{$\boldsymbol\beta$ scope}
\\
\hline

\multicolumn{4}{@{}l}{\textit{Behavioral route: limited attention with exchangeable unobservables}}
\\[0.15em]

Behavioral environment
&
Rank-one balance
&
Common linear pairwise ranking rules. Under sufficiently rich joint
index-difference variation, the rank conditions are also necessary for
the corresponding robust representation. For $J\geq4$, more general
linear separators using additional probability coordinates are
characterized in the Online Appendix.
&
The ranking rule is common to all admissible $\boldsymbol\beta$; the
attainable ranking regions need not be.
\\[0.45em]

Behavioral environment
&
Rank conditions fail
&
Opposite pairwise rankings may generate overlapping probability images;
with sufficiently rich index difference variation, the corresponding
robust linear representation fails.
&
The failure concerns robustness across admissible conditional unobservables 
distributions; uniformity over $\boldsymbol\beta$ is a separate requirement. 
\\[0.45em]

Structural environment
&
Common location-scale structure 
&
A common injective, generally nonlinear map $\Psi$ recovers the normalized
indices, and hence all pairwise rankings, from every attainable probability
vector.
&
$\Psi$ is common across $\boldsymbol\beta$; changing $\boldsymbol\beta$
changes only the attainable normalized-index set
$\mathcal D_{\boldsymbol\beta}$ and hence its probability-space trace.
\\
\hline

\multicolumn{4}{@{}l}{\textit{Distributional route: utility maximization with non-exchangeable unobservables}}
\\[0.15em]

Behavioral environment
&
General admissible distribution class
&
No generic probability-space ranking rule is common across the admissible
unobservables distributions.
&
In general there is no distribution- and $\boldsymbol\beta$-uniform
ranking partition.
\\[0.45em]

Structural environment
&
Common location--scale structure
&
A common injective, generally nonlinear map $\Psi$ restores ranking
recovery, although its geometry may be asymmetric.
&
$\Psi$ is common across $\boldsymbol\beta$; the attainable ranking regions
and their geometry depend on $\mathcal D_{\boldsymbol\beta}$.
\\
\hline
\end{tabular}
\end{table}

\subsection{ Consideration sets: the behavioral route. ``Quantile'' version?} 


\subsubsection{Behavioral environment} 

I start by formulating a set of basic assumptions on the consideration set process. 
\begin{assumption}
    \label{assn:indCS} The consideration set formation is independent of $(\mathbf{x}, \boldsymbol{\varepsilon})$.
\end{assumption}
Under Assumption~\ref{assn:indCS}, denote by $\gamma_A$ the
probability of the consideration set $A\subseteq \mathcal{J}$  and write
$P(Y=j|\mathbf{x}) = \sum_{A\in\mathcal{A}:j\in A}\gamma_A P(Y=j| A,\mathbf{x})$, 
where $\mathcal{A}$ denotes the collection of all consideration sets and $P(Y = j | A, \mathbf{x})$ denotes the probability of choosing $j$ when considering a given set $A$. Let's refer to consideration sets that arise with positive probability, that is, sets $A$ such that $\gamma_A>0$, as \textit{active} consideration sets. Recall that an analogous independence assumption was also our starting point in the binary choice setting. 

\begin{assumption}\label{assn:noemptyCS} We suppose that $\gamma_{\varnothing}=0$\footnote{Some existing work allows for empty consideration sets by interpreting this event as leading to a default or no-choice outcome (see e.g. \citet{ManziniMariotti2014}, \citet{Horan2019}). We could allow $\gamma_{\varnothing}>0$ and specify the corresponding choice rule, but doing so would not substantively affect our results. It would only introduce additional notation and case distinctions. For this reason, we set $\gamma_{\varnothing}=0$ throughout.} and every element $j \in \mathcal{J}$ belongs to some active consideration set.  
\end{assumption}

\begin{assumption}[Standard random utility paradigm within a consideration set] \label{assn:rationalCS} Given a consideration set $A$, an agent $i$ chooses an option $j$ from $A$ with the highest latent utility $U_{ij}^*=x_{ij}\beta_j+\varepsilon_{ij}$.   
\end{assumption}

Assumptions \ref{assn:noemptyCS} and \ref{assn:rationalCS} guarantee that an economic agent always makes a choice so $\sum_{j=1}^J P(Y=j|\mathbf{x})=1$ for all $\mathbf{x}$. Thus, under these  assumptions we once again can look at the probability simplex $\Delta_{J-1}$ to characterize preference relations in terms of points from this simplex. The second part of Assumption \ref{assn:noemptyCS}  ensures that every alternative can, in principle, be chosen with a positive probability.

Finally, I give a definition of connectedness of elements in a given consideration sets  structure. Its importance is discussed later.

\begin{definition}[Consideration set structure connectivity] 
\label{def:cs_connected}
Consider  $J$ options and exogenous consideration set structure for $\mathcal{J}$ described by probabilities $\{\gamma_{A}\}_{A \in \mathcal{A}}$. Define the set of active non-singleton consideration sets as 
$\mathcal{A}_{\geq 2} = \{ A \in \mathcal{A} : |A| \geq 2, \gamma_A > 0 \}.$ 

We will say that \textit{options $\ell$ and $h$ are connected} in the structure if either $\ell=h$, or there exists a sequence of sets $ A_1, A_2, \ldots, A_m \in \mathcal{A}_{\geq 2} $ such that:
$ \ell \in A_1 $,
$ h \in A_m $,
$ A_k \cap A_{k+1} \neq \emptyset $ for $ k = 1, 2, \ldots, m-1 $.
This sequence forms a path from $ \ell $ to $ h $. 

We will say that the \textit{consideration set structure for $\mathcal{J}$ is connected} if any two options in $\mathcal{J}$ are connected. 
\end{definition}

 The connectivity condition ensures that in our model all pairs of candidates $ (\ell, h) $ can be compared probabilistically, either directly (if they share an active  consideration set) or indirectly (through a chain of active consideration sets). Effectively, the connectedness will guarantee that utilities propagate through overlapping sets.

When $J=3$, connectedness of the whole structure is equivalent to requiring that each option $j$ belongs to at least one active non-singleton consideration set, that is, to some set in $\mathcal{A}_{\geq 2}$. Hence, connectedness can fail only if some option appears only as a singleton consideration set. For $J\geq 4$, failures of connectedness can take more varied forms. For example, if $J=4$ and the only active non-singleton consideration sets are $A_1=\{1,2\}$ and $A_2=\{3,4\}$, then the structure is not connected since options $1$ and $2$ are isolated from options $3$ and $4$ in the decision-making process. If, however, there is also an active consideration set $A_3=\{1,3\}$, then connectedness is restored. In that case, for instance, options $2$ and $4$ are connected through the path   $\{1,2\} \to \{1,3\} \to \{3,4\}$.

The knowledge of $\{\gamma_A\}_{A \in \mathcal{A}}$ would, of course, reveal the  nature of consideration set structure connectivity. But even without such knowledge subsets of connected elements can be identified from choice probabilities, as a consequence of Proposition \ref{prop:notconnected} below.

 \begin{prop}[Testable implications for the lack of connectivity] \label{prop:notconnected} 
Suppose Assumptions \ref{assn:indCS}--\ref{assn:rationalCS} hold. If the consideration set structure for $\mathcal{J}$ is not connected in the sense of Definition \ref{def:cs_connected}, then there exists a proper subset $B\subset \mathcal{J}$ such that $P(Y \in B|\mathbf{x}) $ is a.e. constant in $\mathbf{x}$ for any model with the given \textit{Behavioral environment}, with value in $(0,1)$.
 \end{prop}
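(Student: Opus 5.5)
The plan is to use the non-connectivity to produce a block of options $B$ that is never compared with its complement inside any active consideration set. We then show that $P(Y\in B|\mathbf{x})$ equals a constant sum of attention probabilities.

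First, build $B$ explicitly. Since the structure is not connected, there exist options $\ell,h$ that are not connected. Connectedness in Definition~\ref{def:cs_connected} is reflexive and symmetric, and it is transitive because paths concatenate: if $A_m\ni h$ ends one path and $A'_1\ni h$ starts the next, the overlap condition is met through $h$. So it is an equivalence relation on $\mathcal{J}$. Let $B$ be the equivalence class of $\ell$; then $h\notin B$, so $B$ is a proper nonempty subset. The key combinatorial claim is that every active non-singleton set $A\in\mathcal{A}_{\geq 2}$ satisfies $A\subseteq B$ or $A\cap B=\varnothing$.

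To prove the claim, suppose some $A$ contains $j\in B$ and $k\notin B$. If $j=\ell$, the one-set path $A$ connects $\ell$ and $k$. If $j\neq\ell$, append $A$ to a path from $\ell$ to $j$; the overlap holds through $j$, so again $\ell$ and $k$ are connected. Either way $k\in B$, a contradiction. Singletons trivially satisfy the dichotomy. Hence every active set $A$ (with $\gamma_\varnothing=0$ by Assumption~\ref{assn:noemptyCS}) lies entirely inside $B$ or entirely inside $B^c$.

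Next, compute the probability. By Assumption~\ref{assn:indCS},
\[
P(Y\in B|\mathbf{x})=\sum_{A\in\mathcal{A}}\gamma_A P(Y\in B| A,\mathbf{x}).
\]
By Assumption~\ref{assn:rationalCS} the agent always chooses within $A$. With probability one there are no ties, because the distribution of $\boldsymbol{\varepsilon}|\mathbf{x}$ is absolutely continuous. So $P(Y\in B|A,\mathbf{x})=1$ if $A\subseteq B$ and $0$ if $A\cap B=\varnothing$. Therefore
\[
P(Y\in B|\mathbf{x})=\sum_{A\subseteq B}\gamma_A=:c_B
\]
for a.e.\ $\mathbf{x}$. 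This constant depends only on $\{\gamma_A\}$, not on $\boldsymbol{\beta}$ or the distribution of unobservables, so it holds for every model in the given \emph{Behavioral environment}.

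Finally, show $c_B\in(0,1)$. By Assumption~\ref{assn:noemptyCS}, $\ell$ belongs to some active set, and by the dichotomy that set lies in $B$; hence $c_B>0$. Likewise $h$ lies in some active set contained in $B^c$, so $1-c_B=\sum_{A\subseteq B^c}\gamma_A>0$. Here we use $\sum_A\gamma_A=1$ and the fact that every active set falls on exactly one side.

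The only delicate step is the combinatorial claim, in particular treating connectivity as an equivalence relation and handling the case $j=\ell$. Absolute continuity is needed only to dispose of ties, so the conclusion holds "a.e." rather than pointwise.
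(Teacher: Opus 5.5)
Your proof is correct and matches the paper's argument. You take the connectivity class of an element that is not connected to some other element, show that no active menu straddles it, conclude $P(Y\in B|\mathbf{x})=\sum_{A\subseteq B}\gamma_A$, and then use Assumption~\ref{assn:noemptyCS} on both sides to place this constant in $(0,1)$. You also prove two steps the paper only asserts (connectivity is an equivalence relation, and active sets cannot straddle a class); one small note is that absolute continuity is not needed, since the chosen option lies in $A$ regardless of ties, so the identity holds at every $\mathbf{x}$.
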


 Proposition \ref{prop:notconnected} implies that $  \mathcal{J}  $ admits a partition into subsets $  B_1, \dots, B_M  $ ($  M \geq 1  $) where elements are connected if and only if they belong to the same $  B_m  $. In case all the elements in $\mathcal{J}$ are connected, we have $M=1$ with $B_1=\mathcal{J}$.

From now on, by \textit{Behavioral environment}  throughout the consideration set discussion we will mean (i) the consideration set paradigm that satisfies  Assumptions \ref{assn:indCS}- \ref{assn:rationalCS} with a given collection $\{\gamma_A\}_{A \in \mathcal{A}}$\footnote{This collection may be unknown to an econometrician.}  ; (ii) family $\mathcal{P}$ of conditional distributions $\mathbb{P}_{\boldsymbol{\varepsilon}|\mathbf{x}, \mathbf{x} \in \mathbb{R}^{M_e}}$ that collects all joint distributions that satisfy Assumption \ref{assn:multi_distributionMED}.      Note that the absolute continuity of the joint distribution in Assumption \ref{assn:multi_distributionMED} guarantees that ties $U^*_j=U^*_{k}$ for $k \neq j$ happen with probability 0 and, thus, can effectively be ignored in our derivations.

 \vskip 0.1in 
 
\noindent \textit{Probability simplex.} The form and affine and topological dimensions of $\Delta^{B, \cup_{\beta}}_{J-1}$ depend on how elements are connected within the consideration set structure. 

 Using the partitioning of  $\mathcal{J}$ into $B_1$, \ldots, $B_M$, $M\geq 1$,  as  discussed after Proposition \ref{prop:notconnected}, 
{\small$$\Delta^{B, \cup_{\boldsymbol{\beta}}}_{J-1}  = \bigcap_{m=1}^M \left\{\mathbf{p} \in \Delta_{J-1}: \; \sum_{\widetilde{A} \in \mathcal{A}:\widetilde{A} \subseteq S } \gamma_{\widetilde{A}} \leq \sum_{k \in S} p_k \leq \sum_{\widetilde{A} \in \mathcal{A}:\widetilde{A} \cap S \neq \emptyset } \gamma_{\widetilde{A}} \qquad \forall S \subseteq B_m\right\}.$$}If any two elements in $\mathcal{J}$ are connected in the sense of Definition \ref{def:cs_connected}, then 
$\Delta^{B, \cup_{\boldsymbol{\beta}}}_{J-1} $ has affine and topological dimensions equal to $J-1$.

Given the \textit{Behavioral environment}, under Assumption \ref{assn:multi_distributionMED} the total indifference relation \eqref{totalindiff} is mapped into a single point $\boldsymbol{\pi}^*=(\pi_1^*, \ldots, \pi_J^*)$ such that 
\begin{equation}
\label{pistar}
\pi_j^* = \sum_{A: j \in A} \frac{\gamma_A}{|A|}, \quad j=1, \ldots, J.
\end{equation}
This does not depend on the connectivity of elements in the consideration set structure. 

\subsubsection{Partitioning by linear hyperplanes (Behavioral environment level)} 

Let's now ask when the model delivers a ``quantile'' partition of the ambient probability domain $\relint(\Delta^{B,\cup_{\boldsymbol{\beta}} }_{J-1})$ by linear hyperplanes at the \textit{Behavioral environment} level and uniformly over $\boldsymbol{\beta}$. When looking at the attainable choice probability vectors, this is effectively asking when 
$\mathcal{I}^{B, \cup_{\boldsymbol{\beta}}}_{k_1,k_2}$ is in a linear hyperplane in $\relint(\Delta^{B,\cup_{\boldsymbol{\beta}} }_{J-1})$ and $\mathcal{P}^{+,B, \cup_{\boldsymbol{\beta}}}_{k_1,k_2} \cap \mathcal{P}^{+,B, \cup_{\boldsymbol{\beta}}}_{k_2,k_1} \cap \relint(\Delta^{B,\cup_{\boldsymbol{\beta}} }_{J-1}) = \varnothing$ for any $k_1 \neq k_2$.
For interpretability I first focus on $J=3$  and then give the extension to $J\geq 4$ in
Appendix B. For $J=3$, \eqref{pistar}  becomes $\pi_j^* = \gamma_{\{1,2,3\}}/3+\gamma_{\{j,k\}}/2+\gamma_{\{j,h\}}/2+\gamma_{\{j\}}$, $k \neq j, k\neq h, h\neq j.$ 

For the sufficient direction in Theorem \ref{th:linearJ3}  below, no additional assumptions on the range obtained by the index vector are required. For the converse, suppose that, as $\mathbf{x}$ and $\boldsymbol{\beta}$ vary over their stated domains, configurations satisfying $x_{k_1}\beta_{k_1}=x_{k_2}\beta_{k_2}$ allow the remaining index to vary arbitrarily relative to their common value.

\begin{theorem} \label{th:linearJ3} Suppose Assumptions \ref{assn:multi_distributionMED}--\ref{assn:rationalCS}  hold and $J=3$. Let elements $k_1, k_2$, $k_1 \neq k_2$, in $\mathcal{J}$ be connected in the sense of Definition \ref{def:cs_connected}. Then: 

\textbf{A.} If the rank of
\(
C_{k_1,k_2} = \begin{pmatrix}
\gamma_{\{1,2,3\}} & \gamma_{\{k_1,k_3\}} \\
\gamma_{\{1,2,3\}} & \gamma_{\{k_2,k_3\}}
\end{pmatrix}
\)
is strictly less than $2$, then $\mathcal{I}^{B,\cup_{\boldsymbol{\beta}}}_{k_1,k_2}\cap\relint(\Delta^{B,\cup_{\boldsymbol{\beta}}}_2)$
is contained in a linear hyperplane of dimension at most $J-2=1$. Under the additional index variation condition above,
the converse also holds.

\textbf{B.} When $\mathrm{rank}(C_{k_1,k_2})=1$, the indifference set  $\mathcal{I}^{B,\cup_{\boldsymbol{\beta}}}_{k_1,k_2}
\cap
\relint\!\left(\Delta^{B,\cup_{\boldsymbol{\beta}}}_2\right)$ coincides with the intersection of
$\relint(\Delta^{B,\cup_{\boldsymbol{\beta}}}_2)$ and the linear hyperplane
{\small$$(\gamma_{\{1,2,3\}}+\gamma_{\{k_2,k_3\}})
(p_{k_1}-\pi^*_{k_1})
=
(\gamma_{\{1,2,3\}}+\gamma_{\{k_1,k_3\}})
(p_{k_2}-\pi^*_{k_2}).$$}
Set 
$\mathcal{P}^{+,B, \cup_{\boldsymbol{\beta}}}_{k_1,k_2}\cap \relint(\Delta^{B, \cup_{\boldsymbol{\beta}}}_{2})$ coincides with 
\begin{multline*}  \left\{\mathbf{p} \in \relint(\Delta^{B,\cup_{\boldsymbol{\beta}}}_{2}): (\gamma_{\{1,2,3\}}+\gamma_{\{k_2,k_3\}})(p_{k_1}-\pi^*_{k_1})  > (\gamma_{\{1,2,3\}}+\gamma_{\{k_1,k_3\}}) (p_{k_2}-\pi^*_{k_2})\right\}.
\end{multline*}

\textbf{C.} When $\mathrm{rank}(C_{k_1,k_2})=0$, then 
$k_3\in\mathcal{J}\setminus\{k_1,k_2\}$ is not connected to 
$k_1$ or $k_2$, and 
{\small
$$\mathcal{I}^{B,\cup_{\boldsymbol{\beta}}}_{k_1,k_2}
=
\left\{
\mathbf{p}\in\Delta^{B,\cup_{\boldsymbol{\beta}}}_{2}:
p_{k_1}-\pi^*_{k_1}=0,\;
p_{k_2}-\pi^*_{k_2}=0,\;
p_{k_3}-\pi^*_{k_3}=0
\right\},$$}
{\small$$\mathcal{P}^{+,B,\cup_{\boldsymbol{\beta}}}_{k_1,k_2}
\cap
\relint(\Delta^{B,\cup_{\boldsymbol{\beta}}}_{2})
=
\left\{
\mathbf{p}\in\relint(\Delta^{B,\cup_{\boldsymbol{\beta}}}_{2}):
p_{k_1}-\pi^*_{k_1}>0,\;
p_{k_2}-\pi^*_{k_2}<0,\;
p_{k_3}=\pi^*_{k_3}
\right\}.$$}
\end{theorem}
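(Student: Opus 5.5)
The plan is to reduce everything to an explicit formula for $\mathbf{p}(\mathbf{x})-\boldsymbol{\pi}^*$ in terms of the index differences, and then read off the geometry from that formula. Write $v_j=x_j\beta_j$. Under Assumptions \ref{assn:indCS}--\ref{assn:rationalCS} we have $p_j=\sum_{A\ni j}\gamma_A P(Y=j|A,\mathbf{x})$. For a pair set $\{j,k\}$, exchangeability in Assumption \ref{assn:multi_distributionMED} gives $P(Y=j|\{j,k\},\mathbf{x})=H_{\mathbf{x}}(v_j-v_k)$, where $H_{\mathbf{x}}$ is the distribution function of $\varepsilon_j-\varepsilon_k$ evaluated appropriately. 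It satisfies $H_{\mathbf{x}}(t)+H_{\mathbf{x}}(-t)=1$, hence $H_{\mathbf{x}}(0)=1/2$, and it is strictly increasing on the interior of its support. For the full set, let $T_j(\mathbf{x})=P(U_j=U^{3:3}|\mathbf{x})$; exchangeability gives $T_j=1/3$ under total indifference. Subtracting \eqref{pistar} removes the singleton terms, so $p_j-\pi^*_j$ is a $\gamma$-weighted sum of centred pair terms $H-1/2$ and centred triple terms $T_j-1/3$.

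\emph{Part A.} Restrict to $v_{k_1}=v_{k_2}=v$ and set $s=v_{k_3}-v$. The $\{k_1,k_2\}$ term then equals exactly $1/2$ and cancels. Exchangeability of $\boldsymbol{\varepsilon}$ in the coordinates $k_1,k_2$ gives $T_{k_1}=T_{k_2}=:t(s)$. Hence
\[
\begin{pmatrix} p_{k_1}-\pi^*_{k_1}\\ p_{k_2}-\pi^*_{k_2}\end{pmatrix}=C_{k_1,k_2}\begin{pmatrix} t(s)-1/3\\ H(-s)-1/2\end{pmatrix},
\]
and the simplex constraint determines $p_{k_3}-\pi^*_{k_3}$. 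If $\mathrm{rank}\,C_{k_1,k_2}\le 1$, the image of every indifference configuration lies in the column space of $C_{k_1,k_2}$, a line through $\boldsymbol{\pi}^*$, and this holds for every admissible distribution and every $\boldsymbol{\beta}$. For the converse I would use the index-variation condition, which lets $s$ range freely, and choose two admissible distributions (or two values of $s$) for which the vectors $(t-1/3,\,H(-s)-1/2)$ are not collinear. For instance, compare an i.i.d. law with a law whose copula concentrates mass so that $t-1/3$ and $H(-s)-1/2$ have a different ratio. With rank $2$, $C_{k_1,k_2}$ maps these two vectors to non-collinear points of $\relint(\Delta_2^{B,\cup_{\boldsymbol{\beta}}})$, so the indifference image is not contained in a line. \textbf{I expect this construction to be the main obstacle}: exhibiting explicit absolutely continuous exchangeable laws with a controlled ratio $(t-1/3)/(H(-s)-1/2)$, with the resulting points staying in the relative interior.

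\emph{Part B.} Rank one with $J=3$ means either $\gamma_{\{1,2,3\}}=0$ or $\gamma_{\{k_1,k_3\}}=\gamma_{\{k_2,k_3\}}$. In both cases one checks directly that the column space of $C_{k_1,k_2}$ is the stated hyperplane. That the whole line segment inside the relative interior is attained follows by letting $s$ and the distribution vary and applying continuity and the intermediate value theorem. For the strict sets, let $L(\mathbf{p})$ denote the left side minus the right side of the hyperplane equation, and show $L(\mathbf{p}(\mathbf{x}))>0$ whenever $v_{k_1}>v_{k_2}$, considering the two cases separately.
\begin{itemize}
\item If $\gamma_{\{k_1,k_3\}}=\gamma_{\{k_2,k_3\}}=g$, then $L\propto p_{k_1}-p_{k_2}$. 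This equals a positive combination of $H(d)-H(-d)$ (with $d=v_{k_1}-v_{k_2}$), of $H(v_{k_1}-v_{k_3})-H(v_{k_2}-v_{k_3})$, and of $T_{k_1}-T_{k_2}$, each of which is nonnegative by the argument behind Theorem \ref{th:multimedian}.
\item If $\gamma_{\{1,2,3\}}=0$, a direct expansion gives
\[
L=\gamma_{\{k_1,k_2\}}\bigl(\gamma_{\{k_1,k_3\}}+\gamma_{\{k_2,k_3\}}\bigr)\bigl(H(d)-\tfrac12\bigr)+\gamma_{\{k_1,k_3\}}\gamma_{\{k_2,k_3\}}\bigl[H(v_{k_1}-v_{k_3})-H(v_{k_2}-v_{k_3})\bigr].
\]
\end{itemize}
In each case, connectivity of $k_1,k_2$ guarantees that at least one coefficient is positive. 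On the relative interior the relevant arguments lie in the interior of the support of the error differences, which gives strictness. By the symmetry $k_1\leftrightarrow k_2$, opposite orderings give $L<0$, so the three images are separated exactly by the hyperplane.

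\emph{Part C.} Rank zero forces $\gamma_{\{1,2,3\}}=\gamma_{\{k_1,k_3\}}=\gamma_{\{k_2,k_3\}}=0$, so $k_3$ appears only as a singleton and is not connected to $k_1$ or $k_2$. Then $p_{k_3}\equiv\pi^*_{k_3}$ and $p_{k_1}-\pi^*_{k_1}=\gamma_{\{k_1,k_2\}}(H(d)-1/2)=-(p_{k_2}-\pi^*_{k_2})$. This vanishes exactly at $d=0$ and is strictly positive for $d>0$ on the relative interior, with $\gamma_{\{k_1,k_2\}}>0$ by connectivity. That yields both displayed set identities.
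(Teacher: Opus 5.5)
Your sufficiency argument in Part A follows the paper's proof: on the indifference locus you reduce $\mathbf p-\boldsymbol{\pi}^*$ to $C_{k_1,k_2}$ acting on $(t(s)-1/3,\,H(-s)-1/2)$, so rank at most one confines the image to a line through $\boldsymbol{\pi}^*$. So do the rank-one case split in Part B and the isolation argument in Part C. Your explicit expression for $L$ when $\gamma_{\{1,2,3\}}=0$ is more detailed than the paper's $J=3$ argument and mirrors its Appendix B treatment.

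The gap is the converse in Part A. You leave it open and propose constructing special copulas, but no such construction is needed. Fix one admissible $\mathbf{x}$-invariant exchangeable law with full support, e.g.\ i.i.d.\ normal. One law suffices because $\mathcal{I}^{B,\cup_{\boldsymbol{\beta}}}_{k_1,k_2}$ is a union over admissible laws. Then use the index-variation condition to let $s\to\pm\infty$:
\begin{itemize}
\item As $s\to-\infty$, alternative $k_3$ is never chosen, so $(t(s)-1/3,\,H(-s)-1/2)\to(1/6,\,1/2)$.
\item As $s\to+\infty$, the same vector tends to $(-1/3,\,-1/2)$.
\end{itemize}
These limits have determinant $1/12\neq 0$. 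Equivalently, the ratio $(t-1/3)/(H-1/2)$ tends to $1/3$ at one end and $2/3$ at the other. This is exactly the contradiction the paper derives from a putative line $d_1(p_{k_1}-\pi^*_{k_1})=d_2(p_{k_2}-\pi^*_{k_2})$.

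By continuity, two large finite values of $|s|$ already give linearly independent vectors. Full support keeps the resulting probability vectors in $\relint(\Delta^{B,\cup_{\boldsymbol{\beta}}}_2)$: rank two forces $\gamma_{\{1,2,3\}}>0$, so every within-menu probability for a menu crossing a proper subset lies strictly in $(0,1)$.

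You also need a third point, because two points always lie on a line. Take $s=0$: exchangeability gives $t=1/3$ and $H=1/2$, i.e.\ the point $\boldsymbol{\pi}^*$. Since $C_{k_1,k_2}$ is invertible under rank two, $\boldsymbol{\pi}^*$ and the two images are affinely independent. Hence the indifference image is not contained in any line.

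There is also a slip in Part B. When $\gamma_{\{k_1,k_3\}}=\gamma_{\{k_2,k_3\}}$, $L$ is proportional to $(p_{k_1}-\pi^*_{k_1})-(p_{k_2}-\pi^*_{k_2})$, not to $p_{k_1}-p_{k_2}$. The reason is that $\pi^*_{k_1}-\pi^*_{k_2}=\gamma_{\{k_1\}}-\gamma_{\{k_2\}}$ need not vanish. Indeed, at index indifference $p_{k_1}-p_{k_2}=\gamma_{\{k_1\}}-\gamma_{\{k_2\}}$, so the uncentered difference does not carry the sign of $v_{k_1}-v_{k_2}$. The decomposition you list, with only pair and triple terms, is the decomposition of the centered difference, so the argument goes through once restated that way.
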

When the two index differences range over all of $\mathbb{R}^2$, the dimensions of the sets in the theorem can be stated more sharply. In part \textbf{B} (part \textbf{C}), 
$\mathcal{I}^{B,\cup_{\boldsymbol{\beta}}}_{k_1,k_2}
\cap\relint(\Delta^{B,\cup_{\boldsymbol{\beta}}}_{2})$
has dimension $1$ ($0$ in part \textbf{C}), while
$\mathcal{P}^{+,B,\cup_{\boldsymbol{\beta}}}_{k_1,k_2}
\cap\relint(\Delta^{B,\cup_{\boldsymbol{\beta}}}_{2})$
has dimension $2$ ($1$ in part \textbf{C}).

When $\gamma_{\{1,2,3\}}>0$,  condition $\mathrm{rank}(C_{k_1,k_2}) = 1$ is equivalent to
$\gamma_{\{k_1,k_3\}} = \gamma_{\{k_2,k_3\}}$.\footnote{$\gamma_{\{k_1\}}$, $\gamma_{\{k_2\}}$, $\gamma_{\{k_3\}}$ can all be different.} 
One can think of this as attention neutrality of $k_3$ with respect to the pair
$(k_1,k_2)$. It characterizes
when the consideration process does not ``take sides'' between the competing
pair. When $\gamma_{\{1,2,3\}}=0$, the two pair menu probabilities may differ, and $C_{k_1,k_2}$
 has rank one provided at least one of them is positive. If both are zero, the matrix has rank zero.

Corollary \ref{cor:linearJ3allpairs} next  shows that the three pairwise linear separators induce six geometric cells in $\Delta_2$, and that their attainable traces recover the corresponding strict utility rankings.

\begin{corollary}[``Quantile'' linear partitioning for $J=3$ and ranking recovery]\label{cor:linearJ3allpairs} 
Suppose Assumptions~\ref{assn:multi_distributionMED}-\ref{assn:rationalCS} hold and $J=3$.
Let $\mathcal J$ be connected in the sense of
Definition~\ref{def:cs_connected}, and suppose that
$C_{k_1,k_2}$ in Theorem~\ref{th:linearJ3} has rank one for every
pair $k_1\neq k_2$. For each pair $(k_1,k_2)$, let $k_3$ denote the remaining alternative.  Define the ambient indifference line
{\small$$H_{k_1,k_2}
:=
\left\{
\mathbf p\in\relint(\Delta_2):
\bigl(\gamma_{\{1,2,3\}}+\gamma_{\{k_2,k_3\}}\bigr)
(p_{k_1}-\pi^*_{k_1})
=
\bigl(\gamma_{\{1,2,3\}}+\gamma_{\{k_1,k_3\}}\bigr)
(p_{k_2}-\pi^*_{k_2})
\right\}.$$}The three pairwise lines are distinct, intersect at
$\boldsymbol\pi^*$, and divide $\relint(\Delta_2)$
into six nonempty open cells. Moreover, on the attainable probability set,
$H_{k_1,k_2}
\cap
\relint
(\Delta^{B,\cup_{\boldsymbol\beta}}_2)
=
\mathcal I^{B,\cup_{\boldsymbol\beta}}_{k_1,k_2}
\cap
\relint
(\Delta^{B,\cup_{\boldsymbol\beta}}_2).$ 

For each strict complete ordering $j\succ h\succ \ell$, the corresponding attainable ranking region is the trace of the associated ambient cell on $\relint(\Delta^{B,\cup_{\boldsymbol\beta}}_2)$. These traces are pairwise disjoint across different strict rankings (and need not all be nonempty).

\end{corollary}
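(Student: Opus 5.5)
The plan is to reduce everything to Theorem~\ref{th:linearJ3}\textbf{B}, applied pair by pair, and then do some planar geometry in the tangent plane $T=\{\mathbf d\in\mathbb R^3:\sum_j d_j=0\}$ of $\Delta_2$, using deviations $\mathbf d=\mathbf p-\boldsymbol\pi^*$. For each $j$, let $w_j:=\gamma_{\{1,2,3\}}+\gamma_{\{k,h\}}$, where $\{k,h\}=\mathcal J\setminus\{j\}$. The coefficient multiplying $(p_{k_1}-\pi^*_{k_1})$ in $H_{k_1,k_2}$ is exactly $w_{k_1}$. So every separator takes the symmetric form
\[
H_{j,k}=\{w_j d_j=w_k d_k\},
\]
and the strict half-plane from Theorem~\ref{th:linearJ3}\textbf{B} is $\{w_j d_j>w_k d_k\}$.

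\emph{Step 1 (signs of the weights).} First I will show that at most one $w_j$ can vanish. If $w_1=w_2=0$, then $\gamma_{\{1,2,3\}}=\gamma_{\{2,3\}}=\gamma_{\{1,3\}}=0$. Then option $3$ lies in no active non-singleton set, which contradicts connectivity of $\mathcal J$ (for $J=3$ connectivity is equivalent to each option lying in some set of $\mathcal A_{\geq2}$). The rank-one hypothesis also rules out $w_{k_1}=w_{k_2}=0$ for any pair directly. Next, $\boldsymbol\pi^*$ from \eqref{pistar} has all coordinates positive under Assumption~\ref{assn:noemptyCS} together with connectivity, so $\boldsymbol\pi^*\in\relint(\Delta_2)$. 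Each $H_{j,k}$ is a genuine line through $\boldsymbol\pi^*$, because its normal $(w_j e_j-w_k e_k)$ is not proportional to $(1,1,1)$ when $(w_j,w_k)\neq(0,0)$.

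\emph{Step 2 (distinctness and six cells).} Suppose two of the lines coincide, say $H_{1,2}=H_{1,3}$. Then along that line $w_1d_1=w_2d_2=w_3d_3=:t$.
\begin{itemize}
\item If all $w_j>0$, then $0=\sum_j d_j=t\sum_j w_j^{-1}$ forces $t=0$ and hence $\mathbf d=0$. This contradicts the set being a line.
\item If exactly one weight vanishes, say $w_1=0$, then $H_{1,2}=\{d_2=0\}$, $H_{1,3}=\{d_3=0\}$ and $H_{2,3}=\{w_2d_2=w_3d_3\}$. These are three distinct lines in $T$ by inspection, using $w_2,w_3>0$.
\end{itemize}
Three distinct lines through an interior point of a planar convex body cut a neighborhood of that point into six open sectors. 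Each sector therefore meets $\relint(\Delta_2)$, which gives six nonempty open cells. I will also check that each cell is a consistent strict ranking. When all $w_j>0$ the cells are exactly the six strict orderings of the three numbers $w_1d_1,w_2d_2,w_3d_3$, so the sign pattern is automatically transitive. The zero-weight case I will verify by the same explicit description.

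\emph{Step 3 (attainable traces).} Theorem~\ref{th:linearJ3}\textbf{B}, applied to each pair, gives two facts on $\relint(\Delta_2^{B,\cup_{\boldsymbol\beta}})$:
\begin{itemize}
\item $\mathcal I^{B,\cup_{\boldsymbol\beta}}_{k_1,k_2}$ coincides with $H_{k_1,k_2}$, which is the claimed equality;
\item $\mathcal P^{+,B,\cup_{\boldsymbol\beta}}_{k_1,k_2}$ coincides with the strict half-plane.
\end{itemize}
The attainable region for $j\succ h\succ\ell$ is $\bigcap$ of the three corresponding $\mathcal P^+$ sets, restricted to the relative interior. By the above this is the trace of the ambient cell $\{w_jd_j>w_hd_h>w_\ell d_\ell\}$. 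Pairwise disjointness across distinct rankings follows because two distinct strict orderings disagree on some pair, and the two strict half-planes for that pair are disjoint (this is Implication 2). Nonemptiness is not claimed, since $\Delta_2^{B,\cup_{\boldsymbol\beta}}$ may be a proper subset.

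The main obstacle I expect is the degenerate case $\gamma_{\{1,2,3\}}=0$ with one vanishing weight. There the symmetric ``$t$'' argument breaks down, and both distinctness and transitivity of the cells must be checked by hand; Step 1 exists precisely to confine this to a single explicit configuration.
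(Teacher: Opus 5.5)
Your proposal is correct and follows the paper's route: the corollary comes from applying Theorem~\ref{th:linearJ3}\textbf{B} to each pair. Your rewriting of the separators as $\{w_jd_j=w_kd_k\}$ also supplies the planar geometry the paper asserts without separate argument: distinctness of the lines, the six nonempty cells, and $\boldsymbol\pi^*\in\relint(\Delta_2)$. The degenerate case you flag is not really an obstacle, because every half-plane keeps the form $\{w_jd_j>w_kd_k\}$ even when one weight vanishes, so the cells are always the six strict orderings of $(w_1d_1,w_2d_2,w_3d_3)$ and transitivity is automatic there too.
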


 The corollary  delivers a ``quantile'' multinomial choice model, in the sense defined above, at the \emph{Behavioral environment} level uniformly over $\boldsymbol{\beta} \in \mathcal{B}_0$. The  partition of $\relint(\Delta^{B,\cup_{\boldsymbol{\beta}} }_{J-1})$ in this ``quantile'' model depends only on the attention probabilities. The  uniformity over $\boldsymbol\beta\in\mathcal B_0$ concerns the ranking rule, not the set of probability vectors attained, with different values of $\boldsymbol\beta$ potentially  generating different traces of the six ambient cells. For every pair $(k_1,k_2)$, the position of the corresponding probability vector relative to $H_{k_1,k_2}$ uniquely determines the sign of $x_{k_1}\beta_{k_1}-x_{k_2}\beta_{k_2}$,  automatically giving ranking recovery.

\vskip 0.1in 

\noindent \textbf{Case $J \geq 4$.} For $J \geq 4$, the pairwise separator logic extends as follows. For a given pair $(k_1,k_2)$, a two-coordinate rank-one condition analogous to the one in Theorem \ref{th:linearJ3} and Corollary \ref{cor:linearJ3allpairs} requires the probabilities of the consideration sets $S \cup \{k_1\}$ and $S \cup \{k_2\}$, with
$\varnothing\neq S\subseteq\mathcal J\setminus\{k_1,k_2\}$, to be proportional across $S$. Whenever there is an active consideration set containing both $k_1$ and $k_2$ together with outside alternatives, this proportionality factor has to be 1, yielding exact attention neutrality. This condition is sufficient for a common linear separator in the $(p_{k_1},p_{k_2})$-coordinates for ranking $k_1$ relative to $k_2$ on all attainable configurations. Under conditions analogous to the joint index difference variation used for the converse in Theorem~\ref{th:linearJ3}, it is also necessary for this distribution-robust representation. Importantly though, the failure of the corresponding two-coordinate rank-one condition does not rule out more general linear separators involving the choice probabilities of other alternatives.\footnote{These results highlight the difference between the restrictions on attention probabilities required for generalized linear separators in this paper and the symmetry restrictions imposed by the RCL model in \citet{BarseghyanMolinariThirkettle2019}.} These results are developed in Appendix~B.

\vskip 0.1in 

\noindent \textbf{Graphical illustrations.} Figure \ref{fig:graphical_ill_simplex} illustrates two consideration structures satisfying Corollary \ref{cor:linearJ3allpairs}, showing how the shape of the ambient probability space $\relint(\Delta^{B,\cup_{\boldsymbol{\beta}} }_{J-1})$, $\pi^*$ and the slopes of the indifference lines depend on $\{\gamma_A\}$. In both cases, any two elements are connected in the sense of Definition \ref{def:cs_connected}. The figure illustrates the geometry described in Corollary \ref{cor:linearJ3allpairs}.  Namely, in each panel, the three ambient indifference lines intersect at $\boldsymbol\pi^*$ and partition $\relint(\Delta^{B,\cup_{\boldsymbol{\beta}} }_{J-1})$ into six cone-shaped truncated regions with each such region corresponding to a unique complete  strict ordering $j\succ h\succ \ell$ on the set of alternatives. The truncation arises from the consideration structure.  The feasible probability set selects the attainable traces of these regions. Thus, we have a common  linear ranking arrangement, and which ranking regions are actually populated depends on the attainable index configurations.

\begin{figure}[ht]
\centering
\begin{minipage}{0.45\textwidth}
\centering
\includegraphics[width=\linewidth]{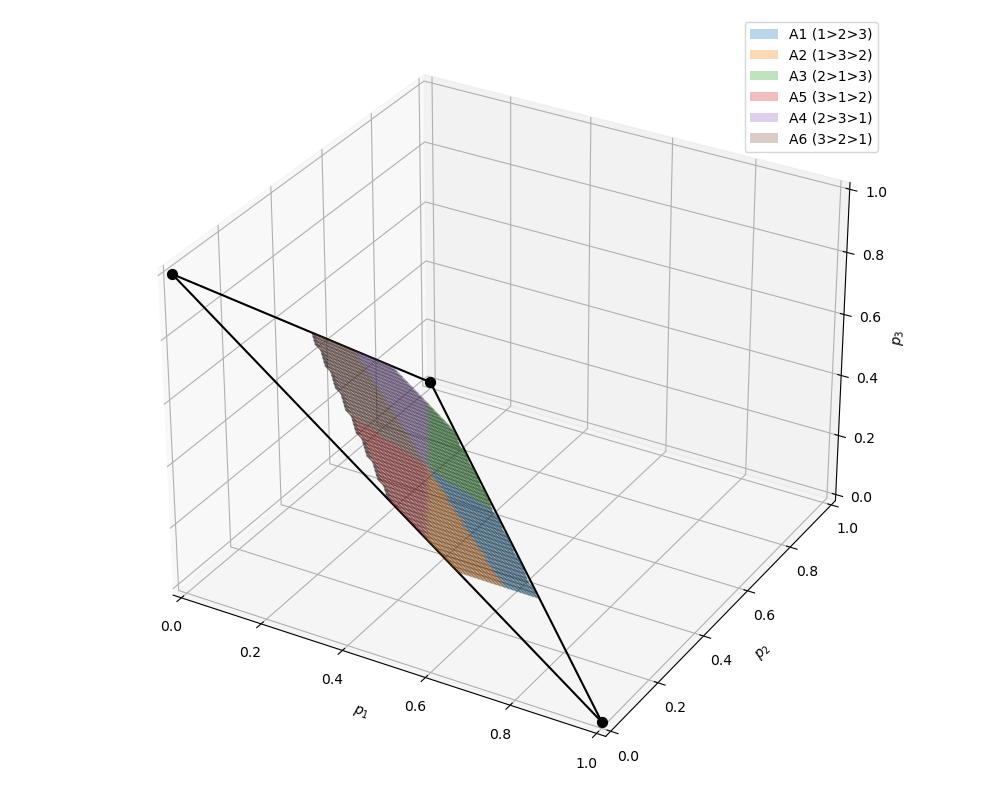}
\end{minipage}
\hfill
\begin{minipage}{0.54\textwidth}
\centering
\includegraphics[width=\linewidth]{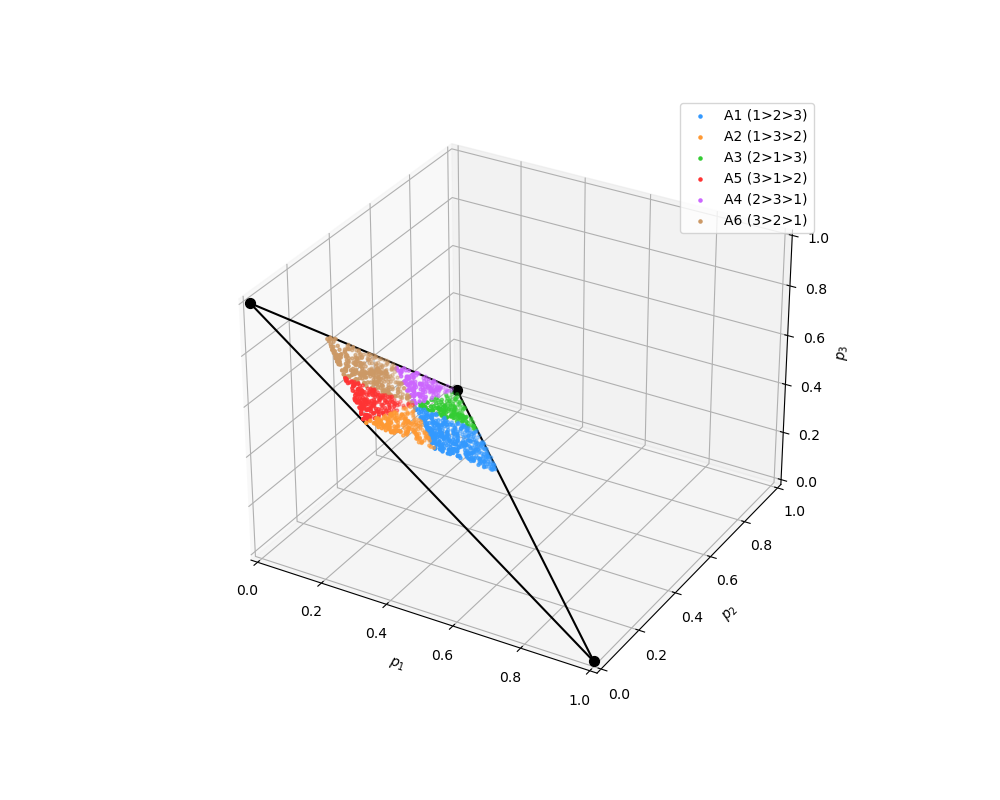}
\end{minipage}
\caption{LEFT: case $\gamma_{12}=1/2$, $\gamma_{23}=1/3$, $\gamma_{13}=1/6$. Here $\Delta^{B, \cup_{\boldsymbol{\beta}}}_{2}$ is bounded by $p_1\le \gamma_{12}+\gamma_{13}$, $p_2\le \gamma_{12}+\gamma_{23}$, $p_3\le \gamma_{13}+\gamma_{23}$. RIGHT: case $\gamma_{23}=2/3$, $\gamma_{12}=1/3$.  $\Delta^{B, \cup_{\boldsymbol{\beta}}}_{2}$ is bounded by $p_1\le \gamma_{12}$, $p_3\le \gamma_{23}$.}
\label{fig:graphical_ill_simplex}
\end{figure}

\vskip 0.1in 

\noindent \textbf{Lack of connectivity.}  Let's briefly analyze what happens when the consideration set structure is not fully connected as for $J=3$ the implications are particularly transparent.

\textit{Case 1: Two elements connected, one isolated} (e.g., 2 and 3 connected, 1 isolated).  
Then $P(Y=1|\mathbf{x})=\gamma_{\{1\}}$ is constant, and the feasible set collapses to the line segment $\Delta^{B, \cup_{\boldsymbol{\beta}}}_{2} \subseteq \{\mathbf{p}\in\Delta_2 : p_1 = \gamma_{\{1\}}\}$.  
Comparisons for alternatives 1 and 2 ($x_1\beta_1 \geq (\leq) x_2\beta_2$) provide no information about the split between $p_2$ and $p_3$ within this segment, and  $\mathcal{I}^{B, \cup_{\boldsymbol{\beta}}}_{1,2} = \mathcal{P}^{+,B, \cup_{\boldsymbol{\beta}}}_{1,2} = \mathcal{P}^{+,B, \cup_{\boldsymbol{\beta}}}_{2,1}$.  
In contrast, comparisons between 2 and 3 remain informative in $\relint(\Delta^{B, \cup_{\boldsymbol{\beta}}}_2)$: $\mathcal{I}^{B, \cup_{\boldsymbol{\beta}}}_{2,3}$ reduces to a single point $(\pi_1^*,\pi_2^*,\pi_3^*)$, while $\mathcal{P}^{+,B, \cup_{\boldsymbol{\beta}}}_{2,3} \cap \relint(\Delta^{B, \cup_{\boldsymbol{\beta}}}_2) \subseteq \{p_2 - \pi_2^* > p_3 - \pi_3^*\}$ and $ \mathcal{P}^{+}_{3,2} \cap \relint(\Delta^{B, \cup_{\boldsymbol{\beta}}}_2) \subseteq  \{p_2 - \pi_2^* < p_3 - \pi_3^*\}$.

\textit{Case 2: All three elements mutually disconnected.}  
Here $P(Y=j|\mathbf{x})=\gamma_{\{j\}}$ is constant for each $j=1,2,3$, so $\Delta^{B, \cup_{\boldsymbol{\beta}}}_{2} = \{\mathbf{p}\in\Delta_2 : p_1=\gamma_{\{1\}}, \, p_2=\gamma_{\{2\}}, \, p_3=\gamma_{\{3\}}\}$ is a single point (dimension 0). Consequently, for every pair $k_1\neq k_2$, $\mathcal{I}^{B, \cup_{\boldsymbol{\beta}}}_{k_1,k_2} = \mathcal{P}^{+,B, \cup_{\boldsymbol{\beta}}}_{k_1,k_2} = \Delta^{B, \cup_{\boldsymbol{\beta}}}_{2}$. 

In both cases, missing connections severely limit (or completely eliminate) the identifying power of $\mathbf{x}$ for relative probabilities among disconnected alternatives. For general $J$, once the partitioning $\{B_m\}$ of structure implied by connectivity is identified (see Proposition \ref{prop:notconnected} and the subsequent discussion), the analysis reduces to studying each component $B_m$ separately. Within each $B_m$ the feasible set is effectively restricted to a lower-dimensional (namely, $(|B_m|-1)$-dimensional)  subset in $\Delta_{J-1}$, and only comparisons among alternatives inside $B_m$ remain informative. The analysis within each such component then proceeds exactly as in the fully connected case.

\subsubsection{When partitioning of $\Delta_{J-1}$ by linear hyperplanes fails (Behavioral environment  level)} 
\label{sec:behavioralmultistructuralmodel}

When $\mathrm{rank}(C_{k_1,k_2})=2$ for $J=3$, a direct calculation
shows what goes wrong.  On the partial indifference locus \eqref{partialindiff}, 
{\small$$P(Y=k_1\mid\mathbf{x})-\pi_{k_1}^*
=
P(Y=k_2\mid\mathbf{x})-\pi_{k_2}^*
+
\bigl(\gamma_{\{k_1,k_3\}}-\gamma_{\{k_2,k_3\}}\bigr)
\bigl(G_{1,\mathbf{x}}(-x_{k_2}\beta_{k_2}+x_{k_3}\beta_{k_3})-1/2\bigr),$$}with $G_{1,\mathbf{x}}$ denoting the survival function of the difference between two errors conditional on $\mathbf{x}$ (Assumption~\ref{assn:multi_distributionMED} ensures that this is the same for all index pairs). Since
$\gamma_{\{k_1,k_3\}}-\gamma_{\{k_2,k_3\}}\neq0$, as implied by
$\mathrm{rank}(C_{k_1,k_2})=2$, the additional term
$\bigl(\gamma_{\{k_1,k_3\}}-\gamma_{\{k_2,k_3\}}\bigr)
\bigl(G_{1,\mathbf{x}}(-x_{k_2}\beta_{k_2}+x_{k_3}\beta_{k_3})-1/2\bigr)$ need not vanish and generally depends on the conditional distribution
$\boldsymbol{\varepsilon}|\mathbf{x}$ and on $\mathbf{x}$ in a nonlinear way. Varying $\mathbf{x}$ while preserving  partial indifference
\eqref{partialindiff}, together with variation in
$\PP_{\boldsymbol{\varepsilon}\mid\mathbf{x},\,\mathbf{x}\in\mathbb{R}^{M_e}}\in\mathcal{P}$, will generally result in 
$\mathcal{I}^{B,\cup_{\boldsymbol{\beta}}}_{k_1,k_2}$ that retains the full affine and topological dimension of 
$\Delta^{B,\cup_{\boldsymbol{\beta}}}_{2}$. Moreover, as illustrated in more detail below, the overlap of
$\mathcal{P}^{+,B,\cup_{\boldsymbol{\beta}}}_{k_1,k_2}\cap
\relint(\Delta^{B,\cup_{\boldsymbol{\beta}}}_{2})$
and
$\mathcal{P}^{+,B,\cup_{\boldsymbol{\beta}}}_{k_2,k_1}\cap
\relint(\Delta^{B,\cup_{\boldsymbol{\beta}}}_{2})$
may likewise have the full affine and topological dimension of
$\Delta^{B,\cup_{\boldsymbol{\beta}}}_{2}$.

Thus, once the rank condition fails, the ``quantile'' multinomial choice model described above is generally no longer available at the \emph{Behavioral environment} level uniformly over
$\boldsymbol{\beta}\in\mathcal{B}_0$.\footnote{The partial inferential value of choice probabilities at this  level still applies and is discussed in Appendix~D.} In particular, ranking recovery at  that level may fail even though choice probabilities can continue to contain ranking information on more restricted sets.\footnote{On a restricted attainable set, a violation of the rank condition need not be exposed by the available index configurations.} Such a failure, however, does not rule out ranking recovery under a more specific structural model as one can move down to the \emph{Structural environment} level and fix
$\mathbb{P}_{\boldsymbol{\varepsilon}\mid\mathbf{x}}$, thus fixing function $G_{1,\mathbf{x}}$. The relevant question becomes whether the resulting model-specific mapping from indices into choice probabilities still separates the rankings.

\subsubsection{A more modest perspective: the  \textit{Structural environment} level?} 
\label{sec:cs_structural}

A  \emph{Structural environment} ``quantile'' property requires a common ranking partition of its probability domain $\Delta^{M,\cup_{\boldsymbol{\beta}}}_{J-1}$, with this domain jointly attainable as $\boldsymbol{\beta}$ varies over $\mathcal B_0$. Even when it exists, a particular $\boldsymbol{\beta}$ may attain only a trace of the partition. When it fails to exist,  ranking recovery within the \emph{Structural environment} may still be possible  for a fixed $\boldsymbol{\beta}$.

Even for the fixed-$\boldsymbol{\beta}$ structural model level, ranking recovery is not automatic.
The fundamental difficulty is that Assumption~\ref{assn:multi_distributionMED}
imposes no restriction on how the joint distribution of $\boldsymbol{\varepsilon}|\mathbf{x}$
changes with $\mathbf{x}$.
If those changes are discontinuous in $\mathbf{x}$, the indifference set
$\mathcal{I}^{M,\boldsymbol{\beta}}_{k_1,k_2}$ need not be in a $(J-2)$-dimensional manifold, and even if it is, it may have 
self-intersections or other singularities that prevent separation of the ordering regions.

\vskip 0.1in 
\noindent \textit{Stochastic independence of $\boldsymbol{\varepsilon}$ and $\mathbf{x}$.} 
Consider $J=3$ with $\boldsymbol{\varepsilon}$ independent of $\mathbf{x}$. This case  provides an example of $\mathcal{I}^{M, \cup \boldsymbol{\beta}}_{k_1,k_2}$ being a well-behaved $(J-2)$-manifold with desirable separability properties and then highlights how this separability may fail once independence is relaxed.

Suppose $\gamma_{123},\gamma_{12},\gamma_{13},\gamma_{23}>0$ with
$\gamma_{123}+\gamma_{12}+\gamma_{13}+\gamma_{23}=1$, and suppose
$\gamma_{12},\gamma_{13},\gamma_{23}$ are distinct.  Let $C_2$ denote the
bivariate copula of $(\varepsilon_{k_2}-\varepsilon_{k_1}, \varepsilon_{k_3}-\varepsilon_{k_1})$, and let $F$ denote the common marginal c.d.f.\ of
$\varepsilon_{k_2}-\varepsilon_{k_1}$ for every permutation
$(k_1,k_2,k_3)$ of $\mathcal{J}$.  In line with Assumption~\ref{assn:multi_distributionMED}, $C_2$ and $F$
are absolutely continuous.  For $k\neq j$, define
$a_{k,j}(\mathbf{x})=x_k\beta_k-x_j\beta_j$.  Then, for example,
{\small$$
P(Y=k_1|\mathbf{x})
 = \gamma_{123}
   C_2\!\left(
       F(a_{k_1,k_2}(\mathbf{x})),
       F(a_{k_1,k_3}(\mathbf{x}))
   \right)
 + \gamma_{k_1k_2}F(a_{k_1,k_2}(\mathbf{x}))
 + \gamma_{k_1k_3}F(a_{k_1,k_3}(\mathbf{x})),
$$}with analogous expressions for the other alternatives.

On $\mathcal{I}^{M,\cup\boldsymbol{\beta}}_{k_1,k_2}$,
$a_{k_1,k_2}(\mathbf{x})=0$ and
$a_{k_1,k_3}(\mathbf{x})=a_{k_2,k_3}(\mathbf{x})$.  Write this common
difference as $a$ and set $u=F(a)$, $h(u)=C_2(1/2,u)$. The partial indifference locus can then be equivalently parametrized by $a$ or  $u\in(0,1)$:
\begin{equation}
\label{k1choiceoncurve}
P(Y=j|\mathbf{x})
=\gamma_{123}h(u)
  +\frac{\gamma_{k_1k_2}}{2}
  +\gamma_{j k_3}u, \quad j \in \{k_1,k_2\},
\end{equation}
and $P(Y=k_3|\mathbf{x})=1-P(Y=k_1|\mathbf{x})-P(Y=k_2|\mathbf{x})$.  Since $\gamma_{k_1k_3},\gamma_{k_2k_3}>0$, both coordinates
are strictly increasing in $u$.  Thus, under stochastic independence of
$\boldsymbol{\varepsilon}$ and $\mathbf{x}$, the image of pairwise index
indifference is a well defined one-dimensional curve in
$\relint(\Delta^{M,\cup \boldsymbol{\beta}}_2)$.  Its shape, however, depends on the difference copula through $h$ and is generally nonlinear but it does not depend on a particular $\boldsymbol{\beta} \in \mathcal{B}_0$ as it is fully indexed by $u$.

The left panel of Figure \ref{fig:ind_assumption} illustrates this nonlinearity
for i.i.d. standard normal unobservables.  It depicts the slope ratio $\frac{p_1'(a)}{p_2'(a)}$, where $p_k(a)$ is a special case of \eqref{k1choiceoncurve}: $p_k(a)
 =\gamma_{123}
   \Phi_2\!\left(0,\frac{a}{\sqrt{2}};\frac12\right)
 +\frac{\gamma_{12}}{2}
 +\gamma_{k3}\Phi\!\left(\frac{a}{\sqrt{2}}\right)$, $k=1,2$. For attention probabilities used in the figure,
the slope ratio varies smoothly from
$\approx1.20$
to $\approx 12$.  Hence the partial indifference locus cannot be
a straight line. 

The right panel illustrates that $\mathcal{I}_{k_1,k_2}^{M,\cup_{ \boldsymbol{\beta}}}$ also depends on the 
distribution of the unobservables, even when every candidate
distribution is fully exchangeable, absolutely continuous, and
independent of $\mathbf{x}$.  Three specifications are compared.  In the
first, $\boldsymbol{\varepsilon}
 =\boldsymbol{V}+0.25\boldsymbol{Z}$,
where $\boldsymbol{V}$ is uniformly distributed over the six coordinate
permutations of $(0,1,2)^{\top}$, independently of
$\boldsymbol{Z}\sim \mathcal{N}(\boldsymbol{0},I_3)$ (smooth permutation-Gaussian mixture).  The second specification is
the i.i.d. normal benchmark
$\boldsymbol{\varepsilon}=\boldsymbol{Z}$.  In the third, $\boldsymbol{\varepsilon}=S\boldsymbol{Z}$, where 
$P(S=1)=P(S=10^4)=\frac12$,
with $S$ independent of $\boldsymbol{Z}$ (common scale normal).    The display focuses on $u\in[0.08,0.30]$, where the
differences between curves are particularly  visible.\footnote{This comparison is not intended to
hold the marginal difference distribution fixed and each specification is
parameterized by its own percentile $u=F(a)$.}  Thus, at the \textit{Structural environment} level, $\mathcal{I}_{k_1,k_2}^{M,\cup_{\boldsymbol{\beta}}}$ is not determined by the attention probabilities alone but it does meaningfully separate $\relint(\Delta^{M,\cup_{\boldsymbol{\beta}}}_2)$
into disjoint ordering regions: for $x_{k_1}\beta_{k_1}>x_{k_2}\beta_{k_2}$
it holds that $P(Y=k_1|\mathbf{x})>p_1(a)$ and
$P(Y=k_2|\mathbf{x})<p_2(a)$, placing the choice probability vector strictly
above the indifference curve, while the inequalities reverse when
$x_{k_1}\beta_{k_1}<x_{k_2}\beta_{k_2}$.

\begin{figure}[ht]
    \centering
    \begin{minipage}{0.48\textwidth}
        \centering
        \includegraphics[width=0.95\linewidth]
{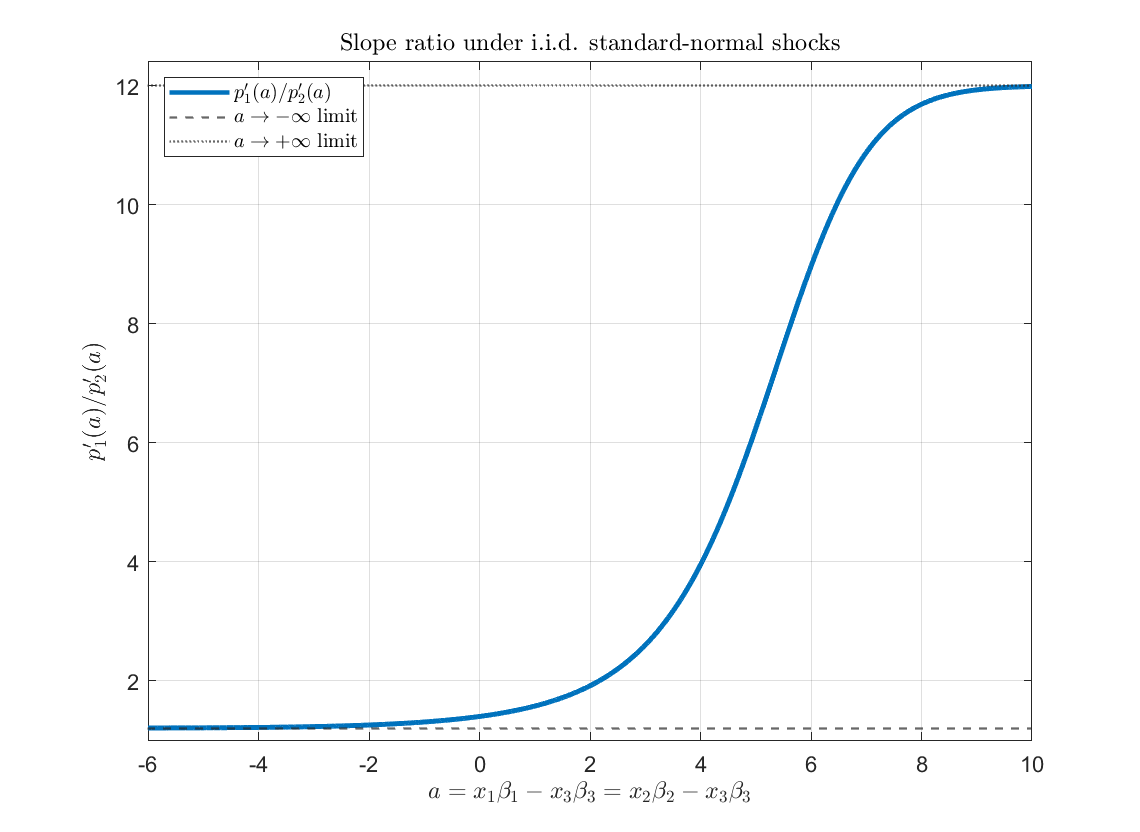}
    \end{minipage}
    \hfill
    \begin{minipage}{0.48\textwidth}
        \centering
        \includegraphics[width=0.95\linewidth]    {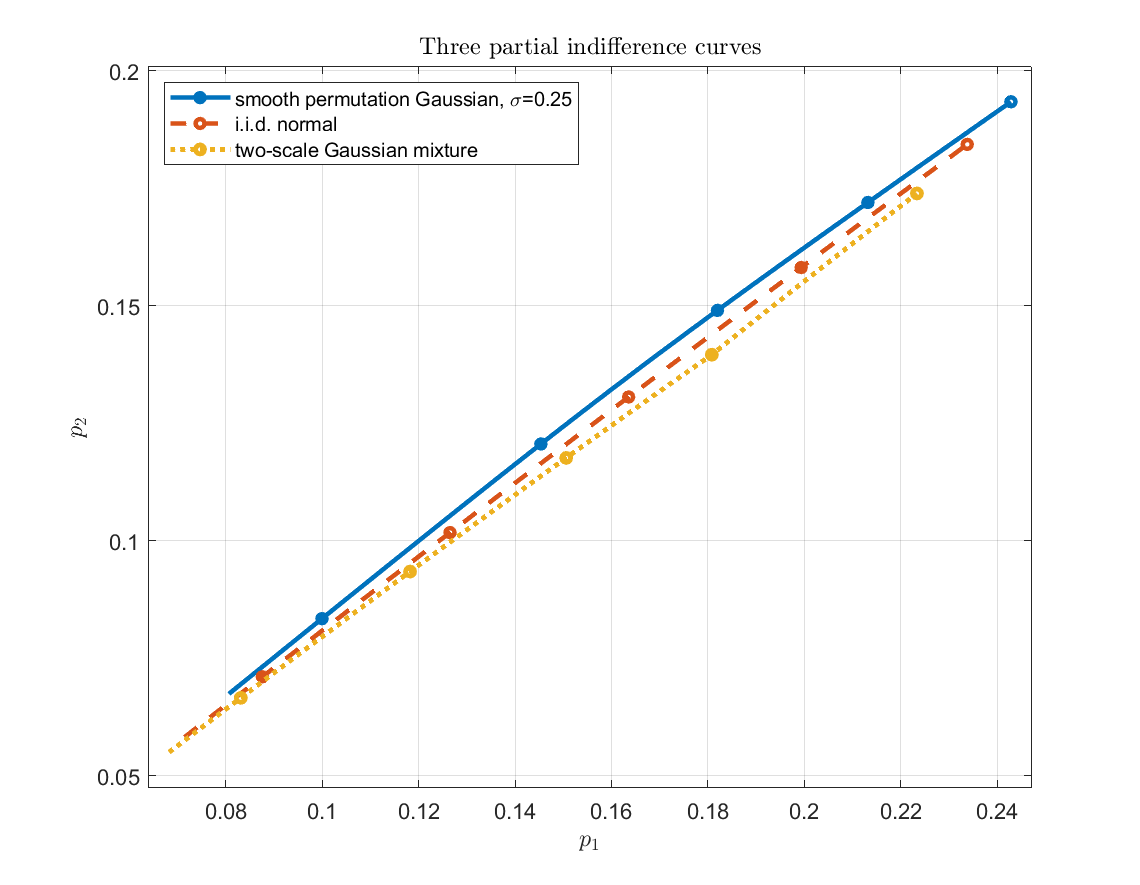}
    \end{minipage}
    \caption{
In both panels,
    $\gamma_{123}=0.80$, $\gamma_{13}=0.18$,
    $\gamma_{23}=0.015$, and $\gamma_{12}=0.005$.
    \\
    Left: the slope ratio $p_1'(a)/p_2'(a)$ under i.i.d.
    standard normal unobservables.  
    \\
    Right: a zoomed plot of
    $\mathcal{I}^{M, \cup_{\boldsymbol{\beta}}}_{1,2}$ in the original $(p_1,p_2)$ coordinates for
    the smooth permutation-Gaussian mixture (solid), the i.i.d. normal model (dashed),
    and the two-scale Gaussian mixture (dotted) described in the text.  The
    displayed range is $u\in[0.08,0.30]$.}
    \label{fig:ind_assumption} 
\end{figure}

\vskip 0.05in

\noindent \textit{Distribution of $\boldsymbol{\varepsilon}|\mathbf{x}$ depends on $\mathbf{x}$.} When the distribution of $\boldsymbol{\varepsilon}|\mathbf{x}$ {depends} on $\mathbf{x}$, the findings change. Continuing with  $J=3$ and now even fixing $\boldsymbol{\beta}$,  the set  $\mathcal{I}^{M, \boldsymbol{\beta}}_{k_1,k_2}$ is described as a collection of choice probabilities with 
$$P(Y=j|\mathbf{x})
 = \gamma_{123}\, C_{2,\mathbf{x}}(0.5,\,F_{\mathbf{x}}(a_{k_1,k_3}(\mathbf{x})))
 + \gamma_{\{k_1,k_2\}}/2 
 + \gamma_{\{j,k_3\}}\, F_{\mathbf{x}}(a_{k_1,k_3}(\mathbf{x})), \quad j \in \{k_1,k_2\},$$
where 
$C_{2,\mathbf{x}}$ denotes the bivariate copula of  $(\varepsilon_{k_2}-\varepsilon_{k_1},\varepsilon_{k_3}-\varepsilon_{k_1}) | \mathbf{x}$ and $F_{\mathbf{x}}$ is the c.d.f. of $\varepsilon_{k_1}-\varepsilon_{k_2}|\mathbf{x}$. Thus, the choice probabilities depend on what can be taken to be a parameter $F_\mathbf{x}(a_{k_1,k_3}(\mathbf{x}))$ but also depend on $C_{2,\mathbf{x}}$.

First, $\mathcal{I}^{M,\boldsymbol{\beta}}_{k_1,k_2}$ in
$\relint(\Delta^{M,\boldsymbol{\beta}}_2)$ need not have topological
dimension one less than that of $\relint(\Delta^{M,\boldsymbol{\beta}}_2)$. Even when it does, it need not be contained in a
codimension-one embedded manifold of 
$\relint(\Delta^{M,\boldsymbol{\beta}}_2)$. This is once again because
Assumption~\ref{assn:multi_distributionMED} imposes no restriction on how
the joint distribution of $\boldsymbol{\varepsilon}|\mathbf{x}$ varies
with $\mathbf{x}$, allowing it potentially to vary discontinuously. Even if, in a particular structural model,
$\mathcal{I}^{M,\boldsymbol{\beta}}_{k_1,k_2}$ is contained in a
codimension-one embedded manifold, this alone does not guarantee the required
separation property. The way the ordering images $\mathcal{P}^{+,M,\boldsymbol{\beta}}_{k_1,k_2}$ and
$\mathcal{P}^{+,M,\boldsymbol{\beta}}_{k_2,k_1}$ are generated may itself
vary discontinuously with $\mathbf{x}$, as illustrated below, so that
probability vectors associated with opposite rankings can lie on the same
side of, or overlap across, the partial indifference locus. Consequently,
$\mathcal{I}^{M,\boldsymbol{\beta}}_{k_1,k_2}$ may fail to separate
$\relint(\Delta^{M,\boldsymbol{\beta}}_2)$ into the two disjoint ordering
regions
$\mathcal{P}^{+,M,\boldsymbol{\beta}}_{k_1,k_2}
 \cap \relint(\Delta^{M,\boldsymbol{\beta}}_2)$
and
$\mathcal{P}^{+,M,\boldsymbol{\beta}}_{k_2,k_1}
 \cap \relint(\Delta^{M,\boldsymbol{\beta}}_2)$. Imposing suitable continuity in $\mathbf{x}$ of the 
distribution of $\boldsymbol{\varepsilon}|\mathbf{x}$ would not by
itself resolve the problem. Such continuity may produce a
codimension-one  indifference image, but it need not have separation properties as self-intersections or other singularities
may remain. Thus, additional structure is required for
$\mathcal{I}^{M,\boldsymbol{\beta}}_{k_1,k_2}$ to deliver the desired
separation property.

The right panel of Figure \ref{fig:ind_assumption}  can help  illustrate what can happen under such dependence. Suppose all $x_k$ are one-dimensional with $\beta_k=1$. Let the
conditional unobservable distribution be the smooth permutation-Gaussian model
when $x_1>x_2$, the two-scale Gaussian mixture when $x_2>x_1$, and the
i.i.d. normal model when $x_1=x_2$.  The pairwise indifference locus then
jumps between the upper and lower curves in the  right panel as
$\mathbf{x}$ crosses $x_1=x_2$.   Since these loci are distinct and bound
a nondegenerate strip of the simplex over the displayed range of $u$,
the regions $\mathcal{P}^{+,M, \boldsymbol{\beta}}_{1,2}$ and
$\mathcal{P}^{+,M, \boldsymbol{\beta}}_{2,1}$ overlap inside
$\relint(\Delta^{M,\boldsymbol{\beta}}_2)$.  Thus, even though each fixed
conditional distribution produces a smooth indifference curve, allowing
the distribution to vary discontinuously with $\mathbf{x}$ destroys
ranking recovery even for the fixed $\boldsymbol{\beta}$. 

Even imposing an $\mathbf{x}$-invariant trivariate copula, including the independence one of \citet{Manski1975}, does not generally restore the separation property of $\mathcal{I}^{M,\boldsymbol{\beta}}_{k_1,k_2}$. The induced bivariate copula of $(\varepsilon_{k_1}-\varepsilon_{k_2},\varepsilon_{k_1}-\varepsilon_{k_3})|\mathbf{x}$ may still vary with $\mathbf{x}$ through $F_{\mathbf{x}}$. Thus, outside the settings covered by Theorem~\ref{th:linearJ3}, we need additional assumptions on how $\boldsymbol{\varepsilon}|\mathbf{x}$ varies with $\mathbf{x}$ for $\mathcal{I}^{M, \boldsymbol{\beta}}_{k_1,k_2}$ to yield a meaningful partition and guarantee ranking recovery.

One such assumption is  a common location-scale structure for the
unobservables: $\varepsilon_j=\mu(\mathbf{x})+s(\mathbf{x}) \widetilde{\varepsilon}_{j}$, $s(\mathbf{x})>0,$ 
where  $\widetilde{\boldsymbol\varepsilon}
=(\widetilde\varepsilon_1,\ldots,\widetilde\varepsilon_J)^{\top}$ is independent of $\mathbf x$.

Fix alternative 1 as a reference and define $d^{\boldsymbol\beta}(\mathbf x)
=
\left(0, 
\frac{x_2\beta_2-x_1\beta_1}{s(\mathbf x)},\ldots,
\frac{x_J\beta_J-x_1\beta_1}{s(\mathbf x)}
\right)^{\top}$. Let $\mathcal E^o_{\mathrm{diff}}$ be the relative interior of the support of
$(0, \widetilde\varepsilon_1-\widetilde\varepsilon_2,\ldots,
\widetilde\varepsilon_1-\widetilde\varepsilon_J)^{\top}$ in its $(J-1)$-dimensional  affine hull, and let $\mathcal D_{\boldsymbol\beta}
=
\{d^{\boldsymbol\beta}(\mathbf x):\mathbf x\in\mathbb R^{M_e}\}
\cap\mathcal E^o_{\mathrm{diff}}$ denote the attainable normalized index set restricted to
$\mathcal E^o_{\mathrm{diff}}$. Note that $(d^{\boldsymbol\beta})_j-(d^{\boldsymbol\beta})_k$ and $x_j\beta_j-x_k\beta_k$ have the same sign. Theorem \ref{th:cs_heteroskedasticity} proves a common structural partitioning at the \textit{Structural environment} uniformly over $\boldsymbol{\beta}$ and, in particular, implies the ranking recovery property.

\begin{theorem}[Common structural partition and ranking recovery]
\label{th:cs_heteroskedasticity}
Suppose Assumptions~\ref{assn:indCS}--\ref{assn:rationalCS} hold, every two alternatives are connected in the sense of Definition~\ref{def:cs_connected}, and $\varepsilon_j=\mu(\mathbf x)+s(\mathbf x)\widetilde\varepsilon_j$,
 $s(\mathbf x)>0$, where $\widetilde{\boldsymbol\varepsilon}$ is independent of $\mathbf x$ and has an absolutely continuous exchangeable distribution whose support is convex and has nonempty interior. Then 
 \vskip 0.1in 
A. (Ranking recovery) The map $\Psi:\mathcal E^o_{\mathrm{diff}}
\longrightarrow \relint(\Delta_{J-1})$
under which the choice probability vector at $(\mathbf x,\boldsymbol\beta)$ is $\Psi(d^{\boldsymbol\beta}(\mathbf x))$ whenever $d^{\boldsymbol\beta}(\mathbf x)\in\mathcal E^o_{\mathrm{diff}}$ is a topological embedding. Hence, for every $\boldsymbol\beta \in \mathcal{B}_0$, each probability vector in $\Psi(\mathcal D_{\boldsymbol\beta})$ uniquely determines the normalized index vector and therefore all pairwise rankings of the deterministic utility indices.
 \vskip 0.1in 
B. (Common pairwise separation) $\Psi\!\left(\mathcal D_{\boldsymbol\beta}\cap\{d:d_j=d_k\}\right)
=
\Psi(\mathcal D_{\boldsymbol\beta})
\cap
\Psi\!\left(\mathcal E^o_{\mathrm{diff}}\cap\{d:d_j=d_k\}\right)$ for every $\boldsymbol\beta\in\mathcal B_0$ and $j,k \in \mathcal{J}$, $j\neq k$. The same identity holds with $=$ replaced by $>$ or by $<$. The second set on the right-hand side of each identity is common to all
$\boldsymbol\beta$, and   $\boldsymbol\beta$ changes only its attained portion. 

\vskip 0.1in
C. (Common structural partitioning) For every $j,k \in \mathcal{J}$, $j\neq k$, $\Psi\!\left(
\mathcal E^o_{\mathrm{diff}}\cap\{d:d_j=d_k\}
\right)$ is an embedded $(J-2)$-dimensional topological manifold in
$\Psi(\mathcal E^o_{\mathrm{diff}})$. Its complement in
$\Psi(\mathcal E^o_{\mathrm{diff}})$ has exactly two connected components, $\Psi\!\left(
\mathcal E^o_{\mathrm{diff}}\cap\{d:d_j>d_k\}
\right)$ and 
$\Psi\!\left(
\mathcal E^o_{\mathrm{diff}}\cap\{d:d_j<d_k\}
\right)$. Moreover, $0\in\mathcal E^o_{\mathrm{diff}}$, and the $J!$ strict order
regions are all nonempty and connected, with $\Psi(0)$ in the closure
of each.

\end{theorem}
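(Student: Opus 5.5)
The plan is to write $\Psi$ explicitly and then get all three parts from one injectivity argument plus invariance of domain.

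Under the location-scale structure, $U_j^*=x_j\beta_j+\mu(\mathbf x)+s(\mathbf x)\widetilde\varepsilon_j$. Within any menu $A$, subtract the common term $x_1\beta_1+\mu(\mathbf x)$ and divide by $s(\mathbf x)>0$. Comparisons of $U^*_j$ then become comparisons of $d_j+\widetilde\varepsilon_j$ with $d=d^{\boldsymbol\beta}(\mathbf x)$. By Assumptions~\ref{assn:indCS}--\ref{assn:rationalCS},
$$\Psi_j(d)=\sum_{A\ni j}\gamma_A\,\PP\bigl(d_j+\widetilde\varepsilon_j>d_k+\widetilde\varepsilon_k\ \ \forall k\in A\setminus\{j\}\bigr),$$
and this does not depend on $\mathbf x$ or $\boldsymbol\beta$ except through $d$. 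Absolute continuity gives continuity of $\Psi$ by dominated convergence. Once injectivity is shown, $\Psi$ maps $\mathcal E^o_{\mathrm{diff}}$ into the relative interior: an interior $d$ gives each menu winner positive probability, and connectivity rules out the degenerate faces.

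Before the main step, I check that $0\in\mathcal E^o_{\mathrm{diff}}$. Take any interior point $e$ of the support of $\widetilde{\boldsymbol\varepsilon}$. By exchangeability every coordinate permutation of $e$ is also an interior point. By convexity their average, which has equal coordinates, is interior. Its image under the difference map is $0$, and that map is open onto the $(J-1)$-dimensional affine hull, so $0$ lies in $\mathcal E^o_{\mathrm{diff}}$.

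The key step, and the main obstacle, is injectivity. Take $d\neq d'$ in $\mathcal E^o_{\mathrm{diff}}$ and set $\delta=d'-d$. Let $T=\arg\max_j\delta_j$. Since $\delta_1=0$ and $\delta\neq0$, $\delta$ is not constant, so $T$ is a nonempty proper subset of $\mathcal J$.

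First I show monotonicity. For every menu $A$ and every realization of $\widetilde{\boldsymbol\varepsilon}$, if the winner at $d$ lies in $T$ then the winner at $d'$ also lies in $T$. This holds because members of $T$ rise weakly more than every other index. Hence $\sum_{j\in T}\Psi_j(d')\ge\sum_{j\in T}\Psi_j(d)$.

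Then I show strictness. Connectivity (Definition~\ref{def:cs_connected}) yields an active $A\in\mathcal A_{\ge2}$ meeting both $T$ and $\mathcal J\setminus T$. Pick $t\in A\cap T$ and $m\in A\setminus T$, and look at the event where $m$ wins $A$ at $d$ but $t$ beats $m$ at $d'$. This is an event of the form $d_t+\widetilde\varepsilon_t\in(d_m+\widetilde\varepsilon_m-(\delta_t-\delta_m),\,d_m+\widetilde\varepsilon_m)$, intersected with all other members of $A$ falling below. I expect to show it has positive probability because $d$ and $d'$ are interior points of the difference support, which is convex with nonempty interior; this is the delicate point. On this event the $A$-winner moves into $T$, so the inequality is strict and $\Psi(d')\neq\Psi(d)$.

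Invariance of domain then applies, since $\mathcal E^o_{\mathrm{diff}}$ is open in a $(J-1)$-dimensional affine space and $\Psi$ is continuous and injective into the $(J-1)$-dimensional affine hull of $\Delta_{J-1}$. So $\Psi$ is an open map onto its image, hence a topological embedding. Because $\Psi$ is common to all $\boldsymbol\beta$, Part A follows.

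Part B follows from injectivity alone. The inclusion $\subseteq$ is immediate. For $\supseteq$, if $\mathbf p=\Psi(d)=\Psi(e)$ with $d\in\mathcal D_{\boldsymbol\beta}$ and $e\in\mathcal E^o_{\mathrm{diff}}\cap\{d_j=d_k\}$, then $d=e$; the same argument works with $>$ or $<$.

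Part C transfers through the homeomorphism $\Psi$. The set $\mathcal E^o_{\mathrm{diff}}\cap\{d_j=d_k\}$ is a relatively open convex $(J-2)$-dimensional set, and it is nonempty since it contains $0$. Its complement in the convex open set $\mathcal E^o_{\mathrm{diff}}$ is the union of the two convex pieces $\{d_j>d_k\}$ and $\{d_j<d_k\}$, which are connected and separated. The $J!$ strict order regions are intersections of open polyhedral cones with vertex $0$ with the convex set $\mathcal E^o_{\mathrm{diff}}$, which contains $0$ in its interior. Each is therefore nonempty and convex with $0$ in its closure, and continuity of $\Psi$ places $\Psi(0)$ in the closure of each image.
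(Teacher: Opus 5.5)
Your proposal is correct and follows essentially the paper's route: the location--scale structure reduces choice probabilities to a common map $\Psi(d)$, injectivity comes from a cut-set argument over an active menu that straddles $T$, and invariance of domain gives the embedding; the convex hyperplane geometry of $\mathcal E^o_{\mathrm{diff}}$ then transfers through the homeomorphism (your $T=\arg\max_j\delta_j$ in place of the paper's $\{j:d_j>\widetilde d_j\}$ is a cosmetic difference). The positivity step you flag, which the paper also only asserts, can be closed as follows: write $d,d'$ as images of interior support points $e,e'$, so that $d+e$ and $d'+e'$ are constant vectors; at $(1-\lambda)e+\lambda e'$ with $\lambda\in(0,1)$ the utilities equal a constant minus $\lambda\delta$ at $d$ and a constant plus $(1-\lambda)\delta$ at $d'$, so the $A$-winner is outside $T$ at $d$ and inside $T$ at $d'$ with strict margins, and this persists on a neighbourhood of positive probability.
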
  

Thus, Theorem \ref{th:cs_heteroskedasticity} establishes that the \emph{Structural environment} induces a common ranking partition of $\Psi(\mathcal E^o_{\mathrm{diff}})$, with potentially only a trace of it   attained for a given $\boldsymbol\beta$. It is  more than fixed-$\boldsymbol{\beta}$ ranking recovery as a structural ranking rule is common to all  $\boldsymbol{\beta}$. If, in addition,
$\mathcal E^o_{\mathrm{diff}}\subseteq \cup_{\boldsymbol{\beta} \in \mathcal{B}_0} \mathcal D_{\boldsymbol\beta}$,  
the common structural partition is fully attained as $\boldsymbol{\beta}$ varies. The \emph{Structural environment} is then ``quantile''  in the sense defined above.

Finally, since $\Psi|_{\mathcal D_{\boldsymbol\beta}}$ is a homeomorphism onto its image, the intrinsic dimension and separation properties of any attainable partition are preserved. E.g., if $\mathcal D_{\boldsymbol\beta}$ is an $r$-dimensional manifold and $d_j=d_k$ cuts it in a separating $(r-1)$-dimensional manifold, the corresponding probability space trace has the same dimension and separation property.

\subsection{Distributional route:  ``quantile'' version?}

\label{sec:distributionalmulti}

Let's now maintain the standard random utility paradigm and ask when restrictions on the distribution of $\boldsymbol\varepsilon|\mathbf{x}$ generate a ``quantile'' ranking partition or deliver only ranking recovery on the attainable probability set. Suppose generally that $\boldsymbol\varepsilon|\mathbf x$ has a convex support with nonempty interior and an absolutely continuous distribution, with copula $C_{\mathbf x}$ and marginal c.d.f.s $F_{j,\mathbf x}$.

Under complete index indifference \eqref{totalindiff}, the image in the probability simplex is $\pi_k^*(\mathbf x)
=
P\!\left(
\varepsilon_k\geq\varepsilon_j\ \text{for all }j
\mid\mathbf x
\right)$,
$ k=1,\ldots,J$. If the conditional distribution of $\boldsymbol\varepsilon$ varies freely
with $\mathbf x$, then $\boldsymbol\pi^*(\mathbf x)$ can also vary freely and
need not vary continuously. Thus even total indifference need not have a
single image at the \textit{Structural environment} level.\footnote{It need not have a
single image even at the \textit{empirical specification} level, but it is not analyzed in the paper.} A fixed copula $C_{\mathbf x}=C$ and common marginals $F_{j,\mathbf x}=F_{\mathbf x}$ for every $j$ stabilize this point: $\pi_k^{*,M}
=
\int_0^1
\frac{\partial C}{\partial u_k}(v,\ldots,v)\,dv$, $k=1,\ldots,J$, does not depend on $\mathbf x$.\footnote{Unless $C$ is exchangeable, this point need
not be the barycentre.} Constancy of the total indifference image point is not,
however, enough for ranking recovery.  As the shape of $F_{\mathbf x}$ changes
with $\mathbf x$, the map from index differences to probabilities also changes
with $\mathbf x$, so probability vectors generated by opposite index rankings
may overlap. Appendix~E gives an explicit overlap construction, and 
Proposition~\ref{prop:distributional_no_uniform_recovery} there formalizes
the resulting failure of distribution-robust ranking recovery even
for a fixed $\boldsymbol{\beta}$. 

Just like in the limited attention models in Section \ref{sec:cs_structural}, a common location-scale structure removes this source of ambiguity by producing one fixed probability map  and letting its indifference boundaries to be nonlinear. It is  generally asymmetric due to copula non-exchangeability (in Section \ref{sec:cs_structural} asymmetry of such a map was driven by non-symmetric attention probabilities).

\begin{theorem}[Common structural partition and ranking recovery]
\label{th:nonexch_specialcase}
Suppose the standard random utility paradigm holds,  and
$\varepsilon_j=\mu(\mathbf x)+s(\mathbf x)\widetilde\varepsilon_j$, $s(\mathbf x)>0$,
where $\widetilde{\boldsymbol\varepsilon}$ is independent of $\mathbf x$ with its  distribution having convex support and nonempty interior. Suppose also that its marginals have the same absolutely continuous c.d.f. $\widetilde F$, and its absolutely continuous copula $\widetilde C$ is generally  non-exchangeable. Let $\mathcal E^o_{\mathrm{diff}}$, $d^{\boldsymbol\beta}(\mathbf x)$, and $\mathcal D_{\boldsymbol\beta}$ be defined as in Section \ref{sec:cs_structural} for this vector $\widetilde{\boldsymbol\varepsilon}$.
\vskip 0.1in 
A.(Ranking recovery) The map $\Psi:\mathcal E^o_{\mathrm{diff}}
\longrightarrow\relint(\Delta_{J-1})$
such that the choice probability vector at $(\mathbf x,\boldsymbol\beta)$ is $\Psi(d^{\boldsymbol\beta}(\mathbf x))$ whenever $d^{\boldsymbol\beta}(\mathbf x)\in\mathcal E^o_{\mathrm{diff}}$ is a topological embedding. For every $\boldsymbol\beta \in \mathcal{B}_0$, each probability vector in $\Psi(\mathcal D_{\boldsymbol\beta})$, thus,  uniquely determines the normalized index vector and all pairwise index rankings.
\vskip 0.05in  

If $0\in\mathcal E^o_{\mathrm{diff}}$ (complete indifference is feasible), then  $\Psi(0)$ is the vector with coordinates $\pi_j^{*,M}
=
\int_0^1
\frac{\partial\widetilde C}{\partial u_j}(v,\ldots,v)\,dv$,
 $j=1,\ldots,J$. 
 \vskip 0.1in 
B.(Common pairwise separation) $\Psi\!\left(\mathcal D_{\boldsymbol\beta}\cap\{d:d_j=d_k\}\right)
=
\Psi(\mathcal D_{\boldsymbol\beta})
\cap
\Psi\!\left(\mathcal E^o_{\mathrm{diff}}\cap\{d:d_j=d_k\}\right)$ for every $\boldsymbol\beta\in\mathcal B_0$ and $j,k \in \mathcal{J}$,  $j\neq k$, with the analogous identities for $>$ and $<$. The second set on the right-
hand side of each identity (for each $=$, $>$, $<$) is common to all $\boldsymbol\beta$ but its attainable trace may depend on $\boldsymbol\beta$.

 \vskip 0.1in 
C. (Common structural partitioning)  For every $j\neq k$, it holds that $\Psi\!\left( \mathcal E^o_{\mathrm{diff}}\cap\{d:d_j=d_k\} \right)$ is an embedded $(J-2)$-dimensional topological manifold in $\Psi(\mathcal E^o_{\mathrm{diff}})$. Its complement in $\Psi(\mathcal E^o_{\mathrm{diff}})$ has exactly two connected components,  $\Psi\!\left( \mathcal E^o_{\mathrm{diff}}\cap\{d:d_j>d_k\} \right)$ \quad\text{and}\quad $\Psi\!\left( \mathcal E^o_{\mathrm{diff}}\cap\{d:d_j<d_k\} \right)$. If $0\in\mathcal E^o_{\mathrm{diff}}$, the $J!$ strict order regions are all nonempty and connected, and all have $\boldsymbol\pi^{*,M}=\Psi(0)$ in their closure.
\end{theorem}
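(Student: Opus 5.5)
The plan is to write the choice probabilities explicitly as a function of the normalized index vector and prove injectivity of that function by a componentwise monotonicity argument. The topological claims then follow from invariance of domain.

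\textbf{Step 1 (the map $\Psi$).} Under utility maximization, $Y=j$ iff $x_j\beta_j+\mu+s\widetilde\varepsilon_j\ge x_k\beta_k+\mu+s\widetilde\varepsilon_k$ for all $k$. Dividing by $s(\mathbf x)>0$ and cancelling $\mu(\mathbf x)$, this reads $\widetilde\varepsilon_j-\widetilde\varepsilon_k\ge d_k-d_j$ with $d=d^{\boldsymbol\beta}(\mathbf x)$. Hence $P(Y=j\mid\mathbf x)=\Psi_j(d):=P(\widetilde\varepsilon_k-\widetilde\varepsilon_j\le d_j-d_k\ \forall k)$. Since $\widetilde{\boldsymbol\varepsilon}$ is independent of $\mathbf x$, $\Psi$ does not depend on $\mathbf x$ or $\boldsymbol\beta$. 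This already gives the representation in Part A and the $\boldsymbol\beta$-commonality in Part B. Absolute continuity gives continuity of $\Psi$. Because $d$ ranges over $\mathcal E^o_{\mathrm{diff}}$, the support of the difference vector meets every orthant-type event in positive mass, so $\Psi(d)\in\relint(\Delta_{J-1})$. At $d=0$, $\Psi_j(0)=P(\widetilde\varepsilon_j\ge\widetilde\varepsilon_k\ \forall k)$. Writing $U_k=\widetilde F(\widetilde\varepsilon_k)$ (common marginal, so order is preserved) and conditioning on $U_j=v$ gives $\int_0^1\partial_{u_j}\widetilde C(v,\dots,v)\,dv$.

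\textbf{Step 2 (injectivity).} This is the main obstacle. Take $d\ne d'$ in $\mathcal E^o_{\mathrm{diff}}$, both with first coordinate $0$, and set $\delta=d'-d$. Let $m=\arg\max_j\delta_j$ and $S=\{j:\delta_j=\max\delta\}$; then $S\ne\mathcal J$, since otherwise $\delta$ would be constant, and the normalization $\delta_1=0$ would force $\delta=0$. For $j\in S$, moving from $d$ to $d'$ weakly enlarges every constraint $\widetilde\varepsilon_k-\widetilde\varepsilon_j\le d_j-d_k$, so the event $\{Y=j\}$ weakly grows. Hence $\sum_{j\in S}\Psi_j(d')\ge\sum_{j\in S}\Psi_j(d)$.

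Strictness is the delicate part. The event $\{\max_{j\in S}U_j\ge \max_{k\notin S}U_k\}$ is a union of half-space-type sets whose thresholds shift by the positive gap $\min_{j\in S,k\notin S}(\delta_j-\delta_k)>0$. The two events therefore differ by a set of the form $\{\text{gap between the best of }S\text{ and the best of }S^c\in(t,t+\eta]\}$. I will show this set has positive probability. The argument: both $d$ and $d'$ lie in the interior of the convex support of the differences, so the segment between them does too. Near a point of the boundary hyperplane of the $S$-vs-$S^c$ comparison, the density of the difference vector is positive on a set of positive measure. Here convexity and the nonempty interior are used, and the argument should be written carefully via the support of the projected difference vector. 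Consequently $\Psi(d)\ne\Psi(d')$.

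\textbf{Step 3 (topology, Parts B and C).} $\Psi$ is a continuous injection from the $(J-1)$-dimensional open set $\mathcal E^o_{\mathrm{diff}}$, viewed as open in $\mathbb R^{J-1}$ after dropping the zero coordinate, into the $(J-1)$-dimensional affine manifold $\relint(\Delta_{J-1})$. By Brouwer's invariance of domain, $\Psi$ is an open map and hence a topological embedding. This completes Part A: $d$, and therefore every sign of $d_j-d_k$ (equal to the sign of $x_j\beta_j-x_k\beta_k$), is determined by $\mathbf p$.

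Part B is then a set-theoretic consequence of injectivity: $\Psi(A\cap E)=\Psi(A)\cap\Psi(E)$ for injective $\Psi$, applied with $A=\mathcal D_{\boldsymbol\beta}$ and $E=\{d_j=d_k\}$, $\{d_j>d_k\}$, or $\{d_j<d_k\}$ intersected with $\mathcal E^o_{\mathrm{diff}}$.

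For Part C, the set $\mathcal E^o_{\mathrm{diff}}\cap\{d_j=d_k\}$ is a relatively open convex subset of a hyperplane. It is nonempty, because the exchange-free argument still shows the support is symmetric enough that the hyperplane passes through it; this needs checking, and I would use $0$ when $0\in\mathcal E^o_{\mathrm{diff}}$. So this set is a $(J-2)$-manifold. It splits the convex set $\mathcal E^o_{\mathrm{diff}}$ into two convex, hence connected, open halves, and the homeomorphism $\Psi$ transports all of this structure. When $0\in\mathcal E^o_{\mathrm{diff}}$, each strict order cone intersected with the open convex neighbourhood of $0$ is nonempty and convex, and contains $0$ in its closure. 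Applying $\Psi$ gives $J!$ nonempty connected regions with $\boldsymbol\pi^{*,M}=\Psi(0)$ in the closure of each.
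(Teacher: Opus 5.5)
Your architecture is the paper's. The location-scale structure reduces everything to one map $\Psi$ of $d=d^{\boldsymbol\beta}(\mathbf x)$. Injectivity comes from a cut argument, invariance of domain gives the embedding, and Part B is $\Psi(C_1\cap C_2)=\Psi(C_1)\cap\Psi(C_2)$. Part C transports the convex geometry of $\mathcal E^o_{\mathrm{diff}}$.

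Your cut $S=\{j:\delta_j=\max_i\delta_i\}$ is a slightly cleaner variant of the paper's $T=\{j:d_j>\widetilde d_j\}$: it is automatically nonempty and proper, so no reversal step is needed.

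The strictness step you only sketch can be closed directly, without the segment between $d$ and $d'$. By convexity, $d\in\mathcal E^o_{\mathrm{diff}}$ is the difference vector of an interior support point $e$ of $\widetilde{\boldsymbol\varepsilon}$, and at $e$ all normalized utilities $d_j+e_j$ tie. Lower the coordinates in $S$ by some $\eta$ below the gap $g=\max_i\delta_i-\max_{k\notin S}\delta_k$. This yields an interior point, and an open neighbourhood of it, on which $S$ loses under $d$ but wins under $d'$.

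The genuine gap is in Part C. Its first two claims (the embedded $(J-2)$-manifold, and the complement having \emph{exactly two} components) are not conditional on $0\in\mathcal E^o_{\mathrm{diff}}$. They require that $\mathcal E^o_{\mathrm{diff}}$ meets $\{d:d_j=d_k\}$ and has points strictly on both sides; otherwise the complement is connected and one of your ``halves'' is empty. You leave this as ``needs checking'' and fall back on $0\in\mathcal E^o_{\mathrm{diff}}$, but the theorem assumes that only for the last sentence.

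Your argument never uses the common-marginal hypothesis except in the formula for $\Psi(0)$, yet without it the claim is false. For example, if $\widetilde\varepsilon_j\geq\widetilde\varepsilon_k+1$ on the support, then $\mathcal E^o_{\mathrm{diff}}\subseteq\{d:d_j<d_k\}$.

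The missing step, as in the paper, is the following. Since $\widetilde\varepsilon_j$ and $\widetilde\varepsilon_k$ have the same law and $P(\widetilde\varepsilon_j=\widetilde\varepsilon_k)=0$ by absolute continuity, both $P(\widetilde\varepsilon_j>\widetilde\varepsilon_k)$ and $P(\widetilde\varepsilon_j<\widetilde\varepsilon_k)$ are positive. (An a.s.\ inequality between identically distributed variables forces a.s.\ equality.) The $j$-th minus $k$-th coordinate of the difference vector equals $\widetilde\varepsilon_k-\widetilde\varepsilon_j$, so both open half-spaces carry positive mass and therefore meet $\mathcal E^o_{\mathrm{diff}}$. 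Convexity then places a point on the hyperplane, after which your splitting and transport argument goes through.
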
 

The same discussion as after Theorem \ref{th:cs_heteroskedasticity} applies here. The common map $\Psi$  guarantees more than  fixed-$\boldsymbol{\beta}$ recovery and gives the common structural ranking 
partition of $\Psi(\mathcal E^o_{\mathrm{diff}})$, with a given $\boldsymbol{\beta}$  possibly  tracing only part of it. If, in
addition, $\mathcal E^o_{\mathrm{diff}}
\subseteq
\bigcup_{\boldsymbol\beta\in\mathcal B_0}
\mathcal D_{\boldsymbol\beta}$,
the common structural partition is jointly attained as
$\boldsymbol\beta$ varies and the  \emph{Structural environment} is ``quantile''.

Figure~\ref{fig:nonexch_gaussian} illustrates both the structural
indifference curve and its dependence on the unobservables  distribution, for simplicity taking $s(\mathbf{x})=1$. In each
case, $\widetilde{\boldsymbol\varepsilon}$ is a mean-zero trivariate normal
vector with unit variances. Let $\rho_{kj}$ denote the pairwise correlation between $\widetilde{\varepsilon}_k$ and $\widetilde{\varepsilon}_j$. We consider the following three cases of $(\rho_{12},\rho_{13},\rho_{23})$:  (a) case 1 is the case of all positive correlations 
$(0.15,\,0.35,\,0.55)$; (b) case 2 is the case of one negative and two positive correlations $ (-0.20,\,0.25,\,0.60)$; (c) case 3 is the case of two negative and one positive 
$(-0.3,\,-0.1,\,0.7)$. For these three cases, Figure \ref{fig:nonexch_gaussian} plots $(p_1(a),p_2(a))$ across $a$, where
\begin{equation*} 
p_k(a)  =  \int_{-\infty}^{+\infty} \frac{\partial C}{\partial u_k}( F(e_1), F(e_1), F(e_1 -a)) dF(e_1), \quad k=1,2,
\end{equation*}
which fully characterize partial indifference cases $d_2=0$ with $a$ standing for $d_3$. Figure \ref{fig:nonexch_gaussian} confirms that different correlation structures produce  different curvatures. The map $\Psi$ is common across values of
$\mathbf x$ only after the distribution of
$\widetilde{\boldsymbol\varepsilon}$ has been fixed. Within any one of these
structural environments, a particular $\boldsymbol\beta$ may still trace only
a segment or another proper subset of the corresponding curve.

\begin{figure}[ht]
    \centering
    \includegraphics[width=0.7\linewidth]{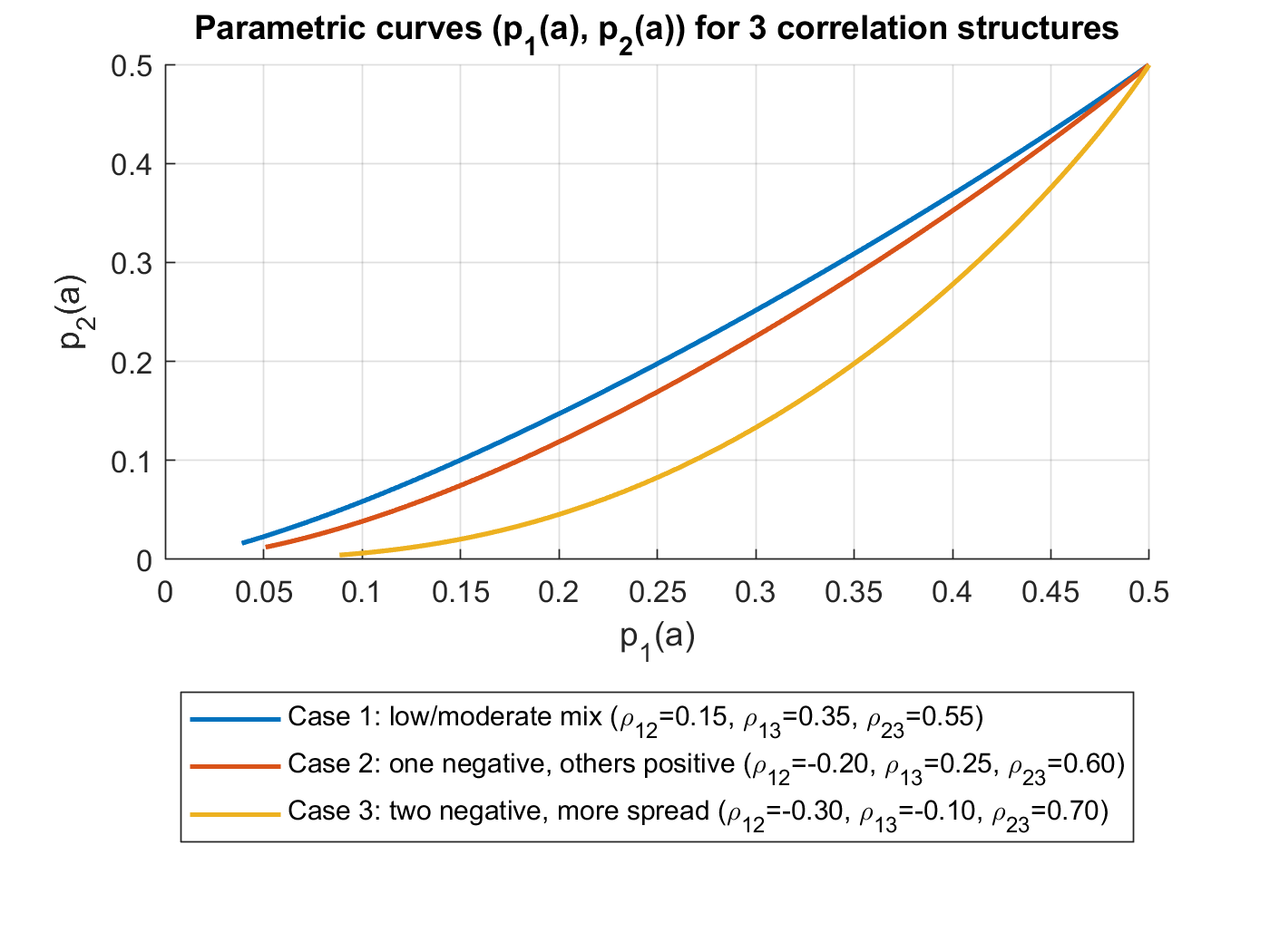}
    \caption{\label{fig:nonexch_gaussian}
    Indifference between alternatives $1$ and $2$ under three
    non-exchangeable trivariate normal unobservables  distributions. Each curve plots
    $(p_1,p_2)$ along $d_2=0$ as $d_3$ varies. The probability of alternative
    $3$ is $p_3=1-p_1-p_2$.}
\end{figure}

If the conditional distribution of $\boldsymbol\varepsilon$ varies with
$\mathbf x$ in a way that cannot be absorbed by a common location and scale, the challenges are analogous to those discussed in Section \ref{sec:cs_structural}. Namely,  there need not be a common index representation of choice probabilities across covariate values or a single map $\Psi$ that applies at every covariate value. Different
values of $\mathbf x$ may then be associated with different structural maps,
such as those generated by the three Gaussian specifications in
Figure~\ref{fig:nonexch_gaussian}. Even if each map separately generates a
smooth partial indifference codimension-one manifold, these manifolds may intersect or overlap when combined
across covariate values, and probability vectors associated with opposite
index rankings may coincide, producing ranking ambiguity.

\section{Conclusion}
\label{sec:conclusion}

A central insight of \citet{Manski1975} is that choice probabilities can
reveal ordinal information about deterministic utilities without any
parametric model of the utility unobservables. That  insight relies on a particular environment of  utility maximization with conditionally i.i.d. unobservables, which can be considered the  natural multinomial counterpart of the median benchmark in binary choice. This paper takes \citet{Manski1975}  as a starting point and asks how far the ordinal interpretation of choice probabilities extends once that benchmark is left behind, whether through behavioral departures  from utility maximization, such as exogenous limited attention, or through asymmetries in the distribution of utility unobservables. The extension is far from automatic.  With several alternatives, interactions across multiple comparison margins can lead to ranking ambiguity with no counterpart in binary choice.

\bibliographystyle{ecta}
\bibliography{BehMS.bib}


\clearpage 
\appendix
\numberwithin{equation}{section}

\setcounter{section}{0}

\begin{center}
    {\LARGE \textbf{Appendix}}
\end{center}

\section{Proofs and further examples for sections \ref{sec:binary} and \ref{sec:multinomial}}

\noindent \textbf{Proof of Theorem \ref{prop:median}.} 
Exchangeability of $(\varepsilon_1,\varepsilon_0)| x$ in Assumption~\ref{assn:binary_distributionMED} implies that the distribution of $\varepsilon_1-\varepsilon_0| x
$ is symmetric, with absolute continuity guaranteeing that 
$F_{\varepsilon_1-\varepsilon_0|x}(0)=1/2$.
Convexity and nonempty interior of the joint support imply that
the support of $\varepsilon_1-\varepsilon_0| x$ is an interval with
zero in its interior. Thus its c.d.f. is strictly increasing on the
interior of its support, and
\begin{equation}
\label{eq:proof1_aux}
P(U_1\geq U_0| x)
=
1-F_{\varepsilon_1-\varepsilon_0\mid x}(-x\beta)
\geq(\leq)\frac12
\quad\Longleftrightarrow\quad
x\beta\geq(\leq)0.
\end{equation}

\textbf{A}. Use the fact  that condition (2) can be written as 
$P(Y=1|U_1 \geq  U_0,x) + P(Y=1|U_1 < U_0,x)= 2(1-\tau)$ 
a.e. (and, of course, it implies $P(Y=0|U_1 \geq  U_0,x) + P(Y=0|U_1 < U_0,x)=2 \tau$.) Using the law of total probability and some simple algebra, 
\begin{align} P(Y=1|x) & = P(Y=1|U_1 \geq  U_0,x)P(U_1 \geq  U_0 |x) + P(Y=1|U_1 <  U_0,x) P(U_1 <  U_0 |x) 
\notag \\
&= P(Y=1|U_1 <  U_0,x) \notag \\
& + \left(P(Y=1|U_1 \geq  U_0,x)-P(Y=1|U_1 <  U_0,x)\right)P(U_1 \geq  U_0 |x).  \label{Th1proof1}
\end{align} 

Using \eqref{eq:proof1_aux} 
 and the properties of the right-hand side of \eqref{eq:proof1_aux}  under Assumption \ref{assn:binary_distributionMED}, and also taking into account that 
 $P(Y=1|U_1 \geq  U_0,x)-P(Y=1|U_1 <  U_0,x)>0$ by condition (1), we obtain that
\begin{align*} x\beta \geq (\leq) 0 \iff &  P(Y=1|x) \geq (\leq) P(Y=1|U_1 <  U_0,x) \\
& + 0.5\left(P(Y=1|U_1 \geq  U_0,x)-P(Y=1|U_1 <  U_0,x)\right) \\ & = 0.5\left(P(Y=1|U_1 \geq  U_0,x)+P(Y=1|U_1 <  U_0,x)\right)=1-\tau. 
\end{align*}

\vskip 0.05in

\textbf{B}. By Condition (2) and \eqref{Th1proof1},
{\small$$P(Y=1|x)-(1-\tau)
=
\left(
P(Y=1|U_1\geq U_0,x)-P(Y=1|U_1<U_0,x)
\right)
(
P(U_1\geq U_0|x)-0.5
).$$}Under Assumption \ref{assn:binary_distributionMED}, $P(U_1\geq U_0|x)-0.5
>(<)0
\Longleftrightarrow
x\beta>(<)0$, while \eqref{quantilenECON} implies
$P(Y=1|x)-(1-\tau)
>(<)0
\Longleftrightarrow
x\beta>(<)0$.
Hence, $P(Y=1|U_1\geq U_0,x)-P(Y=1|U_1<U_0,x)>0$
a.e. on $\{x:x\beta\neq0\}$.

\vskip 0.05in

\textbf{C.} Let us show that 
for every fixed \(\beta\neq0\), there exists a generic model satisfying Assumption~\ref{assn:binary_distributionMED}  and the strict version of Condition (1), but violating Condition (2) on a set of positive $P_X$-measure, for which (\eqref{quantilenECON}) fails.

Fix $\beta\neq 0$. Let $x$ have full support on $\mathbb{R}^k$, and let $\varepsilon_0$ and $\varepsilon_1$ be independent standard normal random variables, independent of $x$. Set  $U_1=x\beta+\varepsilon_1$,  $U_0=\varepsilon_0$.  Then Assumption 1 holds and $P(U_1\geq U_0| X=x) = \Phi\!\left(\frac{x\beta}{\sqrt{2}}\right)$.  Choose $\eta,d>0$ sufficiently small that all the probabilities below belong to $(0,1)$. By continuity of $\Phi$, choose $\xi>0$ such that  $d\left[ \frac12-\Phi\!\left(-\frac{\xi}{\sqrt{2}}\right) \right] <\eta$.  Specify  
$$P(Y=1| U_1\geq U_0,x) = \begin{cases} 1-\tau+\eta+\dfrac d2, & -\xi<x\beta<0,\\[4pt] 1-\tau+\dfrac d2, & \text{otherwise}, \end{cases}$$  
$$P(Y=1\mid U_1<U_0,X=x) = \begin{cases} 1-\tau+\eta-\dfrac d2, & -\xi<x\beta<0,\\[4pt] 1-\tau-\dfrac d2, & \text{otherwise}. \end{cases}$$  
The difference between these two conditional probabilities is $d>0$ for every $x$, so the strict version of Condition (1) holds. Their sum, however, equals $2(1-\tau+\eta)$ whenever $-\xi<x\beta<0$, so Condition (2) fails on that set. Since $\beta\neq0$ and $x$ has full support, this set has positive $P_X$-measure. For every $x$ in this set, the law of total probability gives 
$$P(Y=1|X=x) = 1-\tau+\eta +d\left[ \Phi\!\left(\frac{x\beta}{\sqrt{2}}\right)-\frac12 \right] > 1-\tau,$$ 
where the inequality follows from the choice of $\xi$. Hence, on a set of positive $P_X$-measure, \[ x\beta<0 \qquad\text{but}\qquad P(Y=1| X=x)>1-\tau, \] contradicting \eqref{quantilenECON}. $\blacksquare$

\vskip 0.1in 

\textbf{Proof of Proposition \ref{prop:implications}.} 
Relation \eqref{eq:proof1_aux} and condition (1) of Theorem \ref{prop:median} imply 
$$P(Y=1|x) \geq P(Y=1|U_1<U_0, x) + 0 \cdot P(Y=1|U_1 \geq U_0, x)=P(Y=1|U_1<U_0, x),$$
\begin{multline*}P(Y=1|x) \leq P(Y=1|U_1<U_0, x) + (P(Y=1|U_1\geq U_0, x)-P(Y=1|U_1<U_0, x)) \cdot 1 \\ =P(Y=1|U_1 \geq U_0, x).
\end{multline*}
Condition (2) implies then that 
$P(Y=1|x) \leq 2(1-\tau).$  
This inequality only provides a meaningful upper bound (that is, less than 1) if $\tau>1/2$. Condition (2) also implies  that 
$$P(Y=1|x) \geq P(Y=1|U_1 < U_0, x) = 2(1-\tau) - P(Y=1|U_1 \geq U_0, x) \geq 2(1-\tau) - 1.$$ 
This provides a meaningful lower bound (that is, greater than 0) only when $\tau<1/2$.
$\blacksquare$ 

\vskip 0.1in 

\textbf{Proof of Theorem \ref{prop:copula}.} 

\textbf{A.} By the law of total probability,
\begin{align*}
P(Y=1|x)
&=
P(Y=1|U_1<U_0,x)\\
&\quad+
\left[
P(Y=1|U_1\geq U_0,x)
-
P(Y=1|U_1<U_0,x)
\right]
P(\varepsilon_1-\varepsilon_0\geq -x\beta|x).
\end{align*}
Condition (2') gives
\begin{align*}
1-\tau
&=
P(Y=1|U_1<U_0,x)\\
&\quad+
\left[
P(Y=1|U_1\geq U_0,x)
-
P(Y=1|U_1<U_0,x)
\right]
P(\varepsilon_1-\varepsilon_0\geq0|x).
\end{align*}
Hence
\begin{align*}
P(Y=1|x)-(1-\tau)
&=
\left[
P(Y=1|U_1\geq U_0,x)
-
P(Y=1|U_1<U_0,x)
\right]\\
&\quad\times
\left[
P(\varepsilon_1-\varepsilon_0\geq -x\beta|x)
-
P(\varepsilon_1-\varepsilon_0\geq0|x)
\right].
\end{align*}
By the maintained support conditions, the second factor is strictly
positive, zero, or strictly negative according as
$x\beta>0$, $x\beta=0$, or $x\beta<0$. Condition (1') implies that the
first factor is strictly positive a.e. Therefore
\[
P(Y=1|x)-(1-\tau)
\begin{cases}
>0,&x\beta>0,\\
=0,&x\beta=0,\\
<0,&x\beta<0,
\end{cases}
\]
which proves \eqref{quantilenECON}.

\textbf{B} and \textbf{C}. The proofs are analogous to those of parts \textbf{B} and \textbf{C} of Theorem \ref{prop:median}. For part \textbf{B}, $0.5$ is replaced by $P(\varepsilon_1\geq\varepsilon_0|x)$, with the required strict inequalities following from the support conditions. For part \textbf{C}, the construction used there satisfies the present support conditions and Condition (2') reduces to Condition (2).

\vskip 0.1in

\noindent \textbf{Proof of Theorem \ref{th:multimedian}.} It is enough to show this for the first two options. 
{\footnotesize\begin{multline*}P(Y=j|\mathbf{x}) = P(Y=j|\mathbf{x}, U_j = U^{J:J})P(U_j = U^{J:J} |\mathbf{x}) \\
+ P(Y=j|x, U_{3-j} = U^{J:J})P(U_{3-j} = U^{J:J} |\mathbf{x})  \\+
P(Y=j|\mathbf{x}, \cup_{r=3}^J (U_r = U^{J:J})) P(\cup_{r=3}^J (U_r = U^{J:J})|\mathbf{x}). \quad j \in \{1,2\}.
\end{multline*}}
Take the difference for $j=1$ and $j=2$:  
{\footnotesize\begin{multline*}P(Y=1|\mathbf{x}) - P(Y=2|\mathbf{x})  = (P(Y=1|\mathbf{x}, U_1 = U^{J:J}) - P(Y=2|\mathbf{x}, U_1 = U^{J:J}))  P(U_1 = U^{J:J} |\mathbf{x}) \\
+ ((P(Y=1|\mathbf{x}, U_2 = U^{J:J}) - P(Y=2|\mathbf{x}, U_2 = U^{J:J}))P(U_2 = U^{J:J} |\mathbf{x})  \\+
(P(Y=1|\mathbf{x}, \cup_{j=3}^J (U_j = U^{J:J})) - P(Y=2|\mathbf{x}, \cup_{j=3}^J (U_j = U^{J:J})))P(\cup_{j=3}^J (U_j = U^{J:J})  |\mathbf{x}) 
\end{multline*}}
There are three terms on the right-hand side of the last difference. The third term is 0 as implied by Condition (3) of the theorem (we would have to use a law of total probability in this step). As for the first and second terms, Conditions (2) and (3) imply that 
{\footnotesize\begin{multline*}P(Y=1|\mathbf{x}, U_1 = U^{J:J}) - P(Y=2|\mathbf{x}, U_1 = U^{J:J}) \\ =- (P(Y=1|\mathbf{x}, U_2 = U^{J:J}) - P(Y=2|\mathbf{x}, U_2 = U^{J:J})).
\end{multline*}}
Thus, the difference then can be rewritten as 
{\footnotesize
\begin{multline*}
P(Y=1|\mathbf{x})-P(Y=2|\mathbf{x})\\
=
\left(
P(Y=1|\mathbf{x},U_1=U^{J:J})
-
P(Y=2|\mathbf{x},U_1=U^{J:J})
\right)
\left(
P(U_1=U^{J:J}|\mathbf{x})
-
P(U_2=U^{J:J}|\mathbf{x})
\right)\\
=
\left(
P(Y=1|\mathbf{x},U_1=U^{J:J})
-
P(Y=1|\mathbf{x},U_2=U^{J:J})
\right)
\left(
P(U_1=U^{J:J}|\mathbf{x})
-
P(U_2=U^{J:J}|\mathbf{x})
\right).
\end{multline*}}Condition (1) of the theorem implies that 
$P(Y=1|\mathbf{x}, U_1 = U^{J:J}) - P(Y=1|\mathbf{x}, U_2 = U^{J:J})>0$  a.e. in  $\mathbf{x}$. 
Therefore, 
$$P(Y=1|\mathbf{x}) - P(Y=2|\mathbf{x}) \geq 0 \iff P(U_1 = U^{J:J} |\mathbf{x}) -P(U_2 = U^{J:J} |\mathbf{x}) \geq 0.$$
Note that 
{\small\begin{multline*}
P(U_1 = U^{J:J}|\mathbf{x})  = P(\cap_{j=2}^J ( x_1\beta_1+\varepsilon_1 \geq x_j\beta_j+\varepsilon_j)) =\\
 = \int_{-\infty}^{+\infty} \frac{\partial C_\mathbf{x}}{\partial u_1} \left( F_{\varepsilon_1|\mathbf{x}}(\varepsilon|\mathbf{x}), F_{\varepsilon_2|\mathbf{x}}(\varepsilon +x_1 \beta_1-x_2 \beta_2), \ldots, F_{\varepsilon_J|\mathbf{x}}(\varepsilon +x_1 \beta_1-x_J \beta_J)\right) dF_{\varepsilon_1|\mathbf{x}} (\varepsilon|\mathbf{x}),
\end{multline*}
\begin{multline*}
P(U_2 = U^{J:J}|\mathbf{x})  = P(\cap_{j\neq 2}^J ( x_2\beta_2+\varepsilon_2 \geq x_j\beta_j+\varepsilon_j)) =\\
 = \int_{-\infty}^{+\infty} \frac{\partial C_x}{\partial u_2} \left( F_{\varepsilon_1|x}(\varepsilon +x_2 \beta_2-x_1 \beta_1), F_{\varepsilon_2|x}(\varepsilon|x), \ldots, F_{\varepsilon_J|\mathbf{x}}(\varepsilon +x_2 \beta_2-x_J \beta_J)\right) dF_{\varepsilon_2|\mathbf{x}} (\varepsilon|\mathbf{x}).
\end{multline*}}
Since all the conditional marginal distributions $F_{\varepsilon_j|\mathbf{x}}$, $j=1,\ldots, J$, are the same and the copula $C_\mathbf{x}$ is exchangeable, we can equivalently write 
{\small\begin{multline*}
P(U_2 = U^{J:J}|\mathbf{x})  =\\ 
\int_{-\infty}^{+\infty} \frac{\partial C_\mathbf{x}}{\partial u_1} \left( F_{\varepsilon_1|x}(\varepsilon|\mathbf{x}), F_{\varepsilon_2|\mathbf{x}}(\varepsilon +x_2 \beta_2-x_1\beta_1), \ldots, F_{\varepsilon_J|\mathbf{x}}(\varepsilon +x_2 \beta_2-x_J \beta_J)\right) dF_{\varepsilon_1|\mathbf{x}} (\varepsilon|\mathbf{x})
\end{multline*}}By copula properties, $\frac{\partial C_\mathbf{x}}{\partial u_1}(u_1,u_2,\ldots,u_J|\mathbf{x})$ is increasing in $u_2, \ldots, u_J$. As $x_1\beta_1 \geq x_2 \beta_2$ implies 
{\small$$F_{\varepsilon_2|\mathbf{x}}(\varepsilon +x_1 \beta_1-x_2\beta_2) \geq  F_{\varepsilon_2|x}(\varepsilon +x_2 \beta_2-x_1\beta_1), \quad 
F_{\varepsilon_j|\mathbf{x}}(\varepsilon +x_1 \beta_1-x_j\beta_j) \geq  F_{\varepsilon_j|\mathbf{x}}(\varepsilon +x_2 \beta_2-x_j\beta_j), $$}$j=3, \ldots,J,$ then  using this monotonicity property we conclude that $P(Y=1|\mathbf{x}) - P(Y=2|\mathbf{x}) \geq 0$. Analogous analysis lets us  conclude that for any $k_1 \neq k_2$, having $x_{k_1}\beta_{k_1} \geq x_{k_2}\beta_{k_2}$ implies $P(Y=k_1|\mathbf{x}) - P(Y=k_2|\mathbf{x}) \geq 0$.

Suppose now we have $P(Y=k_1|\mathbf{x}) - P(Y=k_2|\mathbf{x}) > 0$. Suppose, contrary to what we want to prove, we have $x_{k_1}\beta_{k_1} \leq x_{k_2}\beta_{k_2}$. Then, from what we have already shown,  we are guaranteed $P(Y=k_1|\mathbf{x}) - P(Y=k_2|\mathbf{x}) \leq 0$,  which is a contradiction to the supposition of $P(Y=k_1|\mathbf{x}) - P(Y=k_2|\mathbf{x}) > 0$. 
$\blacksquare$

\vskip 0.1in 

 \textbf{Proof of Proposition \ref{prop:notconnected}.} Suppose some $k_1$ and $k_2$ are not connected in the sense of Definition \ref{def:cs_connected}. Let 
 $A(k_i)$ denote the set of all the elements in $\mathcal{J}$ connected to $k_i$ in the sense of Definition \ref{def:cs_connected}, $i=1,2$. By the conditions of this theorem and the construction, we have $A(k_1) \neq \emptyset$, $A(k_2) \neq \emptyset$, and  $A(k_1) \cap A(k_2)=\emptyset$, and  $A(k_1)$ and $A(k_2)$ are disjoint connected
components. Moreover, no active consideration set can intersect both
$A(k_i)$ and its complement: for $i=1,2$, if $\gamma_A>0$ and
$A\cap A(k_i)\neq\varnothing$, then $A\subseteq A(k_i)$. In light of this,  
\begin{multline*}P(Y \in A(k_1)|\mathbf{x}) = \sum_{A \in \mathcal{A}} \gamma_A P(Y \in A(k_1)|\mathbf{x}, A) = \sum_{A \in \mathcal{A}: A \subseteq A(k_1) } \gamma_A P(Y \in A(k_1)|\mathbf{x}, A) \\
= \sum_{A \in \mathcal{A}: A \subseteq A(k_1) } \gamma_A \cdot 1 = \sum_{A \in \mathcal{A}: A \subseteq A(k_1) } \gamma_A,
\end{multline*}
where the last expression does not depend on $\mathbf{x}$. 

It remains to show that $A(k_1)$ is a proper subset of $\mathcal J$ and that the constant lies in $(0,1)$. By construction, $k_1 \in A(k_1)$ and $k_2 \notin A(k_1)$, so $\emptyset \neq A(k_1) \subsetneq \mathcal J$.

By Assumption~\ref{assn:noemptyCS}, there exists an active set $B$ with $k_1 \in B$, which must satisfy $B \subseteq A(k_1)$, implying $\sum_{A \subseteq A(k_1)} \gamma_A > 0$. Likewise, there exists an active set $C$ with $k_2 \in C$, where $C \subseteq A(k_2)$ and $A(k_2)\cap A(k_1)=\emptyset$, so $C \nsubseteq A(k_1)$ and hence 
$\sum_{A \subseteq A(k_1)} \gamma_A < 1$.

Therefore, $0 < \sum_{A \subseteq A(k_1)} \gamma_A < 1$, and $P(Y \in A(k_1)\mid \mathbf x)$ is  a constant in $(0,1)$ (a.e.). 
 $\blacksquare$

\vskip 0.1in

\textbf{Proof of Theorem  \ref{th:linearJ3}.}   The exchangeability of the joint distribution of unobservables  conditional on $\mathbf{x}$ implies that for any $c_1, c_2$ and any permutation $(k_1,k_2,k_3)$ of $(1,2,3)$, 
$$\overline{F}_{(\varepsilon_{k_1}-\varepsilon_{k_2}, \varepsilon_{k_1}-\varepsilon_{k_3})|\mathbf{x}}(c_1,c_2) \text{ does not depend on } k_1, k_2, k_3;$$
$$\overline{F}_{\varepsilon_{k_1}-\varepsilon_{k_2}|\mathbf{x}}(c) \text{   does not depend on } k_1, k_2.$$
Denote the first function as $G_{2,\mathbf{x}}(c_1,c_2)$ and the second one as $G_{1,\mathbf{x}}(c)$. 

We have 3 equations: for $k_1$ it is  
\begin{multline} 
\label{eq:Pk1} P(Y=k_1|\mathbf{x})= \gamma_{\{1,2,3\}}G_{2,\mathbf{x}}( -x_{k_1}\beta_{k_1}+x_{k_2}\beta_{k_2},-x_{k_1}\beta_{k_1}+x_{k_3}\beta_{k_3})   \\+ \gamma_{\{k_1,k_2\}} G_{1,\mathbf{x}}( -x_{k_1}\beta_{k_1}+x_{k_2}\beta_{k_2}) 
+ \gamma_{\{k_1,k_3\}} G_{1,\mathbf{x}}( -x_{k_1}\beta_{k_1}+x_{k_3}\beta_{k_3})+\gamma_{\{k_1\}}, 
\end{multline} 
and analogous for $k_2$, $k_3$. These equations do not depend on  the order of indices $x_{k_1}\beta_{k_1}$, $x_{k_2}\beta_{k_2}$, $x_{k_3}\beta_{k_3}$. The  system of the three equations in the form of (\ref{eq:Pk1}) has only two linearly independent equations. Thus, without a loss of generality, we can consider just the equations for $P(Y=k_1|\mathbf{x})$ and $P(Y=k_2|\mathbf{x})$.

To characterize $\mathcal{I}^{B, \cup_{\boldsymbol{\beta}}}_{k_1,k_2}$, we have to look at the collection of choice probabilities obtained under $x_{k_1}\beta_{k_1}= x_{k_2}\beta_{k_2}$.    With this equality enforced,  
let us denote $a(\mathbf{x})=-x_{k_1}\beta_{k_1}+x_{k_3}\beta_{k_3}=-x_{k_2}\beta_{k_2}+x_{k_3}\beta_{k_3}$. For sufficiency, no restriction on the attainable range of this value is needed. 
At the collection of such indices we have the following relations: 
\begin{align}P(Y=k_1|\mathbf{x}) & = \gamma_{\{1,2,3\}}G_{2,\mathbf{x}}( 0,a(\mathbf{x}))  + 0.5 \gamma_{\{k_1,k_2\}} + \gamma_{\{k_1,k_3\}} G_{1,\mathbf{x}}( a(\mathbf{x}))+\gamma_{\{k_1\}}, \label{eq:Pk1unknown}
\\P(Y=k_2|\mathbf{x}) & = \gamma_{\{1,2,3\}}G_{2,\mathbf{x}}( 0,a(\mathbf{x}))  + 0.5 \gamma_{\{k_1,k_2\}} + \gamma_{\{k_2,k_3\}} G_{1,\mathbf{x}}( a(\mathbf{x}))+\gamma_{\{k_2\}}, 
\label{eq:Pk2unknown}
\\P(Y=k_3|\mathbf{x}) & = \gamma_{\{1,2,3\}}(1-2G_{2,\mathbf{x}}( 0,a(\mathbf{x})))  + \gamma_{\{k_1,k_3\}}(1-G_{1,\mathbf{x}}( a(\mathbf{x}))) \notag \\
& \quad + \gamma_{\{k_2,k_3\}} (1-G_{1,\mathbf{x}}( a(\mathbf{x}))) +\gamma_{\{k_3\}}. 
\label{eq:Pk3unknown}
\end{align}
Equation \eqref{eq:Pk3unknown} is a linear combination of the first two equation, therefore we can just focus on those two equations. 
On the right-hand sides of \eqref{eq:Pk1unknown}-\eqref{eq:Pk2unknown} we have only two unknowns: $G_{2,\mathbf{x}}(0,a(\mathbf{x}))=G_{2,\mathbf{x}}(a(\mathbf{x}),0)$ and $G_{1,\mathbf{x}}( a(\mathbf{x}))$. With respect to these unknowns  our system is linear and has $C_{k_1,k_2}$ as the matrix of coefficients corresponding to $(G_{2,\mathbf{x}}( 0,a(\mathbf{x})), G_{1,\mathbf{x}}( a(\mathbf{x})))^\top$. The $j$-th row in $C_{k_1,k_2}$ if obtained from the equation for $P(Y=k_j|\mathbf{x})$, $j=1,2$, in (\ref{eq:Pk1unknown})-(\ref{eq:Pk2unknown}).

\textbf{Sufficiency}. First, suppose the rank of $C_{k_1,k_2}$ is 1. This means that $d_1(\gamma_{\{1,2,3\}}, \gamma_{\{k_1,k_3\}})=d_2(\gamma_{\{1,2,3\}}, \gamma_{\{k_2,k_3\}})$ for some $d_1,d_2 \geq 0$ and with at least one of $d_1,d_2$ being strictly positive. Then we find that equations (\ref{eq:Pk1unknown})-(\ref{eq:Pk2unknown}) give the relationship 
$d_1(P(Y=k_1|\mathbf{x}) -0.5 \gamma_{\{k_1,k_2\}} -\gamma_{\{k_1\}}) = d_2 (P(Y=k_2|\mathbf{x}) -0.5 \gamma_{\{k_1,k_2\}} -\gamma_{\{k_2\}})$ 
and, now employing equation (\ref{eq:Pk1}) and analogous equation for $k_2$ we can conclude that if $(x_{k_1}\beta_{k_1}, x_{k_2}\beta_{k_2}, x_{k_3}\beta_{k_3})^{\top} \in \mathcal{E}^\circ_{k_1,k_2,k_3,\mathbf{x}}$ ( where  $\mathcal{E}^\circ_{k_1,k_2,k_3,\mathbf{x}}$ denotes the interior  of the support of $(\varepsilon_{k_1}, \varepsilon_{k_2}, \varepsilon_{k_3})|\mathbf{x}$ ), then   $x_{k_1}\beta_{k_1} > x_{k_2}\beta_{k_2}$ iff 
$d_1(P(Y=k_1|\mathbf{x}) -0.5 \gamma_{\{k_1,k_2\}} -\gamma_{\{k_1\}}) >  d_2 (P(Y=k_2|\mathbf{x}) -0.5 \gamma_{\{k_1,k_2\}} -\gamma_{\{k_2\}}).$ 

If $\gamma_{\{1,2,3\}}>0$ then this means that $d_1=d_2>0$ and  $\gamma_{\{k_1,k_3\}}= \gamma_{\{k_2,k_3\}}$ and $\mathcal{I}^{B, \cup_{\boldsymbol{\beta}}}_{k_1,k_2}$ can be equivalently characterized by $P(Y=k_1|\mathbf{x})-\pi^*_{k_1}=P(Y=k_2|\mathbf{x})-\pi^*_{k_2}$. 

If $\gamma_{\{1,2,3\}}=0$, then  at least one of $\gamma_{\{k_1,k_3\}}$ and $\gamma_{\{k_2,k_3\}}$ must be  strictly positive. (i) If they are both strictly positive, this implies that $d_1/d_2=\gamma_{\{k_2,k_3\}}/\gamma_{\{k_1,k_3\}}$ and we can just take $d_1=\gamma_{\{k_2,k_3\}}$, $d_2=\gamma_{\{k_1,k_3\}}$. We then note that $\gamma_{\{k_2,k_3\}}(P(Y=k_1|\mathbf{x}) -0.5 \gamma_{\{k_1,k_2\}} -\gamma_{\{k_1\}}) = \gamma_{\{k_1,k_3\}} (P(Y=k_2|\mathbf{x}) -0.5 \gamma_{\{k_1,k_2\}} -\gamma_{\{k_2\}})$ can be equivalently rewritten as $\gamma_{\{k_2,k_3\}}(P(Y=k_1|x) -\pi^*_{k_1}) = \gamma_{\{k_1,k_3\}} (P(Y=k_2|x) -\pi^*_{k_2})$. (ii) If one of $\gamma_{\{k_1,k_3\}}$, $\gamma_{\{k_2,k_3\}}$ is 0 -- e.g., $\gamma_{\{k_1,k_3\}}=0$, then $\gamma_{\{k_2,k_3\}}>0$ and  necessarily $d_2=0$ whereas $d_1$ can be any positive number. $\mathcal{I}^{B, \cup_{\boldsymbol{\beta}}}_{k_1,k_2}$ can be described by  $P(Y=k_1|\mathbf{x})  -\pi^*_{{k_1}} =0$.  In the other situation when $\gamma_{\{k_1,k_3\}}>0$ and $\gamma_{\{k_2,k_3\}}=0$,  we have  $d_1=0$ and $d_2$ is any positive number. 
$\mathcal{I}^{B, \cup_{\boldsymbol{\beta}}}_{k_1,k_2}$ then is described by $0=P(Y=k_2|\mathbf{x})  -\pi^*_{\{k_2\}}$.  

To summarize all these cases, when the rank of $C_{k_1,k_2}$ is 1, we have 
$\mathcal{I}^{B, \cup_{\boldsymbol{\beta}}}_{k_1,k_2}$ and $\mathcal{P}^{+,B, \cup_{\boldsymbol{\beta}}}_{k_1,k_2}$ as described in the theorem. $\mathcal{I}^{B, \cup_{\boldsymbol{\beta}}}_{k_1,k_2}$ in all the cases are intervals. so they have affine and topological dimensions 1, whereas $\mathcal{P}^{+,B, \cup_{\boldsymbol{\beta}}}_{k_1,k_2}$ and $\mathcal{P}^{+,B, \cup_{\boldsymbol{\beta}}}_{k_2,k_1}$ are 2-dimensional (from both affine and topological perspectives) sections of the plane. 

Now suppose the rank of $C_{k_1,k_2}$ is 0. This means that $\gamma_{\{1,2,3\}} = \gamma_{\{k_1,k_3\}}=\gamma_{\{k_2,k_3\}}=0$, and, hence, $k_3$ is not connected to either $k_1$ or $k_2$. At the same time, $\gamma_{\{k_1,k_2\}}>0$ as the theorem states that $k_1$ and $k_2$ are connected. Note that in this case $P(Y=k_1|\mathbf{x})+P(Y=k_2|\mathbf{x})=\gamma_{\{k_1,k_2\}} +\gamma_{\{k_1\}}+\gamma_{\{k_2\}}$ and  $P(Y=k_3|\mathbf{x})=\gamma_{\{k_3\}}$ a.e. $x$. Thus, $\Delta^{B, \cup_{\boldsymbol{\beta}}}_{2}$ itself is an interval (hence, has affine and topological dimension 1). When $x_{k_1}\beta_{k_1}=x_{k_2}\beta_{k_2}$,  equations (\ref{eq:Pk1unknown})-(\ref{eq:Pk2unknown}) give $P(Y=k_1|\mathbf{x}) -0.5 \gamma_{\{k_1,k_2\}} -\gamma_{\{k_1\}}=0$ and $P(Y=k_2|\mathbf{x}) -0.5 \gamma_{\{k_1,k_2\}} -\gamma_{\{k_2\}}=0$, equivalently rewritten in this case as $P(Y=k_1|\mathbf{x}) -\pi^*_{k_1}=0$,  $P(Y=k_2|\mathbf{x}) -\pi^*_{{k_2}}=0$. Thus, $\mathcal{I}^{B, \cup_{\boldsymbol{\beta}}}_{k_1,k_2}$ is a point (has affine and topological dimension 0). Now employing equation (\ref{eq:Pk1}) and analogous equation for $k_2$ we can conclude that if $(x_{k_1}\beta_{k_1}, x_{k_2}\beta_{k_2}, x_{k_3}\beta_{k_3})^{\top} \in \mathcal{E}^\circ_{k_1,k_2,k_3,\mathbf{x}}$, then $x_{k_1}\beta_{k_1} > x_{k_2}\beta_{k_2}$ iff $P(Y=k_1|\mathbf{x}) -\pi^*_{k_1}> 0$, or, equivalently, iff $P(Y=k_2|\mathbf{x}) -\pi^*_{k_2} < 0$, or equivalently, iff $P(Y=k_1|\mathbf{x})  -\pi^*_{k_1}> P(Y=k_2|\mathbf{x})  -\pi^*_{k_2}$ (since $P(Y=k_1|\mathbf{x})  + P(Y=k_2|\mathbf{x})=\pi^*_{k_1}+\pi^*_{k_2}$ a.e. $\mathbf{x}$ in this case). Thus, $\mathcal{P}^{+,B, \cup_{\boldsymbol{\beta}}}
_{k_1,k_2}$ and $\mathcal{P}^{+,B, \cup_{\boldsymbol{\beta}}}_{k_2,k_1}$ are intervals and have affine and topological dimension 1.

\textbf{Necessity}. Let $\mathcal{I}^{B, \cup_{\boldsymbol{\beta}}}_{k_1,k_2}$  be linear in $\mathbf{p}$. Contrary to the statement of the proposition, suppose the rank of $C_{k_1,k_2}$ is 2. This implies that $\gamma_{\{1,2,3\}}>0$ and $\gamma_{\{k_1,k_3\}} \neq \gamma_{\{k_2,k_3\}}$.  Linear $\mathcal{I}^{B, \cup_{\boldsymbol{\beta}}}_{k_1,k_2}$  has to pass through $(\pi_{1}^*,\pi_{2}^*, \pi_{3}^*)$ point obtained for the case of the overall indifference. Taking into account that the sum of all  choice probabilities is 1,  we can thus  characterize this linear $\mathcal{I}^{B, \cup_{\boldsymbol{\beta}}}_{k_1,k_2}$  as the set of vectors $(p_{k_1}, p_{k_2}, 1-p_{k_1}-p_{k_2}) \in \Delta^{B, \cup_{\boldsymbol{\beta}}}_{2}$ that satisfy 
$d_1(p_{k_1}-\pi_{k_1}^*)-d_2 (p_{k_2}-\pi_{k_2}^*)=0,$ 
where at least one $d_j\neq0$, When $p_{k_1}$ and $p_{k_2}$ are replaced by $P(Y=k_1|\mathbf{x})$  and $P(Y=k_2|\mathbf{x})$, respectively, this relation has an implication with the first two-equations (\ref{eq:Pk1unknown})-(\ref{eq:Pk2unknown}) giving us 
\begin{equation} 
\label{eq:implication}
    (d_1 -d_2)\gamma_{\{1,2,3\}}(G_{2, \mathbf{x}}(0, a(\mathbf{x}))-1/3)+ (d_1 \gamma_{\{k_1,k_3\}}-d_2 \gamma_{\{k_2,k_3\}}) (G_{1, \mathbf{x}}(a(\mathbf{x}))-1/2) =0.
\end{equation}
Since a \emph{Behavioral environment}  separator must be valid for every
admissible conditional distribution of the unobservables, it suffices to restrict attention to one admissible
$\mathbf{x}$-invariant exchangeable distribution. We do so throughout
this necessity argument.

If $d_1=d_2$, then the fact that $d_1,d_2>0$ and  $\gamma_{\{k_1,k_3\}} \neq \gamma_{\{k_2,k_3\}} $  implies that $d_1 \gamma_{\{k_1,k_3\}}-d_2 \gamma_{\{k_2,k_3\}} \neq 0$. Equation (\ref{eq:implication}) is possible only if $G_{1, \mathbf{x}}(a(\mathbf{x}))-1/2=0$ which gives us a contradiction for $a(\mathbf{x}) \neq 0$. Thus $d_1 \neq d_2$. 

If $d_1\gamma_{\{k_1,k_3\}}-d_2\gamma_{\{k_2,k_3\}} = 0$, then  $\gamma_{\{k_1,k_3\}} \neq \gamma_{\{k_2,k_3\}} $  implies that $d_1-d_2 \neq 0$. Equation (\ref{eq:implication}) is then possible only if $G_{2,\mathbf{x}}(0, a(\mathbf x))-1/3=0$ which gives us a contradiction for $a(x) \neq 0$. Thus, $d_1\gamma_{\{k_1,k_3\}}-d_2\gamma_{\{k_2,k_3\}} \neq 0.$ 

Having $d_1\gamma_{\{k_1,k_3\}}-d_2\gamma_{\{k_2,k_3\}} \neq 0$ and $d_1 \neq d_2$, we obtain that for any $a(\mathbf{x}) \neq 0$ we can write 
$$ \frac{G_{2, \mathbf{x}}(0,a(\mathbf{x}))-1/3}{G_{1, \mathbf{x}}(a(\mathbf{x}))-1/2} = \frac{d_2 \gamma_{\{k_2,k_3\}}-d_1 \gamma_{\{k_1,k_3\}}}{(d_1 -d_2)\gamma_{\{1,2,3\}}} \quad \iff$$
$$\frac{ G_{2, \mathbf{x}}(0,a(\mathbf{x}))-G_{2, \mathbf{x}}(0,0)}
{G_{1, \mathbf{x}}(a(\mathbf{x}))-G_{1, \mathbf{x}}(0)} = \frac{d_2 \gamma_{\{k_2,k_3\}}-d_1 \gamma_{\{k_1,k_3\}}}{(d_1 -d_2)\gamma_{\{1,2,3\}}}.$$

Under the additional index variation condition used for the converse and mentioned before the theorem in the main text, $a(\mathbf{x})$  can be taken arbitrarily far in either direction. If  $a(\mathbf{x})$ is taken arbitrarily close to the lower support point of the distribution of $\varepsilon_{k_1}-\varepsilon_{k_2}|\mathbf{x}$, it gives us 
$ \frac{d_2 \gamma_{\{k_2,k_3\}}-d_1 \gamma_{\{k_1,k_3\}}}{(d_1 -d_2)\gamma_{\{1,2,3\}}}=\frac{1/2-1/3}{1-1/2}=\frac{1}{3}.$ 
If $a(\mathbf{x})$ is taken arbitrarily close to the upper support point of the distribution of $\varepsilon_{k_1}-\varepsilon_{k_2}|\mathbf{x}$, it gives us 
$\frac{d_2 \gamma_{\{k_2,k_3\}}-d_1 \gamma_{\{k_1,k_3\}}}{(d_1 -d_2)\gamma_{\{1,2,3\}}} =\frac{0-1/3}{0-1/2} = \frac{2}{3}$. This is clearly a contradiction.  
$\blacksquare$

\vskip 0.05in 

\textbf{Proof of Theorem \ref{th:cs_heteroskedasticity}.}
A. Under the common location-scale representation, common location terms cancel and $U_j\geq U_k
\;\Longleftrightarrow\;
\widetilde\varepsilon_k-\widetilde\varepsilon_j
\leq
\frac{x_j\beta_j-x_k\beta_k}{s(\mathbf x)}.$

Thus the choice probability vector depends on $(\mathbf x,\boldsymbol\beta)$ only through $d=d^{\boldsymbol\beta}(\mathbf x)$, which defines $\Psi$ on $\mathcal E^o_{\mathrm{diff}}$. Connectivity and interior support imply that every alternative is chosen with positive probability on this domain, so $\Psi(d)\in\relint(\Delta_{J-1})$, and absolute continuity gives continuity.

To prove injectivity, take $d\neq\widetilde d$ with $d_1=\widetilde d_1=0$, and let $T=\{j:d_j>\widetilde d_j\}$. If $T$ is nonempty, then $1\notin T$ and $d_j-d_k>\widetilde d_j-\widetilde d_k$ when $j\in T$, $ k\notin T$. Every alternative in $T$ therefore becomes strictly more attractive relative to every alternative outside $T$. Connectivity of the consideration set structure and the nonempty interior of the unobservables  support imply $\sum_{j\in T}\Psi_j(d)
>
\sum_{j\in T}\Psi_j(\widetilde d)$. If $T$ is empty, apply the same argument after reversing $d$ and $\widetilde d$. Hence $\Psi$ is injective. Since the projection of $\mathcal E^o_{\mathrm{diff}}$ onto the last $J-1$ coordinates is open in $\mathbb R^{J-1}$, invariance of domain implies that $\Psi$ is a homeomorphism onto its image, and thus a topological embedding. Its inverse therefore recovers $d$ and all pairwise signs $d_j-d_k$.

B. For any $C_1, C_2\subseteq\mathcal E^o_{\mathrm{diff}}$, injectivity gives
$\Psi(C_1)\cap\Psi(C_2)=\Psi(C_1\cap C_2)$. Taking $C_1=\mathcal D_{\boldsymbol\beta}$ and $C_2=\mathcal E^o_{\mathrm{diff}}\cap\{d:d_j=d_k\}$ proves the displayed identity, and the cases $>$ and $<$ are identical. The set $C_2$ and its image do not depend on $\boldsymbol\beta$, whereas $\mathcal D_{\boldsymbol\beta}$ may. This proves statements on the common partition and attainable trace. 

C. $\mathcal E^o_{\mathrm{diff}}$ is relatively open and convex in $\{d\in\mathbb R^J:d_1=0\}$. Moreover, exchangeability and convexity of the support imply that $0\in\mathcal E^o_{\mathrm{diff}}$. Hence, for every $j\neq k$, we have that  $\mathcal E^o_{\mathrm{diff}}\cap\{d:d_j=d_k\}$ is a nonempty relatively open subset of a $(J-2)$-dimensional hyperplane and therefore a $(J-2)$-dimensional topological manifold. Its complement in $\mathcal E^o_{\mathrm{diff}}$ has exactly two connected components, $\mathcal E^o_{\mathrm{diff}}\cap\{d:d_j>d_k\}$ and $\mathcal E^o_{\mathrm{diff}}\cap\{d:d_j<d_k\}$, which are nonempty and convex. Since $\Psi$ is a homeomorphism onto its image, the same manifold and separation properties hold for their images under $\Psi$. Finally, relative openness at $0$ implies that every strict order region in the index space intersects $\mathcal E^o_{\mathrm{diff}}$ arbitrarily close to $0$. Each such intersection is convex and, hence, connected. Thus, all $J!$ strict order regions in the probability space are nonempty and connected, and continuity of $\Psi$ implies that $\Psi(0)$ lies in the closure of each. 
\hfill$\blacksquare$

\vskip 0.05in 

\noindent\textbf{Proof of Theorem~\ref{th:nonexch_specialcase}.} A. The location-scale representation again makes the choice probability vector a function $\Psi(d)$ of $d=d^{\boldsymbol\beta}(\mathbf x)$ alone. Interior support implies that every alternative has positive winning probability on $\mathcal E^o_{\mathrm{diff}}$, and absolute continuity gives continuity. The same set argument used above proves injectivity: if $T=\{j:d_j>\widetilde d_j\}$ is nonempty, every deterministic utility difference across the cut from $T^c$ to $T$ increases strictly. Convexity and nonempty interior of the unobservables support give positive probability to realizations for which the maximizer moves from $T^c$ to $T$, so $\sum_{j\in T}\Psi_j(d)
>
\sum_{j\in T}\Psi_j(\widetilde d)$.

Reversing $d$ and $\widetilde d$ covers the remaining case. Invariance of domain then makes $\Psi$ a topological embedding.

At complete indifference, the common marginal c.d.f. give
$P\!\left(\widetilde\varepsilon_j\geq\widetilde\varepsilon_k
\text{ for all }k\right)
=
\int_0^1
\frac{\partial\widetilde C}{\partial u_j}(v,\ldots,v)\,dv
=
\pi_j^{*,M}.$ Whenever $0\in\mathcal E^o_{\mathrm{diff}}$, the left-hand side is $\Psi_j(0)$. 

B. Analogous to the proof of Theorem \ref{th:cs_heteroskedasticity}. Injectivity of $\Psi$ implies that, for any
$C_1,C_2\subseteq\mathcal E^o_{\mathrm{diff}}$, it is true that $\Psi(C_1)\cap\Psi(C_2)=\Psi(C_1\cap C_2)$. Taking $C_1=\mathcal D_{\boldsymbol\beta}$
and 
$C_2=\mathcal E^o_{\mathrm{diff}}\cap\{d:d_j\diamond d_k\}$,
for $\diamond\in\{<,=,>\}$, we have $\Psi\!\left(
\mathcal D_{\boldsymbol\beta}\cap\{d:d_j\diamond d_k\}
\right)
=
\Psi(\mathcal D_{\boldsymbol\beta})
\cap
\Psi\!\left(
\mathcal E^o_{\mathrm{diff}}\cap\{d:d_j\diamond d_k\}
\right)$.

C. $\mathcal E^o_{\mathrm{diff}}$ is relatively open and convex in $\{d\in\mathbb R^J:d_1=0\}$. For any $j\neq k$, identical marginal distributions imply $P(\widetilde\varepsilon_j>\widetilde\varepsilon_k)>0$ and $P(\widetilde\varepsilon_j<\widetilde\varepsilon_k)>0$. Hence,  $\mathcal E^o_{\mathrm{diff}}$ contains points on both sides of $d_j=d_k$. By convexity,  $\mathcal E^o_{\mathrm{diff}}\cap\{d:d_j=d_k\}$  is  a nonempty relatively open subset of a $(J-2)$-dimensional hyperplane (and hence a $(J-2)$-dimensional topological manifold). Its complement  in $\mathcal E^o_{\mathrm{diff}}$ has exactly two connected components, $\mathcal E^o_{\mathrm{diff}}\cap\{d:d_j>d_k\}$ and $\mathcal E^o_{\mathrm{diff}}\cap\{d:d_j<d_k\}$,  which are convex. Since $\Psi$ is a homeomorphism onto its image, the same manifold and separation properties hold for their images under $\Psi$. Finally, if $0\in\mathcal E^o_{\mathrm{diff}}$, relative openness implies that every strict order region intersects $\mathcal E^o_{\mathrm{diff}}$ arbitrarily close to $0$. Each such intersection is convex and hence connected. Thus all $J!$ strict order regions in the probability space are nonempty and connected, and continuity of $\Psi$ implies that $\Psi(0)=\boldsymbol\pi^{*,M}$ lies in the closure of each. \hfill$\blacksquare$

\section{Linear ranking separators for $J\geq 4$}

This appendix extends Theorem~\ref{th:linearJ3}. Fix two distinct alternatives $k_1,k_2$.  The analogue of the
matrix used for $J=3$ is
\begin{equation}
C^{(J)}_{k_1,k_2}
\equiv
\left(
\begin{array}{cc}
\bigl(\gamma_{S\cup\{k_1,k_2\}}\bigr)_{\varnothing\neq S\subseteq
\mathcal J\setminus\{k_1,k_2\}}
&
\bigl(\gamma_{S\cup\{k_1\}}\bigr)_{\varnothing\neq S\subseteq
\mathcal J\setminus\{k_1,k_2\}}
\\[1mm] 
\bigl(\gamma_{S\cup\{k_1,k_2\}}\bigr)_{\varnothing\neq S\subseteq
\mathcal J\setminus\{k_1,k_2\}}
&
\bigl(\gamma_{S\cup\{k_2\}}\bigr)_{\varnothing\neq S\subseteq
\mathcal J\setminus\{k_1,k_2\}}
\end{array}
\right).
\label{eq:CgeneralJ}
\end{equation}
Thus, the first block records menus that contain both $k_1$ and $k_2$ and at
least one other alternative, while the second block records otherwise matched
menus that contain only one member of the pair. 

Just like in Theorem \ref{th:linearJ3}, the sufficient rank condition in Theorem \ref{th:linear_generalJ} below again applies on the attainable index set. For the converse, we additionally require sufficiently rich joint variation in the $J-1$ index differences.
\begin{theorem}
\label{th:linear_generalJ}
Suppose Assumptions \ref{assn:multi_distributionMED}--\ref{assn:rationalCS}
hold and $J\geq4$.  Let $k_1$ and $k_2$ be connected in the sense of
Definition~\ref{def:cs_connected}.  The strict characterizations below apply
to probability vectors generated at index vectors in the interior of the
support of $\boldsymbol\varepsilon|\mathbf x$, as in the proof of
Theorem~\ref{th:linearJ3}.  Then:

\textbf{A.} If
$\mathrm{rank}(C^{(J)}_{k_1,k_2})<2$, pairwise ranking can be
recovered by a linear hyperplane involving only $p_{k_1}$ and
$p_{k_2}$ on every attainable interior configuration. Under the
additional index variation condition before the theorem, the converse also holds.

This rank condition is equivalent to
the following restrictions:
for every pair of nonempty sets
$S,T\subseteq\mathcal J\setminus\{k_1,k_2\}$,
\begin{equation}
\gamma_{S\cup\{k_1\}}\gamma_{T\cup\{k_2\}}
=
\gamma_{S\cup\{k_2\}}\gamma_{T\cup\{k_1\}},
\qquad
\gamma_{S\cup\{k_1,k_2\}}
\bigl(\gamma_{T\cup\{k_1\}}-\gamma_{T\cup\{k_2\}}\bigr)=0.
\label{eq:primitive-rank-generalJ}
\end{equation}

\textbf{B.} If $\mathrm{rank}(C^{(J)}_{k_1,k_2})=1$, then
$\mathcal I^{B,\cup_{\boldsymbol\beta}}_{k_1,k_2}
\cap\relint(\Delta^{B,\cup_{\boldsymbol\beta}}_{J-1})$ coincides with the intersection of
$\relint(\Delta^{B,\cup_{\boldsymbol\beta}}_{J-1})$ and the hyperplane
\begin{multline}
\left[
\sum_{\varnothing\neq S\subseteq\mathcal J\setminus\{k_1,k_2\}}
\bigl(\gamma_{S\cup\{k_1,k_2\}}+\gamma_{S\cup\{k_2\}}\bigr)
\right](p_{k_1}-\pi_{k_1}^*)
\\
=
\left[
\sum_{\varnothing\neq S\subseteq\mathcal J\setminus\{k_1,k_2\}}
\bigl(\gamma_{S\cup\{k_1,k_2\}}+\gamma_{S\cup\{k_1\}}\bigr)
\right](p_{k_2}-\pi_{k_2}^*).
\label{eq:generalJ-hyperplane}
\end{multline}
The set
$\mathcal P^{+,B,\cup_{\boldsymbol\beta}}_{k_1,k_2}
\cap\relint(\Delta^{B,\cup_{\boldsymbol\beta}}_{J-1})$
coincides with the intersection of
$\relint(\Delta^{B,\cup_{\boldsymbol\beta}}_{J-1})$ and the open
half-space obtained by replacing equality in
\eqref{eq:generalJ-hyperplane} with $>$.

\textbf{C.} If $\mathrm{rank}(C^{(J)}_{k_1,k_2})=0$.  Then no active
menu connects either $k_1$ or $k_2$ to any outside alternative.  Since
$k_1$ and $k_2$ are connected, $\gamma_{\{k_1,k_2\}}>0$, and
\begin{align*}
\mathcal I^{B,\cup_{\boldsymbol\beta}}_{k_1,k_2} \cap \relint(\Delta^{B,\cup_{\boldsymbol\beta}}_{J-1})
&=
\left\{\mathbf p\in\relint(\Delta^{B,\cup_{\boldsymbol\beta}}_{J-1}):
 p_{k_1}-\pi_{k_1}^*=0,\quad p_{k_2}-\pi_{k_2}^*=0\right\},\\
\mathcal P^{+,B,\cup_{\boldsymbol\beta}}_{k_1,k_2} \cap \relint(\Delta^{B,\cup_{\boldsymbol\beta}}_{J-1})
&=
\left\{\mathbf p\in\relint(\Delta^{B,\cup_{\boldsymbol\beta}}_{J-1}):
 p_{k_1}-\pi_{k_1}^*>0,\quad p_{k_2}-\pi_{k_2}^*<0\right\}.
\end{align*}
\end{theorem}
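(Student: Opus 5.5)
The plan mirrors the proof of Theorem~\ref{th:linearJ3}, with the two scalar unknowns replaced by menu-indexed families. Write $v_j=x_j\beta_j$. Exchangeability (Assumption~\ref{assn:multi_distributionMED}) means that, for a menu $A\ni j$, the probability that $j$ wins in $A$ is a function $G_{A,j,\mathbf x}$ of the differences $(v_m-v_j)_{m\in A\setminus\{j\}}$. This function depends only on $|A|$ and is symmetric in its arguments. By Assumptions~\ref{assn:indCS}--\ref{assn:rationalCS},
\[
P(Y=k_i|\mathbf x)=\sum_{A\ni k_i}\gamma_A\,P(Y=k_i|A,\mathbf x).
\]
Fix $v_{k_1}=v_{k_2}$ and put $a_m=v_m-v_{k_1}$ for $m\notin\{k_1,k_2\}$. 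Then each menu contributes as follows:
\begin{itemize}
\item $\{k_1\}$ and $\{k_2\}$ give constants.
\item $\{k_1,k_2\}$ gives $\gamma_{\{k_1,k_2\}}/2$ to each coordinate.
\item $S\cup\{k_1,k_2\}$ gives the same unknown $g_S:=P(k_1\text{ wins in }S\cup\{k_1,k_2\})$ to both coordinates.
\item $S\cup\{k_i\}$ gives $h_S:=P(k_i\text{ wins in }S\cup\{k_i\})$, which is the same value for $i=1,2$ because $v_{k_1}=v_{k_2}$.
\end{itemize}
On the indifference locus we therefore get
\[
p_{k_i}-c_i=\sum_S\gamma_{S\cup\{k_1,k_2\}}\,g_S+\sum_S\gamma_{S\cup\{k_i\}}\,h_S,
\]
i.e. $(p_{k_1}-c_1,\,p_{k_2}-c_2)^\top=C^{(J)}_{k_1,k_2}(g,h)^\top$.

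\textbf{Equivalence with \eqref{eq:primitive-rank-generalJ}.} Rank $<2$ means the two rows are proportional. The first block has equal entries in both rows, so either that block vanishes or the proportionality factor is $1$. In either case this is exactly the list of $2\times2$ minors in \eqref{eq:primitive-rank-generalJ}: the cross-products $\gamma_{S\cup\{k_1\}}\gamma_{T\cup\{k_2\}}=\gamma_{S\cup\{k_2\}}\gamma_{T\cup\{k_1\}}$ from the second block, and $\gamma_{S\cup\{k_1,k_2\}}(\gamma_{T\cup\{k_1\}}-\gamma_{T\cup\{k_2\}})=0$ from the mixed minors. This is pure linear algebra.

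\textbf{Sufficiency (parts A and B).} If $d_1\,\mathrm{row}_1=d_2\,\mathrm{row}_2$ with $(d_1,d_2)\ge0$ not both zero, then on the locus $d_1(p_{k_1}-c_1)=d_2(p_{k_2}-c_2)$. At total indifference the point is $\boldsymbol\pi^*$ from \eqref{pistar}, so the constants can be replaced by $\pi^*_{k_i}$.

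To obtain the explicit coefficients in \eqref{eq:generalJ-hyperplane}, I will sum the rows. Row sums are $\sum_S(\gamma_{S\cup\{k_1,k_2\}}+\gamma_{S\cup\{k_i\}})$. Proportionality gives $d_1\cdot(\text{row sum }1)=d_2\cdot(\text{row sum }2)$, so one may take $d_1=$ row sum $2$ and $d_2=$ row sum $1$; connectivity ensures these are not both zero. For the degenerate case where one row vanishes I will handle the two subcases separately, as in the $J=3$ proof.

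For the strict half-space, write $D(\mathbf x)=d_1(p_{k_1}-\pi^*_{k_1})-d_2(p_{k_2}-\pi^*_{k_2})$ at a general index vector. I need $D>0$ iff $v_{k_1}>v_{k_2}$ on interior configurations. Pair each menu term $S\cup\{k_1\}$ with $S\cup\{k_2\}$, and each $S\cup\{k_1,k_2\}$ with itself. Under rank one, $D$ becomes a nonnegative combination, not identically zero, of differences of the form $P(k_1\text{ wins in }A)-P(k_2\text{ wins in }A')$ with matched outside sets. The $\{k_1,k_2\}$ term and the singletons cancel against $\pi^*$ up to the strictly monotone binary term. By the copula-monotonicity argument of Theorem~\ref{th:multimedian}, each such difference is $\ge0$ when $v_{k_1}\ge v_{k_2}$, and it is strict on the interior of the support whenever the corresponding $\gamma$ is positive. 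Connectivity gives at least one positive weight, which yields the half-space characterization. The same argument with the roles of $k_1,k_2$ swapped gives the reverse inequality.

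\textbf{Part C (rank $0$).} All $\gamma$'s involving outside alternatives together with $k_1$ or $k_2$ vanish, so $p_{k_1}+p_{k_2}$ is constant. Only the binary menu $\{k_1,k_2\}$ is active, and the conclusion follows as in Theorem~\ref{th:linearJ3}C.

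\textbf{Necessity (the main obstacle).} Suppose rank $=2$ and that a linear separator $d_1(p_{k_1}-\pi^*_{k_1})=d_2(p_{k_2}-\pi^*_{k_2})$ exists. On the locus this forces
\[
\sum_S w_S\,(g_S-g_S^0)+\sum_S u_S\,(h_S-h_S^0)=0,
\]
where $w_S=(d_1-d_2)\gamma_{S\cup\{k_1,k_2\}}$, $u_S=d_1\gamma_{S\cup\{k_1\}}-d_2\gamma_{S\cup\{k_2\}}$, and the superscript $0$ denotes values at $a=0$. Rank $2$ means some $w_S$ or $u_S$ is nonzero.

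I would fix one $\mathbf x$-invariant i.i.d. distribution and use the rich index variation to drive selected $a_m$ to $\pm$ the support limits while holding the others at $0$. This produces limiting values of $g_S$ and $h_S$ that are explicit ratios such as $1/(|S\cap\{\text{unsent}\}|+2)$. Choosing two extreme configurations, e.g. a single $a_m$ sent down versus up, should produce a linear system in $(w,u)$ whose only solution is zero, contradicting rank $2$ exactly as the $1/3$ versus $2/3$ contradiction does in Theorem~\ref{th:linearJ3}.

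The delicate step is to show that the family of limit configurations (sending subsets of outside alternatives to $-\infty$ or $+\infty$) spans enough to annihilate all $(w_S,u_S)$. I would argue by induction on $|S|$, using configurations where exactly the alternatives in a chosen $S$ are kept near $0$ and all others are sent to $-\infty$. This isolates $w_S$ and $u_S$ up to smaller sets, and then two different values of a common shift separate $w_S$ from $u_S$.
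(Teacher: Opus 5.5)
Your rank/minor equivalence, the rank-one sufficiency with row-sum coefficients, the strict half-space argument by pairing matched menus, and Part C all follow the paper's proof. One small correction: in the rank-one case it is rank one itself, not connectivity, that makes $(\Gamma_2,\Gamma_1)\neq0$; connectivity is what forces $\gamma_{\{k_1,k_2\}}>0$ when one exclusive-menu row vanishes.

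\textbf{The gap: the converse in Part A.} You leave this at ``should'' and ``I would argue'', and the mechanism you describe does not close as stated. Suppose you keep a set $S$ at a common shift $a$ and send every other outside alternative to $-\infty$. Then each menu built on $T$ collapses to the one built on $T\cap S$, and the limiting identity is
$\sum_{\varnothing\neq R\subseteq S}\bigl[W^S_R f^{(|R|)}_C(a)+U^S_R f^{(|R|)}_E(a)\bigr]+K_S=0$. Here $W^S_R=\sum_{T:\,T\cap S=R}w_T$, the functions $f^{(m)}_C,f^{(m)}_E$ are $k_1$'s winning probabilities against $m$ rivals at $a$ (with, respectively without, $k_2$ at $0$), and $K_S$ is an unknown constant. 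This does not isolate $w_S$ ``up to smaller sets''. It mixes $|S|$ different pairs of functions, with coefficients that are sums over supersets. To conclude you would need four further steps:
\begin{itemize}
\item the inductive hypothesis $\sum_{T\supseteq Q}w_T=\sum_{T\supseteq Q}u_T=0$ for nonempty $Q$ with $|Q|<|S|$, to reduce every $W^S_R$ to $(-1)^{|S\setminus R|}\sum_{T\supseteq S}w_T$;
\item three evaluations, not two: $a\to+\infty$ to kill $K_S$, then $a\to-\infty$ and $a=0$;
\item a check that the resulting $2\times2$ system is nonsingular. Its determinant is, up to sign, $\sum_{m=1}^{s}\binom{s}{m}(-1)^{s-m}\bigl[\tfrac{1}{2(m+1)}-\tfrac{1}{m+2}\bigr]=(-1)^s\tfrac{s}{2(s+1)(s+2)}\neq0$, with $s=|S|$;
\item a final M\"obius inversion over supersets.
\end{itemize}
None of this is in the proposal, so as written the converse is unproved. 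Your route can, however, be completed along these lines.

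\textbf{How the paper avoids the mixture.} It frees only one outside alternative $r$, placed at $a$. It sends a set $B\subseteq\mathcal J\setminus\{k_1,k_2,r\}$ to $+\infty$, which drives $k_1$'s winning probability to zero in every menu meeting $B$, and sends the rest to $-\infty$. Then only $G_C(a)$ and $G_E(a)$ appear. In your notation their coefficients are $A_B=\sum_{S\ni r,\,S\cap B=\varnothing}w_S$ and $E_B=\sum_{S\ni r,\,S\cap B=\varnothing}u_S$. The limits $(G_C,G_E)\to(1/2,1)$, $(1/3,1/2)$, $(0,0)$ at $a\to-\infty$, $a=0$, $a\to+\infty$ force $K_B=A_B=E_B=0$. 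M\"obius inversion over $B$ then gives $w_S=u_S=0$ for every $S\ni r$. Doing this for each $r$ makes $(d_1,-d_2)$ a nonzero left-null vector of $C^{(J)}_{k_1,k_2}$, contradicting rank two. The $+\infty$ device, which restricts attention to menus avoiding $B$ while leaving a single free coordinate, is the idea your sketch is missing.
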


The interpretation of \eqref{eq:primitive-rank-generalJ} is similar to the case $J=3$.  The first equality
in \eqref{eq:primitive-rank-generalJ} requires the probabilities of matched
menus containing $k_1$ and $k_2$ separately to be proportional across outside
sets.  The second requires them to be equal whenever an active menu contains
both alternatives and at least one outside alternative. 

\noindent\textbf{Proof of Theorem \ref{th:linear_generalJ}.}
For every nonempty consideration set $A$ and
$h\in A$, let 
 $q_h^A(\mathbf x)
 :=P(Y=h| A,\mathbf x)
 $. Then 
$P(Y=h|\mathbf x)=\sum_{A:h\in A}\gamma_Aq_h^A(\mathbf x)$. 

Let $O=\mathcal J\setminus\{k_1,k_2\}$ and impose
$x_{k_1} \beta_{k_1}=x_{k_2} \beta_{k_2}$.  For each nonempty $S\subseteq O$, exchangeability under
the transposition of $k_1$ and $k_2$ gives $g_S^C(\mathbf x)
 =q_{k_1}^{S\cup\{k_1,k_2\}}(\mathbf x)
   =q_{k_2}^{S\cup\{k_1,k_2\}}(\mathbf x)$, $g_S^E(\mathbf x)
 =q_{k_1}^{S\cup\{k_1\}}(\mathbf x)
   =q_{k_2}^{S\cup\{k_2\}}(\mathbf x)$, and also  
$q_{k_1}^{\{k_1,k_2\}}=q_{k_2}^{\{k_1,k_2\}}=1/2$.  Consequently, the
individual choice probabilities on the pairwise-indifference locus are
\begin{equation}
 p_{k_j}
 =\gamma_{\{k_j\}}+\frac{\gamma_{\{k_1,k_2\}}}{2}
 +\sum_{\varnothing\neq S\subseteq O}
 \left[
 \gamma_{S\cup\{k_1,k_2\}}g_S^C(\mathbf x)
 +\gamma_{S\cup\{k_j\}}g_S^E(\mathbf x)
 \right], \quad j=1,2.
 \label{eq:pk1-indiff-generalJ}
\end{equation}
Considering deviations from values   at total indifference, we rewrite equation in \eqref{eq:pk1-indiff-generalJ} as 
\begin{equation}
 \begin{pmatrix}
 p_{k_1}-\pi_{k_1}^*\\[1mm]
 p_{k_2}-\pi_{k_2}^*
 \end{pmatrix}
 =C^{(J)}_{k_1,k_2}
 \begin{pmatrix}
 (g_S^C-1/(|S|+2))_{\varnothing\neq S\subseteq O}\\[1mm]
 (g_S^E-1/(|S|+1))_{\varnothing\neq S\subseteq O}
 \end{pmatrix}.
 \label{eq:centered-system-generalJ}
\end{equation} 

\textbf{A}.  We start by proving equivalence of the rank condition to \eqref{eq:primitive-rank-generalJ}. Write the two types of columns of
$C^{(J)}_{k_1,k_2}$ as $ c_S=
 \gamma_{S\cup\{k_1,k_2\}}\binom{1}{1}$, $e_S=
 \binom{\gamma_{S\cup\{k_1\}}}
       {\gamma_{S\cup\{k_2\}}}$, where $\varnothing\neq S\subseteq O$.
All minors formed from two common-menu columns $c_S$, $c_T$ vanish
identically.  The minor formed from $e_S$, $e_T$ is $ \gamma_{S\cup\{k_1\}}\gamma_{T\cup\{k_2\}}
 -\gamma_{S\cup\{k_2\}}\gamma_{T\cup\{k_1\}},$ and the minor formed from $c_S$, $e_T$ is $ \gamma_{S\cup\{k_1,k_2\}}
 \left(\gamma_{T\cup\{k_2\}}-
       \gamma_{T\cup\{k_1\}}\right)$.
Thus all two-by-two minors vanish iff 
\eqref{eq:primitive-rank-generalJ} holds.

\textit{Necessity.} Suppose constants $a_1,a_2$, not both zero, satisfy
\begin{equation}
 a_1(p_{k_1}-\pi_{k_1}^*)+a_2(p_{k_2}-\pi_{k_2}^*)=0
 \label{eq:putative-two-coordinate-generalJ}
\end{equation}
for every probability vector generated with $x_{k_1}\beta_{k_1}=x_{k_2}\beta_{k_2}$ in the
behavioral environment.  By \eqref{eq:centered-system-generalJ}, this is
$(a_1,a_2)C^{(J)}_{k_1,k_2}u=0$ for every admissible vector of menu-level
deviations $u$.  Write $ \alpha_S=(a_1+a_2)\gamma_{S\cup\{k_1,k_2\}}$, $\eta_S=a_1\gamma_{S\cup\{k_1\}}+
         a_2\gamma_{S\cup\{k_2\}}$.
Then \eqref{eq:putative-two-coordinate-generalJ} implies $\sum_{\varnothing\neq S\subseteq O}\alpha_Su_S^C
 +\sum_{\varnothing\neq S\subseteq O}\eta_Su_S^E=0$ for every attainable tied index vector. Fix $r\in O$ and $B\subseteq O\setminus\{r\}$. We now use the
additional index variation condition imposed for the converse before the theorem.
Since the $J-1$ index differences range over all of
$\mathbb{R}^{J-1}$, for every $a\in\mathbb{R}$ there exists an
attainable sequence of tied index configurations that, after a common
normalization, satisfies $x_{k_1}\beta_{k_1}=x_{k_2}\beta_{k_2}=0$,
 $x_r\beta_r=a$,  
with $x_s\beta_s\to+\infty$ for $s\in B$ and $x_s\beta_s\to-\infty$ for $s\in O\setminus(B\cup\{r\})$. 
Since a \emph{Behavioral environment}  separator must be valid for every
admissible conditional distribution of the unobservables, for the
converse it suffices to restrict attention to one admissible
$\mathbf{x}$-invariant exchangeable distribution. We do so throughout
this necessity argument.  Menus containing an element of $B$ make the probability of choosing $k_1$ converge to zero, whereas alternatives whose indices tend to $-\infty$ become irrelevant.  By dominated convergence, from the above we get 
\begin{equation}
 \underbrace{\left(\sum_{\substack{S\ni r\\ S\cap B=\varnothing}}\alpha_S\right)}_{:=A_B} G_C(a)+ \underbrace{\left(\sum_{\substack{S\ni r\\ S\cap B=\varnothing}}\eta_S\right)}_{:=E_B} G_E(a)+K_B=0 \qquad\text{for every }a\in\mathbb R,
 \label{eq:one-coordinate-completeness-generalJ}
\end{equation}
where $K_B$ does not depend on $a$,  and $G_C(a):=P\!\left(\varepsilon_{k_1}\geq\varepsilon_{k_2},\ 
             \varepsilon_{k_1}-\varepsilon_r\geq a
             \mid\mathbf x\right)$, 
$ G_E(a):=P\!\left(\varepsilon_{k_1}-\varepsilon_r\geq a
             \mid\mathbf x\right).$ Exchangeability and absolute continuity imply
 $G_C(0)=\frac13,$ $G_E(0)=\frac12$. Then 
$ \lim_{a\to-\infty}(G_C(a),G_E(a))=\left(\frac12,1\right)$, and 
 $\lim_{a\to+\infty}(G_C(a),G_E(a))=(0,0)$. 
Evaluating \eqref{eq:one-coordinate-completeness-generalJ} at these three
limits gives $K_B=0$, $A_B/3+E_B/2=0$, and $A_B/2+E_B=0$.
Hence $A_B=E_B=0$ for every $B\subseteq O\setminus\{r\}$.

As $B$ varies, these are all cumulative subset sums of the coefficients
$\{\alpha_S:S\ni r\}$ and $\{\eta_S:S\ni r\}$.  M\"obius inversion on
the subset lattice therefore gives $\alpha_S=\eta_S=0$ for every $S$
containing $r$.  Repeating the argument for each $r\in O$ yields
$\alpha_S=\eta_S=0$ for every nonempty $S$, which is precisely
$(a_1,a_2)C^{(J)}_{k_1,k_2}=0$.  A nonzero left-null vector exists only when 
$\mathrm{rank}(C^{(J)}_{k_1,k_2})<2$.  

\textit{Sufficiency.} It is established in Parts B and C.

\vskip 0.1in

\textbf{B}.  Define $\Gamma_1:=\sum_{\varnothing\neq S\subseteq O}
 \left(\gamma_{S\cup\{k_1,k_2\}}+
       \gamma_{S\cup\{k_1\}}\right)$,  
 $\Gamma_2:=\sum_{\varnothing\neq S\subseteq O}
 \left(\gamma_{S\cup\{k_1,k_2\}}+
       \gamma_{S\cup\{k_2\}}\right)$. If $C^{(J)}_{k_1,k_2}$ has rank one, its two rows are proportional and therefore $ \Gamma_2\,\text{row}_1(C^{(J)}_{k_1,k_2})
 =\Gamma_1\,\text{row}_2(C^{(J)}_{k_1,k_2})$. Multiplying \eqref{eq:centered-system-generalJ} by the left null vector
$(\Gamma_2,-\Gamma_1)$ gives  $\Gamma_2(p_{k_1}-\pi_{k_1}^*)
 =\Gamma_1(p_{k_2}-\pi_{k_2}^*)$, 
which is exactly \eqref{eq:generalJ-hyperplane}.

It remains to show that the same linear functional has the strict sign of
$x_{k_1}\beta_{k_1}-x_{k_2}\beta_{k_2}$ away from indifference.  Let  $H(\mathbf x):=
 \Gamma_2((\mathbf{p}(\mathbf{x}))_{k_1}-\pi_{k_1}^*)
 -\Gamma_1((\mathbf{p}(\mathbf{x}))_{k_2}-\pi_{k_2}^*)$.

There are two cases. First, suppose
$\gamma_{S\cup\{k_1,k_2\}}>0$ for at least one nonempty $S$.  A positive
common-menu column and rank one force the two rows of $C^{(J)}_{k_1,k_2}$ to be
identical.  Thus $\Gamma_1=\Gamma_2=: \Gamma>0$ and
$\gamma_{S\cup\{k_1\}}=\gamma_{S\cup\{k_2\}}$ for every $S$.  Then 
{\small\begin{multline}
 \frac{H(\mathbf x)}{\Gamma}
 =\gamma_{\{k_1,k_2\}}
   \left(q_{k_1}^{\{k_1,k_2\}}-q_{k_2}^{\{k_1,k_2\}}\right) 
 +\sum_{\varnothing\neq S\subseteq O}
 \gamma_{S\cup\{k_1,k_2\}}
 \left(q_{k_1}^{S\cup\{k_1,k_2\}}-
       q_{k_2}^{S\cup\{k_1,k_2\}}\right)\\
 +\sum_{\varnothing\neq S\subseteq O}
 \gamma_{S\cup\{k_1\}}
 \left(q_{k_1}^{S\cup\{k_1\}}-
       q_{k_2}^{S\cup\{k_2\}}\right).
 \label{eq:H-common-menu-generalJ}
\end{multline}}

Second, suppose all common-menu probabilities in the first block of
$C^{(J)}_{k_1,k_2}$ are zero.  Rank one implies proportionality of the two
exclusive-menu rows, so for every $S \neq \varnothing$, $ w_S:=\Gamma_2\gamma_{S\cup\{k_1\}}
     =\Gamma_1\gamma_{S\cup\{k_2\}}\geq0$. A direct expansion  gives
{\small\begin{multline}
 H(\mathbf x)
 =\frac{\gamma_{\{k_1,k_2\}}(\Gamma_1+\Gamma_2)}{2}
 \left(q_{k_1}^{\{k_1,k_2\}}-q_{k_2}^{\{k_1,k_2\}}\right)
 +\sum_{\varnothing\neq S\subseteq O}w_S
 \left(q_{k_1}^{S\cup\{k_1\}}-
       q_{k_2}^{S\cup\{k_2\}}\right).
 \label{eq:H-exclusive-menu-generalJ}
\end{multline}}
If one exclusive-menu row is zero, connectivity of $k_1$ and $k_2$ implies
$\gamma_{\{k_1,k_2\}}>0$, so the first coefficient in
\eqref{eq:H-exclusive-menu-generalJ} is then strictly positive.

Each difference in \eqref{eq:H-common-menu-generalJ} and
\eqref{eq:H-exclusive-menu-generalJ} has the weak sign of
$x_{k_1}\beta_{k_1}-x_{k_2}\beta_{k_2}$.  For a common menu this follows by applying the ranking
property under exchangeability to the restriction of
$\boldsymbol\varepsilon$ to that menu.  For matched exclusive menus, the two
probabilities are evaluations of the same function: $q_{k_1}^{S\cup\{k_1\}}
 =P\!\left(\varepsilon_{k_1}-\varepsilon_s\geq x_s\beta_{s}-x_{k_1}\beta_{k_1}
       \ \forall s\in S\mid\mathbf x\right)$, $ q_{k_2}^{S\cup\{k_2\}}
 =P\!\left(\varepsilon_{k_2}-\varepsilon_s\geq x_s\beta_{s}-x_{k_2}\beta_{k_2}
       \ \forall s\in S\mid\mathbf x\right)$
and exchangeability makes this function the same for $k_1$, $k_2$ and increasing in their respective indices.  At index vectors
in the interior of the conditional support, every active informative term is
strict when $x_{k_1}\beta_{k_1}\neq x_{k_2}\beta_{k_2}$.  Hence, $ \mathrm{sgn}\,H(\mathbf x)=
 \mathrm{sgn}(x_{k_1}\beta_{k_1}- x_{k_2}\beta_{k_2})$. This proves the equality and strict half-space characterizations and the
separation claim.  If the whole consideration set structure is connected,
$\Delta^{B,\cup_{\boldsymbol\beta}}_{J-1}$ has affine dimension $J-1$, and the zero set of $H$ has affine dimension $J-2$ and passes through $\boldsymbol\pi^*$.

\vskip 0.1in 

\textbf{C}. Suppose $\mathrm{rank}(C^{(J)}_{k_1,k_2})=0$.  Then every menu containing
one member of the pair and at least one outside alternative has zero
probability, including every menu containing both members and an outside
alternative.  Since $k_1$ and $k_2$ are connected, then necessarily $\gamma_{\{k_1,k_2\}}>0$ and $P(Y={k_1}|\mathbf x)-\pi_{k_1}^*
 =\gamma_{\{k_1,k_2\}}
   \left(q_{k_1}^{\{k_1,k_2\}}(\mathbf x)-\frac12\right)$, 
 $P(Y={k_2}|\mathbf x)-\pi_{k_2}^*
=\gamma_{\{k_1,k_2\}}
   \left(q_{k_2}^{\{k_1,k_2\}}(\mathbf x)-\frac12\right)$.Exchangeability and the interior-support condition imply that these centered
probabilities are both zero exactly when $x_{k_1}\beta_{k_1}=x_{k_2}\beta_{k_2}$, and have the
signs stated in part \textbf{C} when $x_{k_1}\beta_{k_1}\neq x_{k_2}\beta_{k_2}$. 
\hfill$\blacksquare$

Corollary~\ref{cor:linear-allpairs-generalJ} is the $J\geq4$
counterpart of Corollary~\ref{cor:linearJ3allpairs}. 

\begin{corollary}
\label{cor:linear-allpairs-generalJ}
Suppose Assumptions~\ref{assn:multi_distributionMED}--
\ref{assn:rationalCS} hold, $J\geq4$, and the whole consideration set
structure is connected. Let $C^{(J)}_{k_1,k_2}$ in
\eqref{eq:CgeneralJ} have rank one for every distinct pair
$k_1,k_2$.

Then the pairwise indifference hyperplanes in
\eqref{eq:generalJ-hyperplane}, viewed as ambient hyperplanes in
$\relint(\Delta_{J-1})$, all pass through $\boldsymbol\pi^*$ and divide
$\relint(\Delta_{J-1})$ into $J!$ regions, one for each complete strict
ordering of the utility indices. On the attainable probability set, the
trace of each ambient region coincides with the corresponding strict
ranking region. These traces are pairwise disjoint but need not all be
nonempty.
\end{corollary}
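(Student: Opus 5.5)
The proof will follow the $J=3$ argument behind Corollary~\ref{cor:linearJ3allpairs}, with Theorem~\ref{th:linear_generalJ} doing the pairwise work. First, for every distinct pair $(k_1,k_2)$ apply Part~B of Theorem~\ref{th:linear_generalJ}. Connectivity of the whole structure makes each pair connected, and rank one places us in Part~B. So for each pair we get the linear functional $H_{k_1,k_2}(\mathbf p)=\Gamma_2^{k_1k_2}(p_{k_1}-\pi^*_{k_1})-\Gamma_1^{k_1k_2}(p_{k_2}-\pi^*_{k_2})$. On $\relint(\Delta^{B,\cup_{\boldsymbol\beta}}_{J-1})$, the sign of $H_{k_1,k_2}$ at $\mathbf p(\mathbf x)$ equals the sign of $x_{k_1}\beta_{k_1}-x_{k_2}\beta_{k_2}$. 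Each ambient hyperplane $\{H_{k_1,k_2}=0\}$ visibly contains $\boldsymbol\pi^*$. It is a genuine hyperplane because $(\Gamma_1,\Gamma_2)\neq(0,0)$, which is what connectivity ensures as in the proof of Part~B.

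Second, to show the hyperplanes cut $\relint(\Delta_{J-1})$ into exactly $J!$ cells indexed by strict orderings, I would build an explicit linear realization. In the case where some common menu is active, rank one forces $\Gamma_1=\Gamma_2$, so $H_{k_1,k_2}$ reduces to $(p_{k_1}-\pi^*_{k_1})-(p_{k_2}-\pi^*_{k_2})$. If this holds for all pairs, the arrangement is the Manski braid arrangement translated to $\boldsymbol\pi^*$. Its cells are then exactly the $J!$ regions $\{p_{k_1}-\pi^*_{k_1}>\dots>p_{k_J}-\pi^*_{k_J}\}$. Each is nonempty near $\boldsymbol\pi^*$ since $\boldsymbol\pi^*\in\relint(\Delta_{J-1})$ by connectivity.

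For pairs with only exclusive menus active, I would try to show that the pairwise weights are consistent: there are positive scalings $c_j$ with $H_{k_1,k_2}\propto c_{k_1}(p_{k_1}-\pi^*_{k_1})-c_{k_2}(p_{k_2}-\pi^*_{k_2})$ for all pairs simultaneously. The arrangement is then the braid arrangement in the coordinates $y_j=c_j(p_j-\pi^*_j)$. Restricted to the hyperplane $\sum_j p_j=1$, this is a linear change of variables that preserves cell count, provided $\sum_j (y_j/c_j)=0$ still meets every ordering cone. It does, because the $c_j$ are positive.

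This consistency, a cocycle condition $\frac{\Gamma_1^{k_1k_2}}{\Gamma_2^{k_1k_2}}\frac{\Gamma_1^{k_2k_3}}{\Gamma_2^{k_2k_3}}=\frac{\Gamma_1^{k_1k_3}}{\Gamma_2^{k_1k_3}}$, is the main obstacle. I would first try to derive it from the primitive conditions \eqref{eq:primitive-rank-generalJ} imposed for all pairs. If the algebra resists, there is a fallback. On attainable points transitivity of the index order gives the claim directly via Implication~3, and cell disjointness in the ambient simplex can be checked near $\boldsymbol\pi^*$ using attainable configurations, given sufficient index variation. Even without any rescaling, the regions with constant sign vectors are pairwise disjoint. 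So the $J!$-region claim reduces to showing that no inconsistent (cyclic) sign pattern is realized in $\relint(\Delta_{J-1})$.

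Finally, the trace statement follows pointwise. An attainable $\mathbf p=\mathbf p(\mathbf x)$ in the relative interior lies in the ambient cell given by the signs of all $H_{k_1,k_2}$. By Part~B these signs are the signs of the index differences, so the cell is exactly the ranking region of the strict ordering, and conversely. Pairwise disjointness of traces is inherited from disjointness of the ambient cells, which is immediate since they differ in the sign of at least one $H_{k_1,k_2}$. Nonemptiness of a trace depends on which index configurations are attainable, so it is not claimed.
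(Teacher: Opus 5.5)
Where the paper actually gives an argument, your proposal matches it. The paper's entire proof is the remark that the corollary follows directly from Part~B of Theorem~\ref{th:linear_generalJ}. That part says each pairwise hyperplane contains $\boldsymbol\pi^*$, and its open sides give the sign of $x_{k_1}\beta_{k_1}-x_{k_2}\beta_{k_2}$ at attainable interior points. The trace and disjointness claims then follow exactly as in your last paragraph.

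You are right that Part~B alone does not give the $J!$ ambient cells when $J\geq4$. Six central planes in the three-dimensional affine hull of $\Delta_3$ generically produce $32$ cells. Cells correspond one-to-one to strict orderings only if each triple $H_{ij},H_{jk},H_{ik}$ shares a codimension-two subspace. Otherwise the three functionals are independent and a cyclic sign pattern occurs. For $J\geq4$ this shared-subspace requirement is exactly your cocycle condition.

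The gap in your proposal is threefold:
\begin{enumerate}
\item You leave the cocycle condition unproven.
\item Your fallback appeals to index variation that the corollary does not assume.
\item Strictly positive $c_j$ need not exist: in a star structure the center must get weight zero.
\end{enumerate}

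The cocycle condition does follow from \eqref{eq:primitive-rank-generalJ} and connectivity. Put $z=\mathbf p-\boldsymbol\pi^*$ and let $\mu_{ij}\in[0,\infty]$ be the ratio of the first to the second row of $C^{(J)}_{i,j}$. Then $H_{ij}=\Gamma_2^{ij}(z_i-\mu_{ij}z_j)$ and $\mu_{ji}=1/\mu_{ij}$. Also $\gamma_{S\cup\{i\}}=\mu_{ij}\gamma_{S\cup\{j\}}$ for every nonempty $S$ avoiding $i,j$, and $\mu_{ij}=1$ whenever some $S\cup\{i,j\}$ is active.

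\textbf{Star case.} If $\mu_{ij}=\infty$, alternative $j$ lies only in $\{j\}$ and $\{i,j\}$. Rank one for the pairs $(j,m)$ then forces every active non-singleton menu to be some $\{i,m\}$. The hyperplanes are $z_m=0$ and $z_m/\gamma_{\{i,m\}}=z_{m'}/\gamma_{\{i,m'\}}$. This is a braid arrangement in $y_m=z_m/\gamma_{\{i,m\}}$ with $y_i=0$.

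\textbf{All ratios finite.} Otherwise every $\mu_{ij}\in(0,\infty)$. For a triple $T=\{i,j,k\}$, connectivity with $J\geq4$ supplies an active $A$ with $|A|\geq2$ that meets $T$ but is not contained in $T$.
\begin{itemize}
\item If $A\cap T=\{i\}$ (the cases $\{j\}$, $\{k\}$ are symmetric), set $S=A\setminus\{i\}$. Then $\gamma_{S\cup\{i\}},\gamma_{S\cup\{j\}},\gamma_{S\cup\{k\}}$ are all positive, and their ratios give $\mu_{ij}\mu_{jk}=\mu_{ik}$.
\item If $A\supseteq\{i,j\}$ and $k\notin A$, then $\mu_{ij}=1$. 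The menus $(A\setminus\{i\})\cup\{k\}$ and $(A\setminus\{j\})\cup\{k\}$ have probabilities $\gamma_A/\mu_{ik}>0$ and $\gamma_A/\mu_{jk}>0$. Since $|A|\geq3$, they are common menus for $(j,k)$ and $(i,k)$, so all three ratios equal $1$.
\item If $A\supseteq T$, all three ratios equal $1$ directly.
\end{itemize}
Now set $w_1=1$, $w_i=\mu_{i1}$ and $y_j=z_j/w_j$. This gives $H_{ij}=\Gamma_2^{ij}w_i(y_i-y_j)$ with a positive factor, and your change-of-variables argument completes the count.

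\textbf{Alternative: rigorous fallback.} Your fallback also works once you note that the arrangement depends only on $\{\gamma_A\}$. Evaluate it on an auxiliary admissible model with one scalar covariate per alternative and i.i.d.\ full-support errors independent of $\mathbf x$. By Theorem~\ref{th:cs_heteroskedasticity}, its probability image is an open neighborhood of $\boldsymbol\pi^*$, and every cell of the central arrangement meets this neighborhood. By Part~B, every attainable point there carries a transitive sign pattern, and every strict ordering is realized near $\boldsymbol\pi^*$. Hence there are exactly $J!$ cells.
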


Corollary~\ref{cor:linear-allpairs-generalJ} delivers a
``quantile'' multinomial choice model at the \emph{Behavioral environment}
level as the  rank-one conditions for all pairs generate a global linear
partition into $J!$ regions, one for each complete strict ordering of the
deterministic utility indices. The partition depends only on the
attention probabilities and is common across all admissible
conditionally exchangeable distributions of unobservables and all
$\boldsymbol\beta\in\mathcal B_0$. As in the $J=3$ case, the attainable
probability set inherits the corresponding traces, some of which may be
empty.

Corollary~\ref{cor:linear-allpairs-generalJ} follows directly from part \textbf{B} of Theorem~\ref{th:linear_generalJ}.

The rank condition in Theorem~\ref{th:linear_generalJ} concerns hyperplanes
that can be written using only $p_{k_1}$ and $p_{k_2}$.  Allowing the other
choice-probability coordinates gives a strictly larger class of separators.
The next result provides a direct primitive condition for such a separator.
The coefficients on $p_{k_1}-\pi_{k_1}^*$ and $p_{k_2}-\pi_{k_2}^*$ are
normalized to $1$ and $-1$.  This is without loss of generality as adding the
same constant to all coefficients does not change a linear functional on the
simplex, and multiplying all coefficients by a positive constant does not
change either the hyperplane or its two sides.

\begin{theorem}
\label{th:linear_other_coordinates}
Suppose Assumptions \ref{assn:multi_distributionMED}--\ref{assn:rationalCS}
hold and $J\geq4$.  Fix two distinct alternatives $k_1,k_2$.  Suppose numbers
$c_k$, $k\in\mathcal J\setminus\{k_1,k_2\}$, can be chosen so that:
\begin{itemize}
\item[(i)] If $\gamma_A>0$ for some
$A\subseteq\mathcal J\setminus\{k_1,k_2\}$, then $c_h=c_\ell$ for every
$h,\ell\in A$.

\item[(ii)] For every nonempty
$S\subseteq\mathcal J\setminus\{k_1,k_2\}$ such that
$\gamma_{S\cup\{k_1\}}+\gamma_{S\cup\{k_2\}}>0$,
\begin{equation}
 c_k=
 \frac{\gamma_{S\cup\{k_1\}}-\gamma_{S\cup\{k_2\}}}
 {\gamma_{S\cup\{k_1\}}+\gamma_{S\cup\{k_2\}}}
 \quad\text{for every }k\in S.
\label{eq:other-coordinate-exclusive}
\end{equation}

\item[(iii)] If
$\gamma_{S\cup\{k_1,k_2\}}>0$ for some nonempty
$S\subseteq\mathcal J\setminus\{k_1,k_2\}$, then $c_k=0$ for every $k\in S$.
\end{itemize}
If $k_1$ and $k_2$, $k_1 \neq k_2$, are connected, then 
\begin{equation}
 (p_{k_1}-\pi_{k_1}^*)-(p_{k_2}-\pi_{k_2}^*)
 +\sum_{k\in\mathcal J\setminus\{k_1,k_2\}}
 c_k(p_k-\pi_k^*)=0
\label{eq:other-coordinate-hyperplane}
\end{equation}
coincides, on probability vectors generated at index vectors in the interior
of the conditional support, with
$\mathcal I^{B,\cup_{\boldsymbol\beta}}_{k_1,k_2}$.  Replacing equality in
\eqref{eq:other-coordinate-hyperplane} with $>$ gives
$\mathcal P^{+,B,\cup_{\boldsymbol\beta}}_{k_1,k_2}$, and replacing it with
$<$ gives $\mathcal P^{+,B,\cup_{\boldsymbol\beta}}_{k_2,k_1}$.  If the
attainable probability set has affine dimension $J-1$, the indifference
hyperplane has affine dimension $J-2$.
\end{theorem}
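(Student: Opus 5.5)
The plan is to write the left-hand side of \eqref{eq:other-coordinate-hyperplane} as a sum of menu-level contributions, and then show that each contribution (or each matched pair of contributions) has the weak sign of $x_{k_1}\beta_{k_1}-x_{k_2}\beta_{k_2}$. This mirrors the proof of part \textbf{B} of Theorem~\ref{th:linear_generalJ}.

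\textbf{Decomposition.} As in that proof, let $q^A_h(\mathbf x)=P(Y=h\mid A,\mathbf x)$. By \eqref{pistar},
\[
p_h-\pi^*_h=\sum_{A\ni h}\gamma_A\bigl(q^A_h-1/|A|\bigr).
\]
Set $w_{k_1}=1$, $w_{k_2}=-1$ and $w_k=c_k$ otherwise. Then the functional in \eqref{eq:other-coordinate-hyperplane} equals $H(\mathbf x)=\sum_A\gamma_A L_A(\mathbf x)$, where
\[
L_A=\sum_{h\in A}w_h\bigl(q^A_h-1/|A|\bigr).
\]
Throughout I use $\sum_{h\in A}q^A_h=1$. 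With $O=\mathcal J\setminus\{k_1,k_2\}$, I sort the active menus into three types.

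\textbf{Menu-by-menu signs.}
\begin{enumerate}
\item[(a)] Menus $A\subseteq O$. By (i) the weights are constant on $A$, so $L_A=c\cdot(1-1)=0$.
\item[(b)] Menus $S\cup\{k_1,k_2\}$ with $S\subseteq O$, including $S=\varnothing$. By (iii), $c_k=0$ on $S$, so $L_A=q^A_{k_1}-q^A_{k_2}$. This has the sign of the index difference by the ranking property under exchangeability (the argument of Theorem~\ref{th:multimedian} restricted to menu $A$). The sign is strict at interior index vectors.
\item[(c)] Matched exclusive menus $S\cup\{k_1\}$ and $S\cup\{k_2\}$ with $S\neq\varnothing$. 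By (ii), $c_k$ equals a common value $c$ on $S$. Using $\sum_{s\in S}q_s=1-q_{k_1}$, resp.\ $1-q_{k_2}$, the two contributions collapse to
\[
\gamma_{S\cup\{k_1\}}(1-c)\bigl(q^{S\cup\{k_1\}}_{k_1}-\tfrac{1}{|S|+1}\bigr)-\gamma_{S\cup\{k_2\}}(1+c)\bigl(q^{S\cup\{k_2\}}_{k_2}-\tfrac{1}{|S|+1}\bigr).
\]
Substituting the value of $c$ from \eqref{eq:other-coordinate-exclusive}, both coefficients equal $w_S:=2\gamma_{S\cup\{k_1\}}\gamma_{S\cup\{k_2\}}/(\gamma_{S\cup\{k_1\}}+\gamma_{S\cup\{k_2\}})\geq0$. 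The pair therefore contributes $w_S\bigl(q^{S\cup\{k_1\}}_{k_1}-q^{S\cup\{k_2\}}_{k_2}\bigr)$. By exchangeability, these two probabilities are the same increasing function evaluated at the $k_1$ index and at the $k_2$ index, with the $S$ indices held fixed, exactly as in \eqref{eq:H-exclusive-menu-generalJ}. Hence the pair contribution has the weak sign of the index difference, and the sign is strict when $w_S>0$.
\end{enumerate}
It follows that $H=0$ on the indifference locus, and that $\operatorname{sgn}H$ agrees weakly with $\operatorname{sgn}(x_{k_1}\beta_{k_1}-x_{k_2}\beta_{k_2})$ everywhere.

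\textbf{Main obstacle: strictness off the locus.} To upgrade to strict signs I need at least one active informative term: either some $\gamma_{S\cup\{k_1,k_2\}}>0$, or some $S$ with both $\gamma_{S\cup\{k_1\}}>0$ and $\gamma_{S\cup\{k_2\}}>0$. I would derive this from connectivity of $k_1$ and $k_2$ together with (i)--(iii), arguing by contradiction. Suppose no such term exists. Then every active menu containing $k_1$ (besides $\{k_1\}$) has the form $S\cup\{k_1\}$, and (ii) forces $c=1$ on its $S$. Symmetrically, every active menu containing $k_2$ forces $c=-1$ on its $S$. Now take a path of active non-singleton menus from $k_1$ to $k_2$ as in Definition~\ref{def:cs_connected}. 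Its interior menus lie in $O$, so by (i) the value of $c$ propagates unchanged along the overlaps. This would require $1=-1$, a contradiction. I expect the careful bookkeeping of this propagation to be the delicate step, including checking that a path cannot consist only of menus that contain exactly one of $k_1$ or $k_2$.

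\textbf{Conclusion.} Given strictness, the zero set, the positive set and the negative set of $H$ coincide with $\mathcal I^{B,\cup_{\boldsymbol\beta}}_{k_1,k_2}$, $\mathcal P^{+,B,\cup_{\boldsymbol\beta}}_{k_1,k_2}$ and $\mathcal P^{+,B,\cup_{\boldsymbol\beta}}_{k_2,k_1}$ respectively, on interior-generated vectors. For the dimension claim, note that the coefficient vector is not constant, since it contains both $1$ and $-1$. It is therefore a nontrivial linear functional on the simplex, and its zero set is an affine $(J-2)$-dimensional hyperplane through $\boldsymbol\pi^*$. This hyperplane meets a $(J-1)$-dimensional attainable set in dimension $J-2$.
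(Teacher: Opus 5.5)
Your proposal follows the paper's proof essentially step for step. It uses the same menu-level decomposition $L(\mathbf x)=\sum_A\gamma_A\sum_{h\in A}c_h\bigl(q^A_h-1/|A|\bigr)$ and the same three menu types: zero contribution for $A\subseteq\mathcal J\setminus\{k_1,k_2\}$, the difference $q^A_{k_1}-q^A_{k_2}$ for common menus, and weight $2\gamma_{S\cup\{k_1\}}\gamma_{S\cup\{k_2\}}/(\gamma_{S\cup\{k_1\}}+\gamma_{S\cup\{k_2\}})$ for matched exclusive pairs. It then reaches the same weak-sign, strict-at-interior conclusion.

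Your contradiction argument for a strictly positive weight fills in a step the paper only asserts, and it works once you drop the claim that interior path menus lie in $\mathcal J\setminus\{k_1,k_2\}$. Every active non-singleton menu on the path carries a single $c$-value on its outside members: $1$ for menus $S\cup\{k_1\}$, $-1$ for menus $S\cup\{k_2\}$, and a common value by (i) otherwise. Consecutive menus share this value whether they meet in an outside alternative or only at $k_1$ (resp.\ $k_2$), so the value is constant along the path and $1=-1$ follows.
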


The three conditions have a simple interpretation.  A menu containing neither
$k_1$ nor $k_2$ must receive a common coefficient, so its contribution drops
out.  For matched menus $S\cup\{k_1\}$ and $S\cup\{k_2\}$, the common
coefficient assigned to the alternatives in $S$ is determined by their two
attention probabilities.  A menu containing both alternatives forces every
other alternative in that menu to receive the midpoint coefficient, which is
zero under the normalization used in the theorem. 
\vskip 0.05in 

\noindent\textbf{Proof of Theorem \ref{th:linear_other_coordinates}.} Fix $k_1 \neq k_2$ and denote $c_{k_1}=1$, $c_{k_2}=-1$. The left-hand side of
\eqref{eq:other-coordinate-hyperplane} once we have substituted a specific realization of the  probability vector and $P(Y=h|\mathbf x)-\pi_h^*=\sum_{A:h\in A}\gamma_A
 \left(q_h^A(\mathbf x)-\frac1{|A|}\right)$ is 
\begin{align}
 L(\mathbf x): =\sum_{A\in\mathcal A}\gamma_A
 \sum_{h\in A}c_h
 \left(q_h^A(\mathbf x)-\frac1{|A|}\right).
 \label{eq:L-menu-decomposition}
\end{align}
Let's evaluate the contribution to \eqref{eq:L-menu-decomposition} of each type
of menu.

First, let $A\subseteq\mathcal J\setminus\{k_1,k_2\}$.  By condition (i), all alternatives in an active such menu have a common coefficient,
say $c_A$.  Hence its contribution  $\gamma_Ac_A\sum_{h\in A}
 \left(q_h^A(\mathbf x)-\frac1{|A|}\right)=0$,
Singleton menus also contribute zero to \eqref{eq:L-menu-decomposition}. 

Second, consider a menu $A=S\cup\{k_1,k_2\}$ with nonempty
$S\subseteq\mathcal J\setminus\{k_1,k_2\}$.  Condition (iii) gives
$c_s=0$ for every $s\in S$, so its contribution is $ \gamma_{S\cup\{k_1,k_2\}}
 \left(q_{k_1}^{S\cup\{k_1,k_2\}}(\mathbf x)
       -q_{k_2}^{S\cup\{k_1,k_2\}}(\mathbf x)\right)$.
The same formula, with $S=\varnothing$, applies to the menu
$\{k_1,k_2\}$.

Third, fix a nonempty outside set $S$ and consider the matched exclusive menus
$S\cup\{k_1\}$ and $S\cup\{k_2\}$. If $\gamma_{S\cup\{k_1\}}+\gamma_{S\cup\{k_2\}}=0$, the pair contributes nothing.  Otherwise,
condition (ii) assigns every $s\in S$ the common coefficient $ c_S=\frac{\gamma_{S\cup\{k_1\}}-\gamma_{S\cup\{k_2\}}}{\gamma_{S\cup\{k_1\}}+\gamma_{S\cup\{k_2\}}}$.
Using
$\sum_{s\in S}(q_s^{S\cup\{k_1\}}-1/(|S|+1))
=-(q_{k_1}^{S\cup\{k_1\}}-1/(|S|+1))$, the contribution of the first menu is
{\small\begin{align*}
 \gamma_{S\cup\{k_1\}}\left[
 q_{k_1}^{S\cup\{k_1\}}-\frac{1}{|S|+1}
 +c_S\sum_{s\in S}
   \left(q_s^{S\cup\{k_1\}}-\frac{1}
   {|S|+1} \right)
 \right]
 &=\gamma_{S\cup\{k_1\}}(1-c_S)
   \left(q_{k_1}^{S\cup\{k_1\}}-\frac{1}{|S|+1}\right).
\end{align*}}
Similarly, the contribution of $S\cup\{k_2\}$ is
$ -\gamma_{S\cup\{k_2\}}(1+c_S)
 \left(q_{k_2}^{S\cup\{k_2\}}-\frac{1}{|S|+1}\right)$.
Taking into account that $\gamma_{S\cup\{k_1\}}(1-c_S)=\gamma_{S\cup\{k_2\}}(1+c_S)
 =\frac{2\gamma_{S\cup\{k_1\}}\gamma_{S\cup\{k_2\}}}{\gamma_{S\cup\{k_1\}}+\gamma_{S\cup\{k_2\}}}$,
the combined contribution of the two matched menus is
\begin{equation*}
 \frac{2\gamma_{S\cup\{k_1\}}\gamma_{S\cup\{k_2\}}}{\gamma_{S\cup\{k_1\}}+\gamma_{S\cup\{k_2\}}}
 \left(q_{k_1}^{S\cup\{k_1\}}(\mathbf x)
       -q_{k_2}^{S\cup\{k_2\}}(\mathbf x)\right).
\end{equation*}

Combining the  three cases yields the explicit decomposition
\begin{multline}
 L(\mathbf x)
 =\gamma_{\{k_1,k_2\}}
 \left(q_{k_1}^{\{k_1,k_2\}}(\mathbf x)-q_{k_2}^{\{k_1,k_2\}}(\mathbf x)\right)\\
 +\sum_{\varnothing\neq S\subseteq\mathcal J\setminus\{k_1,k_2\}}
 \gamma_{S\cup\{k_1,k_2\}}
 \left(q_{k_1}^{S\cup\{k_1,k_2\}}(\mathbf x)-
       q_{k_2}^{S\cup\{k_1,k_2\}}(\mathbf x)\right)\\
 +\sum_{\substack{\varnothing\neq S\subseteq
                   \mathcal J\setminus\{k_1,k_2\}:\\
                   \gamma_{S\cup\{k_1\}}+
                   \gamma_{S\cup\{k_2\}}>0}}
 \frac{2\gamma_{S\cup\{k_1\}}\gamma_{S\cup\{k_2\}}}
      {\gamma_{S\cup\{k_1\}}+\gamma_{S\cup\{k_2\}}}
 \left(q_{k_1}^{S\cup\{k_1\}}(\mathbf x)-
       q_{k_2}^{S\cup\{k_2\}}(\mathbf x)\right).
 \label{eq:full-L-decomposition-other-coordinates}
\end{multline}

Using the same approach as in the case $J=3$, we can show that  every difference in
\eqref{eq:full-L-decomposition-other-coordinates} has the weak sign of $x_{k_1}\beta_{k_1}-x_{k_2}\beta_{k_2}$. At index vectors in the
interior of the conditional support, each such difference is strict when
$x_{k_1}\beta_{k_1}\neq x_{k_2}\beta_{k_2}$. 

In addition, note that all weights in \eqref{eq:full-L-decomposition-other-coordinates} are
nonnegative.  Connectedness of $k_1$ and $k_2$ together with conditions (i)-(iii)  guarantees that at
least one of them is strictly positive. 

All the facts allow us to conclude that 
$ L(\mathbf x)=0\iff x_{k_1}\beta_{k_1}=x_{k_2}\beta_{k_2}$ and $
 \mathrm{sgn}\,L(\mathbf x)=
 \mathrm{sgn}(x_{k_1}\beta_{k_1}-x_{k_2}\beta_{k_2}).$
This proves the equality and the two strict half-space characterizations in
the theorem.  Finally, if the attainable probability set has affine
dimension $J-1$, the nonzero affine functional $L$ cuts its affine hull in a
hyperplane of dimension $J-2$.
\hfill$\blacksquare$

\medskip

The same argument gives the analogue of Corollary~\ref{cor:linearJ3allpairs}
without requiring the pairwise hyperplanes to use only the two coordinates
being compared.

\begin{corollary}
\label{cor:linear-allpairs-other-coordinates}
Suppose the whole consideration set structure is connected and the assumptions and conditions (i)-(iii) of Theorem~\ref{th:linear_other_coordinates} hold for any distinct pair $k_1$, $k_2$. 

Then for all pairwise indifference hyperplanes their strict half-spaces identify the sign of
every pairwise index difference.  Namely, for any permutation
$(k_1,\ldots,k_J)$ of $\mathcal J$, the region associated with the index vector in the interior of the conditional distribution of $\boldsymbol{\varepsilon}|\mathbf{x}$ and satisfying $x_{k_1}\beta_{k_1}>x_{k_2}\beta_{k_2}>\cdots>x_{k_J}\beta_{k_J}$ is the intersection of the $J-1$ strict half-spaces corresponding to the
adjacent pairs $(k_1,k_2),\ldots,(k_{J-1},k_J)$.  Regions associated with
distinct permutations are disjoint.  
\end{corollary}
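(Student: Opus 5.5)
The plan is to reduce the corollary to Theorem~\ref{th:linear_other_coordinates}, applied pair by pair, and then use the ordinary transitivity of the real numbers $x_k\beta_k$. First, for each ordered pair $(k_1,k_2)$, fix coefficients $c^{(k_1,k_2)}_k$ satisfying conditions (i)--(iii). Let $L_{k_1,k_2}(\mathbf p)$ denote the affine functional on the left-hand side of \eqref{eq:other-coordinate-hyperplane}. The whole structure is connected, so every pair is connected. Theorem~\ref{th:linear_other_coordinates} then gives, at every index vector in the interior of the conditional support,
\[
\mathrm{sgn}\,L_{k_1,k_2}(\mathbf p(\mathbf x))=\mathrm{sgn}\bigl(x_{k_1}\beta_{k_1}-x_{k_2}\beta_{k_2}\bigr).
\]
This identity holds for every admissible exchangeable distribution and every $\boldsymbol\beta$, because the decomposition \eqref{eq:full-L-decomposition-other-coordinates} uses only exchangeability and the attention probabilities. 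This gives the first claim: the strict half-spaces identify the sign of every pairwise index difference.

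Second, fix a permutation $(k_1,\ldots,k_J)$ and take an interior index vector. I will prove the two inclusions separately.
\begin{itemize}
\item[($\subseteq$)] If $x_{k_1}\beta_{k_1}>\cdots>x_{k_J}\beta_{k_J}$, then each adjacent difference is positive. The sign identity puts $\mathbf p(\mathbf x)$ in all $J-1$ strict half-spaces $\{L_{k_j,k_{j+1}}>0\}$.
\item[($\supseteq$)] If $\mathbf p(\mathbf x)$ lies in all $J-1$ adjacent strict half-spaces, the sign identity (applied at the same $\mathbf x$) gives $x_{k_j}\beta_{k_j}>x_{k_{j+1}}\beta_{k_{j+1}}$ for each $j$. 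Chaining these real inequalities yields the complete strict ordering.
\end{itemize}
One subtlety must be handled in ($\supseteq$): the same probability vector $\mathbf p$ might arise from several $\mathbf x$. The sign identity holds for every generating $\mathbf x$ with $\mathbf p(\mathbf x)=\mathbf p$, so every such $\mathbf x$ obeys the same strict ordering. The region is therefore well defined as a subset of probability space. No transitivity of the separators themselves is needed; transitivity comes from the indices.

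Third, for disjointness, take two distinct permutations. Some pair $(h,\ell)$ appears in opposite orders in the two permutations. A point lying in both regions would then be generated by interior index vectors with $x_h\beta_h>x_\ell\beta_\ell$ and with $x_h\beta_h<x_\ell\beta_\ell$. The sign identity forces $L_{h,\ell}(\mathbf p)>0$ and $L_{h,\ell}(\mathbf p)<0$ simultaneously, which is a contradiction. The pair $(h,\ell)$ need not be adjacent in either permutation, so this step uses the full pairwise sign characterization from the first step rather than only the adjacent half-spaces.

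The main obstacle is bookkeeping rather than analysis. The coefficients $c^{(k_1,k_2)}$ and $c^{(k_2,k_1)}$ for the reversed pair must define the same hyperplane with the sides swapped. I expect that choosing $c^{(k_2,k_1)}=-c^{(k_1,k_2)}$ is consistent with (i)--(iii): condition (ii) flips sign under swapping $k_1,k_2$, and (i) and (iii) are sign-symmetric. This should be checked once. Apart from that, the proof is a direct consequence of Theorem~\ref{th:linear_other_coordinates}.
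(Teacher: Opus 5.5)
Your proposal is correct and follows the paper's route. The paper only says that ``the same argument'' as for the earlier corollaries applies: Theorem~\ref{th:linear_other_coordinates} is applied to each pair, and the resulting functional, which depends only on the attention probabilities, has the strict sign of the index difference at interior index vectors. The ordering and disjointness claims then follow from chaining the adjacent inequalities among the indices and from a reversed pair, exactly as you do.

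The bookkeeping point you flag checks out: taking $c^{(k_2,k_1)}=-c^{(k_1,k_2)}$ satisfies (i)--(iii), since (ii) flips sign and (i) and (iii) are unaffected. It is not actually needed, because the sign identity holds separately for each ordered pair.
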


Unlike Corollary~\ref{cor:linear-allpairs-generalJ}, this result allows the
hyperplane for a given pair to use different coefficients on different
outside coordinates.  Thus, full ordinal recovery may hold even when one or
more of the two-coordinate matrices in \eqref{eq:CgeneralJ} have rank two.

\vskip 0.05in 

Theorem \ref{th:linear_generalJ} is the special case of Theorem \ref{th:linear_other_coordinates} in which the coefficients
$c_k$ can be chosen equal for all $k\notin\{k_1,k_2\}$.  Indeed, if their
common value is $c$, then \eqref{eq:other-coordinate-hyperplane} is equivalent,
using $\sum_k(p_k-\pi_k^*)=0$, to $(1-c)(p_{k_1}-\pi_{k_1}^*)
 -(1+c)(p_{k_2}-\pi_{k_2}^*)=0$. Condition \eqref{eq:other-coordinate-exclusive} then requires the same attention ratio across all outside sets, while any active menu containing both
$k_1$ and $k_2$ forces $c=0$.  These are exactly the rank-one restrictions in
Theorem~\ref{th:linear_generalJ}.

Example \ref{ex:other-coordinate-separator-J4} below shows that for $J \geq 4$,  a partitioning hyperplane involving other probability coordinates may exist even when conditions of Theorem \ref{th:linear_generalJ} fail. 

\begin{example}
\label{ex:other-coordinate-separator-J4}
Let $\mathcal J=\{1,2,3,4\}$ and suppose the only active non-singleton menus
are $\{1,3\}$, $\{2,3\}$, $\{1,4\}$, and $\{2,4\}$, with all four attention
probabilities positive.  Set
\[
 c_3=\frac{\gamma_{\{1,3\}}-\gamma_{\{2,3\}}}
 {\gamma_{\{1,3\}}+\gamma_{\{2,3\}}},
 \qquad
 c_4=\frac{\gamma_{\{1,4\}}-\gamma_{\{2,4\}}}
 {\gamma_{\{1,4\}}+\gamma_{\{2,4\}}}.
\]
Then $ (p_1-\pi_1^*)-(p_2-\pi_2^*)
 +c_3(p_3-\pi_3^*)+c_4(p_4-\pi_4^*)$ has the strict sign of $x_1\beta_1-x_2\beta_2$.  A separator using only
$p_1$ and $p_2$ requires $c_3=c_4$, which is precisely the proportionality
restriction in Theorem~\ref{th:linear_generalJ}.  Hence, whenever
$c_3\neq c_4$, the two-coordinate matrix has rank two but the four-coordinate
hyperplane still separates the pairwise ordering.

\end{example}

\section{Non-exchangeable bivariate copulas and the
  quantile threshold}

First, we show that any quantile index \(\tau\) can be generated by a
suitable absolutely continuous joint distribution of the unobservables
\((\varepsilon_1,\varepsilon_0)\).  This corresponds to the special case
of Theorem \ref{prop:copula} in which the conditional marginal
distributions of the two unobservables are the same, so that departures
from the benchmark ordering probability are driven entirely by the
copula.

\noindent\textbf{Proof.} We are going to use absolutely continuous checkerboard copulas. 
Let \(p=1-\tau\). First suppose \(p=m/n\) with
\(m\in\{1,\ldots,n-1\}\). Partition \([0,1]\) into intervals
\(I_i=(i/n,(i+1)/n]\), \(i=0,\ldots,n-1\), and set \(k=n-m\). Define
\(
c(u,v)
=
n\sum_{i=0}^{n-1}
\mathbf 1\{u\in I_i,\ v\in I_{i+k\,(\mathrm{mod}\,n)}\}.
\)
This is a nonnegative density with uniform margins, hence defines an
absolutely continuous copula. Since exactly \(m\) of the \(n\) occupied
squares lie above the diagonal,
\(
P_C(V\geq U)=m/n=p.
\) For general \(p\in(0,1)\), choose rationals \(q_0<p<q_1\). By the
previous step, there are absolutely continuous copulas \(C_0,C_1\) with
\(P_{C_j}(V\geq U)=q_j\), \(j=0,1\). With
$\lambda=\frac{p-q_0}{q_1-q_0}$, $C=(1-\lambda)C_0+\lambda C_1$, the copula \(C\) is absolutely continuous and satisfies
\(
P_C(V\geq U)=(1-\lambda)q_0+\lambda q_1=p=1-\tau\). $\blacksquare$

We next give two parametric families of absolutely continuous bivariate
copulas that can generate non-exchangeability.

\vskip 0.05in

\noindent \textit{Liebscher--Khoudraji copulas.}
Let
\(
C(u,v)
=
C_1(u^{1-\delta_1},v^{\delta_2})
C_2(u^{\delta_1},v^{1-\delta_2}),
\qquad
\delta_1,\delta_2\in(0,1),
\)
where \(C_1\) and \(C_2\) are absolutely continuous copulas. Then \(C\)
is an absolutely continuous copula. The construction is generally
non-exchangeable when the two arguments are treated asymmetrically, for
example when \(\delta_1\neq\delta_2\) or when the two component copulas
differ in a way that is not symmetric. The diagonal-ordering probability \(P_C(V\geq U)\) varies continuously
with \((\delta_1,\delta_2)\),  see \citet{Liebscher2008}.

\vskip 0.05in 

\noindent \textit{Iterated FGM copulas.}
A simple absolutely continuous non-exchangeable family is obtained from
the asymmetric FGM-type specification
\[
C_{\alpha,a,b}(u,v)
=
uv\{1+\alpha(1-u^a)(1-v^b)\},
\qquad a,b>0.
\]
Its density is
\(
c_{\alpha,a,b}(u,v)
=
1+\alpha\{1-(a+1)u^a\}\{1-(b+1)v^b\}.
\)
Hence \(C_{\alpha,a,b}\) is a copula, and is absolutely continuous,
whenever
\(
-\frac{1}{\max\{ab,1\}}
\leq
\alpha
\leq
\frac{1}{\max\{a,b\}}.
\)
The family is exchangeable when \(a=b\), but generally
non-exchangeable when \(a\neq b\). Moreover, 
\[
P_{C_{\alpha,a,b}}(V\geq U)
=
\frac12
+
\alpha\,
\frac{ab(b-a)}
{2(a+2)(b+2)(a+b+2)}.
\]
Thus \(a-b\) determines the direction of asymmetry, while \(\alpha\)
controls its magnitude. This gives a transparent calibration device for
\(P(V\geq U)\), subject to the admissible range of \(\alpha\).

The relevant literature on non-exchangeable copulas includes
\citet{Liebscher2008}, who develops product-based constructions of
asymmetric copulas, and \citet{GenestNeslehovaQuessy2012}, who study
symmetry and exchangeability properties of bivariate copulas.  See also
\citet{Nelsen2006} for background on copulas and exchangeability.

\section{What survives when the rank condition fails (\emph{Behavioral environment} level)} 

For simplicity, consider $J=3$. Let $(k_1,k_2,k_3)$ be a permutation of
$(1,2,3)$. When $\operatorname{rank}(C_{k_1,k_2})=2$, we necessarily have
$\gamma_{\{1,2,3\}}>0$ and $\gamma_{\{k_1,k_3\}}-\gamma_{\{k_2,k_3\}}\neq0$. Theorem \ref{th:linearJ3}
then rules out a common linear partition at the \textit{Behavioral environment} level. Here we show what can still be learned in this case.

Suppose Assumptions~\ref{assn:multi_distributionMED}--\ref{assn:rationalCS}
hold and $J=3$. Then, for every $\mathbf x$, 
\begin{align*}
x_{k_1}\beta_{k_1}\geq x_{k_2}\beta_{k_2}
&\;\Rightarrow\;
P(Y=k_1|\mathbf x)-\pi^*_{k_1} -(P(Y=k_2|\mathbf x)-\pi^*_{k_2})
\geq -\frac{|\gamma_{\{k_1,k_3\}}-\gamma_{\{k_2,k_3\}}|}{2},\\
x_{k_1}\beta_{k_1}\leq x_{k_2}\beta_{k_2}
&\;\Rightarrow\;
P(Y=k_1|\mathbf x)-\pi^*_{k_1} -(P(Y=k_2|\mathbf x)-\pi^*_{k_2})
\leq \frac{|\gamma_{\{k_1,k_3\}}-\gamma_{\{k_2,k_3\}}|}{2}.
\end{align*}
In particular, if $x_{k_1}\beta_{k_1}= x_{k_2}\beta_{k_2}$, then
\[
|P(Y=k_1|\mathbf x)-\pi^*_{k_1} -(P(Y=k_2|\mathbf x)-\pi^*_{k_2})|
\leq \frac{|\gamma_{\{k_1,k_3\}}-\gamma_{\{k_2,k_3\}}|}{2}
\]
Consequently,
{\small\begin{align*}
P(Y=k_1|\mathbf x)-\pi^*_{k_1} -(P(Y=k_2|\mathbf x)-\pi^*_{k_2})
&> \frac{|\gamma_{\{k_1,k_3\}}-\gamma_{\{k_2,k_3\}}|}{2}
&&\Rightarrow&& x_{k_1}\beta_{k_1}>x_{k_2}\beta_{k_2},\\
P(Y=k_1|\mathbf x)-\pi^*_{k_1} -(P(Y=k_2|\mathbf x)-\pi^*_{k_2})
&<- \frac{|\gamma_{\{k_1,k_3\}}-\gamma_{\{k_2,k_3\}}|}{2}
&&\Rightarrow&& x_{k_1}\beta_{k_1}<x_{k_2}\beta_{k_2}.
\end{align*}}Thus, outside of the strip 
\[
\mathcal S_{k_1,k_2}
:=
\left\{
\mathbf p\in\operatorname{relint}(\Delta^{B,\cup_{\boldsymbol\beta}}_2):
|p_{k_1}-\pi^*_{k_1} -(p_{k_2}-\pi^*_{k_2})|
\leq \frac{|\gamma_{\{k_1,k_3\}}-\gamma_{\{k_2,k_3\}}|}{2}
\right\}
\]
the sign of $x_{k_1}\beta_{k_1}-x_{k_2}\beta_{k_2}$ is recovered
uniformly over the \textit{Behavioral environment}. Inside the strip, we cannot 
make a  ranking claim at the \textit{Behavioral environment} level (even though a particular \textit{Structural environment} may still be
informative there). As attention neutrality is approached, $\gamma_{\{k_1,k_3\}}-\gamma_{\{k_2,k_3\}}\to0$, the strip 
collapses to the hyperplane $p_{k_1}-\pi^*_{k_1} -(p_{k_2}-\pi^*_{k_2})=0$ that appears in the rank-one case of Theorem~\ref{th:linearJ3} when
$\gamma_{\{1,2,3\}}>0$.

Looking at this from the econometrics perspective, one can conclude that $\mathfrak{P}(\mathbf{x}) \notin \mathcal{S}_{k_1,k_2}$ still contains valuable information about the underlying indices and parameter $\boldsymbol{\beta}$. However, as is  typical in these models, the most valuable cases are when underlying indices are extremely close to indifference which will be ruled out by the strip. 

\section{Distributional route: Additional details for Section \ref{sec:distributionalmulti}}

This appendix records additional details behind the discussion in Section
\ref{sec:distributionalmulti}.  

First, let's  make it explicit why a fixed copula and
common conditional marginals stabilize the image of total indifference but do
not, by themselves, produce a common partial-indifference boundary.  

Consider the image $(\pi_{1}^*(\mathbf{x}), \ldots, \pi_{J}^*(\mathbf{x}))^{\top}$ in  $\Delta_{J-1}$ of an $\mathbf{x}$ that corresponds to the case of complete indifference \eqref{totalindiff}: 
 \begin{align*} \pi_{k}^*(\mathbf{x}) & = \int_{-\infty}^{+\infty} \frac{\partial C_{\mathbf{x}}}{\partial u_{k}} (F_{1,\mathbf{x}}(\varepsilon_{k}), \ldots,F_{J,\mathbf{x}}(\varepsilon_{k})) dF_{k,\mathbf{x}}(\varepsilon_{k}) , \quad k=1,\ldots, J.   
\end{align*} 
We already see a difference with the behavioral route of consideration sets. Here, even at the level of an \textit{empirical specification}
(that is, for fixed $(M,\boldsymbol{\beta},\mathbb P_{\mathbf x})$),
collecting such images across
$\mathbf{x}\in\operatorname{supp}(\mathbb P_{\mathbf x})$
may trace out a set whose affine and topological dimension  from zero (when this point does not depend on $\mathbf{x}$) to $J-1$ (when the convex hull of such points has a relative interior in $\Delta_{J-1}$).
Passing from the empirical specification to the corresponding
fixed-$\boldsymbol{\beta}$ structural model enlarges the covariate
domain from $\operatorname{supp}(\mathbb P_{\mathbf x})$ to
$\mathbb R^{M_e}$. Hence the resulting collection of image points
can only expand, so its affine and topological dimensions can
increase (up to $J-1$) or remain unchanged. An analogous weak expansion conclusion applies when passing from a
fixed-$\boldsymbol{\beta}$ structural model $(M,\boldsymbol{\beta})$
to the corresponding fixed-$\boldsymbol{\beta}$ behavioral
specification $(B,\boldsymbol{\beta})$, since the latter ranges over
the admissible conditional distributions in $\mathcal P$. At the heart of the broadness of such conclusions is, once again, 
the fact that the properties of the behavior of the joint distribution of unobservables conditional on $\mathbf x$ with respect to $\mathbf x$ are unrestricted (e.g., $\pi^*(\mathbf{x})$ does not necessarily even change continuously). To have a hope of  getting an analogue of a ``quantile model'' at the \textit{Structural environment} level, it is at least necessary to ensure  that  $\pi^*(\mathbf{x})$ does not vary with $\mathbf{x}$. 

Suppose that (i) the copula $C_{\mathbf{x}}$ does not depend on ${\mathbf{x}}$ and (ii) all unobservable $\varepsilon_j |\mathbf{x}$ have exactly the same marginal distributions: $F_{k_j,\mathbf x}(e) = F_{k_p,\mathbf x}(e)=F_{\mathbf x}(e)$ for all $e \in \mathbb{R}$ for any $k_j$, $k_p$. Then $\pi^*(\mathbf{x})$ does not depend on $\mathbf{x}$ as 
\begin{align*}\pi_{k}^{*}(\mathbf{x})  & = \int_{-\infty}^{+\infty} \frac{\partial C}{\partial u_{k}} (F_{\mathbf{x}}(\varepsilon_{k}), \ldots,F_{\mathbf{x}}(\varepsilon_{k})) dF_{\mathbf{x}}(\varepsilon_{k})   =\int_{0}^{1} \frac{\partial C}{\partial u_{k}} (v, \ldots,v) dv :=\pi^{*,M}_k, \end{align*}
$k=1,\ldots, J,$ does not depend on $\mathbf{x}$. Under conditions (i) and (ii), the notation of this image point by
$\pi_k^{*,M}$ emphasizes that it is specific to the
\textit{Structural environment}. Since $C$ need not be exchangeable,
we should not expect $\pi_j^{*,M}=1/J$ for all $j$. Moreover, the point
$\boldsymbol{\pi}^{*,M}$ need not be common across the \emph{Structural
environments} admitted by a given \emph{Behavioral environment}. Variation in the total indifference image alone, however, does not establish ranking ambiguity.

Constancy of $\pi^{*,M}_k$ is not sufficient for a common
partial indifference boundary.  To see this more explicitly, continue with restrictions (i) and (ii) above (constancy of the copula and the same conditional marginal c.d.f.) and take $J=3$ and
$x_1\beta_1=x_2\beta_2$, and write $a=x_1\beta_1-x_3\beta_3$.  Define
\begin{align*}
 \phi_{13,\mathbf x}(a)
 &:={}
 \int
 \frac{\partial C}{\partial u_1}
 \left(F_{\mathbf x}(e),F_{\mathbf x}(e),
 F_{\mathbf x}(e+a)\right)
 \,dF_{\mathbf x}(e), \\
 \phi_{23,\mathbf x}(a)
 &:={}
 \int
 \frac{\partial C}{\partial u_2}
 \left(F_{\mathbf x}(e),F_{\mathbf x}(e),
 F_{\mathbf x}(e+a)\right)
 \,dF_{\mathbf x}(e).
\end{align*}
On the interior of the relevant difference support, both functions are strictly
increasing.  The partial indifference image can therefore be written as
\begin{equation}
\label{eq:appendixE-union-curves}
 \mathcal I^{M,\boldsymbol{\beta}}_{1,2}\cap\relint(\Delta^{M,\boldsymbol{\beta}}_2)
 =
 \left\{
 \mathbf p\in\relint(\Delta^{M,\boldsymbol{\beta}}_2):
 \phi_{13,\mathbf x}^{-1}(p_1)
 =
 \phi_{23,\mathbf x}^{-1}(p_2)
 \text{ for some }\mathbf x
 \right\}.
\end{equation}
Although the total indifference point is common, the functions  $\phi_{13,\mathbf x}(\cdot)$, $\phi_{23,\mathbf x}(\cdot)$ may change with
$\mathbf x$ through the shape of $F_{\mathbf x}$.  Equation in 
\eqref{eq:appendixE-union-curves} does not necessarily describe  one curve.  This identifies the additional restriction needed
beyond a fixed copula and common conditional marginals: the map from normalized
index differences to probabilities must itself be common across covariate
values.

\vskip 0.05in

Suppose, in addition, that $\boldsymbol\varepsilon$ is independent of
$\mathbf x$.  The functions above no longer depend on $\mathbf x$ and  the indifference locus is the curve
\begin{multline*}
 \mathcal I_{1,2}^{M,\boldsymbol{\beta}}\cap\relint(\Delta^{M,\boldsymbol{\beta}}_2)
 =
 \left\{\mathbf p\in\relint(\Delta^{M,\boldsymbol{\beta}}_2):
 p_1=(\phi_{13}\circ \phi_{23}^{-1})(p_2)\right\} \\ = \left\{(\phi_{13}(a),\phi_{23}(a),
 1-\phi_{13}(a)-\phi_{23}(a))^{\top}:a\right\},
\end{multline*}
where $a$ ranges over the interior of the relevant difference
support. The separation property can be seen directly.  Suppose $x_1\beta_1>x_2\beta_2$, and write $a=x_1\beta_1-x_3\beta_3$. Relative to the partial indifference configuration
$(x_1\beta_1,x_1\beta_1,x_3\beta_3)$, alternative 1 becomes strictly more attractive relative to
alternative 2, while alternative 2 becomes strictly less attractive relative
to both alternatives 1 and 3.  Monotonicity of the copula partial derivatives
in the remaining arguments gives $ P(Y=1|\mathbf x)>\phi_{13}(a)$, $ P(Y=2|\mathbf x)<\phi_{23}(a)$. Since $\phi_{13}(\cdot)$ and $\phi_{23}(\cdot)$ are strictly increasing, this implies that $P(Y=1|\mathbf x)
 >
 ( \phi_{13} \circ\phi_{23}^{-1})
 \left(P(Y=2|\mathbf x)\right).$
The inequalities reverse when $x_1\beta_1<x_2\beta_2$.  Thus at the  \textit{Structural environment} level the 
partial indifference representation is  a (nonlinear) curve separating the two strict ranking regions.  Partial exchangeability can make one of these boundaries linear even when the
full unobservables distribution is not exchangeable.  If $C(u_1,u_2,u_3)$ is invariant
to interchanging $u_1$ and $u_2$, then $\phi_{13}  (a)=\phi_{23} (a)$ along $x_1\beta_1=x_2\beta_2$ and $ \mathcal I^{M,\boldsymbol{\beta}}_{1,2}\cap\relint(\Delta^{M, \boldsymbol{\beta}}_2)
=
 \{\mathbf p:p_1=p_2\} \cap\relint(\Delta^{M, \boldsymbol{\beta}}_2)$, 
$ \mathcal P^{+,M, \boldsymbol{\beta}}_{1,2}\cap\relint(\Delta^{M,\boldsymbol{\beta}}_2)
=
 \{\mathbf p:p_1>p_2\} \cap \relint(\Delta^{M, \boldsymbol{\beta}}_2).
$
The other two pairwise boundaries may remain nonlinear if alternative 3 is not
exchangeable with alternatives 1 and 2.  Hence linearity of one pairwise
comparison does not imply an exchangeable multinomial model or a fully linear
partition.

\vskip 0.05in

\noindent \textit{Dependence on $\mathbf x$.}
The curves in Figure~\ref{fig:nonexch_gaussian} also give a direct construction
of the overlap described in the main text.  Let each $x_j$ be one-dimensional
and set $\beta_j=1$.  Use Case 1 distribution  when
$x_1>x_2$, Case 3  when $x_1<x_2$, and Case 2 when $x_1=x_2$.  Each fixed distribution generates a smooth separating
curve.  The strip between Case 1 and Case 3 curves is on the $x_1\beta_1>x_2\beta_2$ side
of Case 1 curve and on the $x_1\beta_1<x_2\beta_2$ side of Case 3 curve.  Probability
vectors in that strip can therefore be generated under opposite index
rankings. This construction gives an explicit instance of ranking ambiguity. The following proposition records the more general point that once the
admissible distributional class is sufficiently rich, failure of symmetry
at pairwise index indifference can itself generate such ambiguity, even
for a fixed $\boldsymbol{\beta}$. 

\begin{prop}[Failure of distribution-robust ranking recovery]
\label{prop:distributional_no_uniform_recovery}
Fix $\boldsymbol{\beta}\in\mathcal B_0$ and suppose the joint index
differences range over $\mathbb R^{J-1}$. Suppose the
\emph{Behavioral environment} admits an $\mathbf{x}$-independent,
absolutely continuous distribution
$P_{\boldsymbol{\varepsilon}}$ with common marginals and full support
on $\mathbb R^J$, as well as the distribution obtained from it by
relabeling alternatives $k_1$ and $k_2$. If, at some index
configuration satisfying \eqref{partialindiff}, the first distribution
generates $P(Y=k_1|\mathbf{x})\neq P(Y=k_2|\mathbf{x})$, then $\mathcal{P}^{+,B,\boldsymbol{\beta}}_{k_1,k_2}
\cap
\mathcal{P}^{+,B,\boldsymbol{\beta}}_{k_2,k_1}
\cap
\relint\!\left(\Delta^{B,\boldsymbol{\beta}}_{J-1}\right)
\neq\varnothing.$ Hence, fixed-$\boldsymbol{\beta}$ ranking recovery fails at the
\emph{Behavioral environment} level.
\end{prop}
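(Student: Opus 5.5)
Fix the index configuration $\mathbf{x}^0$ at which $x^0_{k_1}\beta_{k_1}=x^0_{k_2}\beta_{k_2}$ and the first distribution $P_{\boldsymbol\varepsilon}$ (denoted $M_1$) gives $P_1(Y=k_1|\mathbf{x}^0)\neq P_1(Y=k_2|\mathbf{x}^0)$; without loss of generality $P_1(Y=k_1|\mathbf{x}^0)>P_1(Y=k_2|\mathbf{x}^0)$ (otherwise swap the roles of the two distributions). Let $M_2$ be the structural model with the relabeled distribution $P^\sigma_{\boldsymbol\varepsilon}$, where $\sigma$ transposes $k_1,k_2$. The plan is to produce two covariate values, one with $x_{k_1}\beta_{k_1}>x_{k_2}\beta_{k_2}$ under $M_2$ and one with $x_{k_1}\beta_{k_1}<x_{k_2}\beta_{k_2}$ under $M_1$, generating the same interior probability vector.

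Step 1 (symmetry). Since under utility maximization with $\mathbf{x}$-independent errors the probability vector depends only on the index vector $v=(x_j\beta_j)_j$, write $\mathbf p_i(v)$ for $M_i$. Relabeling gives $\mathbf p_2(v)=\sigma\,\mathbf p_1(\sigma v)$. At the tied configuration $v^0$ we have $\sigma v^0=v^0$, so $\mathbf p_2(v^0)=\sigma\mathbf p_1(v^0)$; hence $M_2$ puts $p_{k_2}>p_{k_1}$ at $v^0$ while $M_1$ puts $p_{k_1}>p_{k_2}$. Step 2 (perturb and invert). Full support on $\mathbb R^J$ and absolute continuity make $\mathbf p_1$ a continuous injective map from normalized index differences ($\mathbb R^{J-1}$) into $\relint(\Delta_{J-1})$, by the same cut argument as in Theorem~\ref{th:nonexch_specialcase}A (with $s\equiv1$, $\mu\equiv0$); by invariance of domain its image is open and it is a homeomorphism onto it, and likewise for $\mathbf p_2$. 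Since the index differences range over $\mathbb R^{J-1}$, every such $v$ is attainable for the fixed $\boldsymbol\beta$. Step 3 (overlap). Consider the curve $t\mapsto v^t=v^0+t e_{k_1}$. For $M_1$, small $t<0$ gives $x_{k_1}\beta_{k_1}<x_{k_2}\beta_{k_2}$ yet, by continuity, still $p_{k_1}>p_{k_2}$. I then want a point $w$ with $w_{k_1}>w_{k_2}$ and $\mathbf p_2(w)=\mathbf p_1(v^t)$. The natural route: the open sets $U_1=\mathbf p_1(\{v_{k_1}<v_{k_2}\})$ and $U_2=\mathbf p_2(\{v_{k_1}>v_{k_2}\})$ — show their intersection is nonempty. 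Both images are open subsets of $\relint(\Delta_{J-1})$; $\mathbf p_1(v^0)$ lies in the closure of $U_1$, and I will show it also lies in $U_2$ or its closure in a way forcing overlap.

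Main obstacle is precisely this last point: that a boundary point of one image lies in the interior of the other. I would argue via the functional $\ell(\mathbf p)=p_{k_1}-p_{k_2}$ along the indifference manifolds: on $\{v_{k_1}=v_{k_2}\}$, $\ell\circ\mathbf p_2=-\ell\circ\mathbf p_1$ by Step 1, so the indifference images $\mathbf p_1(\{=\})$ and $\mathbf p_2(\{=\})$ are distinct separating hypersurfaces through the common total-indifference region, lying on opposite sides of $\{p_{k_1}=p_{k_2}\}$ near $\mathbf p_1(v^0)$. Taking a point $q$ slightly across $\mathbf p_1(\{=\})$ toward the $\{<\}$ side of $M_1$ but still on the $\{>\}$ side of $\mathbf p_2(\{=\})$ (possible since the two hypersurfaces are locally separated by a positive gap in $\ell$, mimicking the strip construction preceding the proposition), $q$ is generated by $M_1$ with $k_2$ ranked above $k_1$ and by $M_2$ with $k_1$ above $k_2$; local surjectivity of both maps near $v^0$ (open images) ensures $q$ is attained. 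Thus $q\in\mathcal P^{+,B,\boldsymbol\beta}_{k_1,k_2}\cap\mathcal P^{+,B,\boldsymbol\beta}_{k_2,k_1}\cap\relint(\Delta^{B,\boldsymbol\beta}_{J-1})$, and recovery fails by Implication~1.
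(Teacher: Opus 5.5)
Your Steps 1 and 2 are sound; in particular $\mathbf p_2(v)=\sigma\,\mathbf p_1(\sigma v)$ is correct. Step 3, however, has a genuine gap.

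The identity $\ell\circ\mathbf p_2=-\ell\circ\mathbf p_1$ on $\{v_{k_1}=v_{k_2}\}$ compares $\mathbf p_1(v)$ with $\mathbf p_2(v)=\sigma\mathbf p_1(v)$. These are two points at distance $\sqrt2\,|\ell(\mathbf p_1(v))|$ from each other. The identity therefore says nothing about where $\mathbf p_2(\{=\})=\sigma\,\mathbf p_1(\{=\})$ lies near $\mathbf p_1(v^0)$.

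Because the indifference images are curved, the sign of $p_{k_1}-p_{k_2}$ at a point does not tell you which side of either hypersurface the point lies on. So there is no ``positive gap in $\ell$'' to exploit. Two further claims also fail:
\begin{itemize}
\item There is no common total-indifference point. With $\boldsymbol\pi^*$ the total-indifference point of $M_1$, you get $\mathbf p_2(c\mathbf 1)=\sigma\boldsymbol\pi^*\neq\boldsymbol\pi^*$ unless $\pi^*_{k_1}=\pi^*_{k_2}$.
\item Local surjectivity of $\mathbf p_2$ near $v^0$ covers only a neighbourhood of $\sigma\mathbf p_1(v^0)$, not of $\mathbf p_1(v^0)$. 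Nothing you wrote shows that $q$ is generated by $M_2$ at all.
\end{itemize}

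What your route actually requires is $\mathbf p_1(v^0)\in\mathbf p_2(\{v_{k_1}>v_{k_2}\})$, equivalently $\sigma\mathbf p_1(v^0)\in\mathbf p_1(\{v_{k_1}<v_{k_2}\})$. This is in fact true, but proving it needs two things the proposal does not contain:
\begin{itemize}
\item global surjectivity of $\mathbf p_1$ onto $\relint(\Delta_{J-1})$, via properness under full support, invariance of domain, and connectedness;
\item a cut argument showing that the preimage $w$ of $\sigma\mathbf p_1(v^0)$ satisfies $w_{k_1}<w_{k_2}$.
\end{itemize}
Once established, it also shows that $\mathbf p_2(\{=\})$ avoids a whole neighbourhood of $\mathbf p_1(v^0)$. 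So the picture of two nearby hypersurfaces separated by a gap is not what actually happens.

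The missing idea is simpler, and it is exactly the paper's: do not stop at small $t$. Along your curve $v^t=v^0+te_{k_1}$, the function $t\mapsto p_{k_1}(v^t)-p_{k_2}(v^t)$ under $M_1$ is continuous and positive at $t=0$. It becomes negative as $t\to-\infty$, because $p_{k_1}(v^t)\to0$ while $p_{k_2}(v^t)\geq p_{k_2}(v^0)>0$ for $t<0$. Hence there is $t^*<0$ with $p_{k_1}=p_{k_2}$ at $v^{t^*}$.

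The point $q=\mathbf p_1(v^{t^*})$ is fixed by $\sigma$, so $q=\sigma\mathbf p_1(v^{t^*})=\mathbf p_2(\sigma v^{t^*})$. Here $v^{t^*}$ ranks $k_2$ strictly above $k_1$, while $\sigma v^{t^*}$ ranks $k_1$ strictly above $k_2$. Full index-difference variation makes both configurations attainable for the fixed $\boldsymbol\beta$. Your Step 2 (open image of $\mathbf p_1$) then puts $q$ in $\relint(\Delta^{B,\boldsymbol\beta}_{J-1})$.

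The paper runs the same argument with the opposite sign convention: it raises $x_{k_1}\beta_{k_1}$ from a tie at which $p_{k_1}<p_{k_2}$ until the two probabilities coincide, and then relabels $k_1$ and $k_2$.
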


\textbf{Proof of Proposition
\ref{prop:distributional_no_uniform_recovery}.}
Without loss of generality, suppose that at the stated pairwise-indifferent
configuration
$P(Y=k_1|\mathbf{x})<P(Y=k_2|\mathbf{x})$.
Holding the other indices fixed, increase $x_{k_1}\beta_{k_1}$.
The difference
$P(Y=k_1|\mathbf{x})-P(Y=k_2|\mathbf{x})$ is continuous,
starts negative, and converges to $1$. Hence, at some configuration with
$x_{k_1}\beta_{k_1}>x_{k_2}\beta_{k_2}$, we have $P(Y=k_1|\mathbf{x})=P(Y=k_2|\mathbf{x})$. Relabeling $k_1$ and $k_2$ in both the unobservables distribution and
the index configuration leaves this probability vector unchanged but
reverses the deterministic ranking. Full joint index difference
variation makes both configurations attainable for the same
$\boldsymbol{\beta}$.

It remains only to note that the common probability vector is in the relative interior. Under the fixed full support distribution, the map from normalized index differences to choice probabilities is continuous and injective, that is, if two index vectors differ, the choice probability of the set of alternatives whose relative indices increase changes strictly. By invariance of domain, its image is open in $\relint(\Delta_{J-1})$. Hence the constructed probability vector lies in $\relint(\Delta^{B,\boldsymbol{\beta}}_{J-1})$. \hfill $\blacksquare$

\end{document}